\documentclass[11pt]{article}

%for biblio
\bibliographystyle{plainurl}

%for margins
\usepackage[letterpaper, top=1in, bottom=1in, left=1in, right=1.0in, includefoot, marginparwidth=60pt]{geometry}

%for images
\usepackage{graphicx,cite,xspace}

%for \binom (at least)
\usepackage{amsmath}

%for proofs
\usepackage{amsthm}

% for subsetneq
\usepackage{amssymb}

%for theorems
\usepackage{amsfonts}
\newtheorem{proposition}{Proposition}
\newtheorem{theorem}{Theorem}

\newtheorem{lemma}{Lemma}
\newtheorem{observation}{Observation}
\newtheorem{claim}{Claim}
\newtheorem{conjecture}{Conjecture}

% Numbered functions
\newcounter{func}

% Declare a new numbered function with label given by the argument
\newcommand{\funref}[1]{\hyperref[#1]{f_{\ref*{#1}}}}
% Print a numbered function given its argument

% Numbered constants
\newcounter{const}

% Declare a new numbered constant with label given by the argument
\newcommand{\cstref}[1]{\hyperref[#1]{c_{\ref*{#1}}}}
% Print a numbered constant given its argument

%for colors
\usepackage{color}
\definecolor{black}{rgb}{0, 0, 0}

\definecolor{white}{rgb}{1, 1, 1}

\definecolor{grey}{rgb}{.6, .6, .6}

\definecolor{red}{rgb}{1, 0 ,0}

\definecolor{green}{rgb}{0, 1, 0}

\definecolor{blue}{rgb}{0, 0 ,1}

\definecolor{darkred}{rgb}{0.7, 0 ,0}

\definecolor{darkgreen}{rgb}{0, 0.7, 0}

\definecolor{darkblue}{rgb}{0, 0 , 0.55}

\definecolor{magenta}{rgb}{1, 0, 1}

\definecolor{cyan}{rgb}{0, 1, 1}

\definecolor{yellow}{rgb}{1, 0.9, 0}

\definecolor{purple}{rgb}{0.5, 0, 0.5}

\definecolor{orange}{rgb}{1, 0.5, 0}

\definecolor{turquoise}{rgb}{0, 0.7, 0.7}

\newcommand{\cc}{{\sf cc}}
\newcommand{\bN}{\mathbb{N}}

\newcommand{\FPT}{{\sf FPT}\xspace}
\newcommand{\NP}{{\sf NP}}
\newcommand{\no}{{\sf no}}
\newcommand{\obs}{{\sf obs}}

\newcommand{\yes}{{\sf yes}}
\newcommand{\ii}{{/\!\!/}}

\usepackage[mathscr]{euscript}

\newcommand{\vc}{\mathsf{vc}}
\newcommand{\p}{\mathsf{p}}
\newcommand{\idf}{\mathsf{idf}}
\newcommand{\ec}{\mathsf{ec}}

\newcommand{\ecf}{\mathsf{ecf}}
\newcommand{\fvs}{\mathsf{fvs}}
\newcommand{\id}{\mathsf{id}}
\newcommand{\gall}{\mathcal{G}_{{\text{\rm  \textsf{all}}}}}

\newcommand{\Ccal}{\mathcal{C}}

\newcommand{\Ecal}{\mathcal{E}}
\newcommand{\Fcal}{\mathcal{F}}

\newcommand{\Hcal}{\mathcal{H}}

\newcommand{\Mcal}{\mathcal{M}}

\newcommand{\Ocal}{\mathcal{O}}
\newcommand{\Pcal}{\mathcal{P}}

\newcommand{\Scal}{\mathcal{S}}
\newcommand{\Tcal}{\mathcal{T}}

\newcommand{\Vcal}{\mathcal{V}}

\newcommand{\Xcal}{\mathcal{X}}
\newcommand{\Ycal}{\mathcal{Y}}
\newcommand{\Zcal}{\mathcal{Z}}

\newcommand{\Nbbb}{\mathbb{N}}

\RequirePackage{stmaryrd}
\usepackage{textcomp}
\DeclareUnicodeCharacter{2286}{\subseteq}
\DeclareUnicodeCharacter{2192}{\ifmmode\to\else\textrightarrow\fi}
\DeclareUnicodeCharacter{2203}{\ensuremath\exists}
\DeclareUnicodeCharacter{183}{\cdot}
\DeclareUnicodeCharacter{2200}{\forall}
\DeclareUnicodeCharacter{2264}{\leq}
\DeclareUnicodeCharacter{2265}{\geq}
\DeclareUnicodeCharacter{8614}{\mathbin{\mapsto}}
\DeclareUnicodeCharacter{8656}{\Leftarrow}
\DeclareUnicodeCharacter{8657}{\Uparrow}
\DeclareUnicodeCharacter{8658}{\Rightarrow}
\DeclareUnicodeCharacter{8659}{\Downarrow}
\DeclareUnicodeCharacter{8669}{\rightsquigarrow}
\newcommand{\eqdef}{\stackrel{{\scriptsize\rm def}}{=}}
\DeclareUnicodeCharacter{8797}{\eqdef}
\DeclareUnicodeCharacter{8870}{\vdash}
\DeclareUnicodeCharacter{8873}{\Vdash}
\DeclareUnicodeCharacter{22A7}{\models}
\DeclareUnicodeCharacter{9121}{\lceil}
\DeclareUnicodeCharacter{9123}{\lfloor}
\DeclareUnicodeCharacter{9124}{\rceil}
\DeclareUnicodeCharacter{2208}{\in}
\DeclareUnicodeCharacter{9126}{\rfloor}
\DeclareUnicodeCharacter{9655}{\triangleright}
\DeclareUnicodeCharacter{9665}{\triangleleft}
\DeclareUnicodeCharacter{9671}{\diamond}
\DeclareUnicodeCharacter{9675}{\circ}
\DeclareUnicodeCharacter{10178}{\bot}
\DeclareUnicodeCharacter{10214}{} % needs stmaryrd
\DeclareUnicodeCharacter{10215}{} % needs stmaryrd
\DeclareUnicodeCharacter{10229}{\longleftarrow}
\DeclareUnicodeCharacter{10230}{\longrightarrow}
\DeclareUnicodeCharacter{10231}{\longleftrightarrow}
\DeclareUnicodeCharacter{10232}{\Longleftarrow}
\DeclareUnicodeCharacter{10233}{\Longrightarrow}
\DeclareUnicodeCharacter{10234}{\Longleftrightarrow}
\DeclareUnicodeCharacter{10236}{\longmapsto}
\DeclareUnicodeCharacter{10238}{\Longmapsto} % needs stmaryrd
\DeclareUnicodeCharacter{10503}{\Mapsto}    % needs stmaryrd
\DeclareUnicodeCharacter{10971}{\mathrel{\not\hspace{-0.2em}\cap}}
\DeclareUnicodeCharacter{65294}{\ldotp}
\DeclareUnicodeCharacter{65372}{\mid}

\usepackage{tikz}
\usetikzlibrary{calc}
\usepackage{boxedminipage}
\usepackage{framed}
\usepackage{thm-restate}
\usepackage{tabularx}
\usepackage{tcolorbox}
\usepackage{etoolbox}
\usepackage{xifthen}
\usepackage{ifthen}
\newlength{\RoundedBoxWidth}
\newsavebox{\GrayRoundedBox}
\newenvironment{GrayBox}[1]%
   {\setlength{\RoundedBoxWidth}{.93\textwidth}
    \def\boxheading{#1}
    \begin{lrbox}{\GrayRoundedBox}
       \begin{minipage}{\RoundedBoxWidth}}%
   {   \end{minipage}
    \end{lrbox}
    \begin{center}
    \begin{tikzpicture}%
       \node(Text)[draw=black!20,fill=white,rounded corners,%
             inner sep=2ex,text width=\RoundedBoxWidth]%
             {\usebox{\GrayRoundedBox}};
        \coordinate(x) at (current bounding box.north west);
        \node [draw=white,rectangle,inner sep=3pt,anchor=north west,fill=white]
        at ($(x)+(6pt,.75em)$) {\boxheading};
    \end{tikzpicture}
    \end{center}}
\newenvironment{defproblemx}[2][]{\noindent\ignorespaces%
                                \FrameSep=6pt%
                                \parindent=0pt%
                \vspace*{-1.5em}
                \ifthenelse{\isempty{#1}}{%
                  \begin{GrayBox}{\textsc{#2}}%
                }{%
                  \begin{GrayBox}{\textsc{#2}  parameterized by~{#1}}%
                }
                \begin{tabular*}{\textwidth}{@{\hspace{.1em}} >{\itshape} p{1.8cm} p{0.8\textwidth} @{}}%
            }{
                \end{tabular*}%
                \end{GrayBox}%
                \ignorespacesafterend
            }
\newcommand{\defproblem}[3]{% FJR Version
  \begin{defproblemx}{#1}
    Input:  & #2 \\
    Question: & #3
  \end{defproblemx}
}%

%for citations: need colors
\usepackage[
plainpages = true,
pdfpagelabels,
hyperfootnotes=true,
pdfstartview =,
bookmarks=true,
bookmarksopen = true,
bookmarksnumbered = true,
breaklinks = true,
hyperfigures,
pagebackref,
urlcolor = black,
anchorcolor = green,
hyperindex = true,
colorlinks = true,
linkcolor = blue,
citecolor = turquoise
]{hyperref}

\usepackage[english]{babel}
\addto\extrasenglish{} %To have maj S when refers to a section. To put after hyperref
\addto\extrasenglish{} %Same for subsection

\newenvironment{cproof}{\proof[Proof of claim]}{\endproof}

% for figures

\setcounter{totalnumber}{4} % nb max of pictures by page

\usepackage{verbatim,listings}
\usepackage{todonotes}
\newcommand*\samethanks[1][\value{footnote}]{\footnotemark[#1]}

\title{Vertex identification to a forest\thanks{The first and the third author were supported by  the French-German Collaboration ANR/DFG Project UTMA (ANR-20-CE92-0027) and the Franco-Norwegian project PHC AURORA 2024 (Projet n° 51260WL). The second author was supported by  ANR project ELIT (ANR-20-CE48-0008).}}

\author{Laure Morelle\thanks{LIRMM, Université de Montpellier, CNRS, Montpellier, France. Emails: \texttt{laure.morelle@lirmm.fr}, \texttt{ignasi.sau@lirmm.fr}, 
\texttt{sedthilk@thilikos.info}\ .} \and Ignasi Sau\samethanks \and Dimitrios M. Thilikos\samethanks}

\date{\today}

\begin{document}
\maketitle

\begin{abstract}
\noindent Let $\Hcal$ be a graph class and $k\in\bN$.
We say a graph $G$ admits a \emph{$k$-identification to $\Hcal$} if there is a partition $\Pcal$ of some set $X\subseteq V(G)$ of size at most $k$ such that after identifying each part in $\Pcal$ to a single vertex, the resulting graph belongs to $\Hcal$.
The graph parameter $\id_{\cal H}$ is defined so that
$\id_{\cal H}(G)$ is the minimum  $k$ such that $G$ admits a $k$-identification to $\Hcal$, and the problem of {\sc Identification to $\Hcal$} asks, given a graph $G$ and $k\in\bN$, whether $\id_{\cal H}(G)≤k$.
If we set $\Hcal$ to be the class $\Fcal$ of acyclic graphs,  we
generate the problem  {\sc Identification to Forest}, which we show to be {\sf NP}-complete.
We prove that, when parameterized by the size $k$ of the
identification set, it admits
a kernel of size $2k+1$. For our kernel we reveal a close  relation of   {\sc Identification to Forest}  with the {\sc Vertex Cover}
problem.
We also study the combinatorics of the \yes-instances
of  {\sc Identification to $\Hcal$}, i.e.,  the class $\Hcal^{(k)}:=\{G\mid \id_{\Hcal}(G)≤k\}$, {which we show to be minor-closed for every $k$} when $\Hcal$ is minor-closed.
We prove that the minor-obstructions of $\Fcal^{(k)}$
are of size at most $2k+4$. We also prove that
every graph $G$ such that $\id_{\Fcal}(G)$ is sufficiently big contains as a minor either a cycle on $k$ vertices, or $k$ disjoint triangles, or the \emph{$k$-marguerite} graph, that is the graph obtained by $k$ disjoint triangles by identifying one vertex of each of them into the same vertex.
\end{abstract}

\bigskip
\noindent{\bf Keywords.} Vertex identification, Forests, Vertex Cover, Graph minors, Parameterized Algorithms, Kernelization, Obstructions, Universal Obstructions.

\section{Introduction}
\label{sec-intro}

A considerable part of parameterized algorithms has been dedicated to the study of \emph{graph modification problems}. The general scheme for a graph modification problem consists of  some modification operation, accompanied by some a measure on the ``cost'' of this modification, and a target property. The question is, given a graph $G$ and a non-negative integer $k$, whether it is possible to apply to $G$ a modification operation with cost at most $k$ so that the resulting graph has the target property.
A graph modification problem can be seen as a way to define some notion of
 ``distance from triviality'' \cite{GuoHN04astru},  where  the distance is expressed by the measure of the modification operation and the triviality is expressed by the target class.  Most graph modification problems are known to be \NP-complete \cite{LewisY80theno,Yannakakis81edged}.  A well-studied graph modification operation is vertex deletion and the most typical measure is the number of vertices to be deleted.
 A general family of  problems of this type is {\sc $\Hcal$-Deletion}
where $\Hcal$ is a graph class and where
we look for a vertex set $S\subseteq V(G)$ of at most $k$ vertices such that $G-S$ is a graph in $\Hcal$. 
There are many
problems that can be expressed in this way and also  
many results
identifying instantiations of $\Hcal$ where {\sc $\Hcal$-Deletion}, parameterized by $k$, admits a fixed-parameter algorithm (in short {\sf FPT}-algorithm), that is, an algorithm running in time $f(k)\cdot |G|^{\Ocal(1)}$, for some function $f$.
Two of the most classical and widely studied problems of this type are  {\sc Vertex cover}, where  $\Hcal$ is be the class of edgeless graphs,  and  {\sc Feedback Vertex Set}, where $\Hcal$ is the class of acyclic graphs.
Alternative measures of the vertex removal modification have been considered in \cite{EibenGHK21} (see also \cite{thilikos2024excluding}).

Another line of research on modification problems is to consider other modification operations.  Such operations may include edge removals or additions (see \cite{CrespelleDFG24asurv} for an extended survey), edge contractions \cite{HeggernesHLP13obtai,HeggernesHLLP14contra},
or other modification  operations such as subgraph complementations \cite{FominGT19modif}. We should stress that the existing
results on the parameterized complexity of such alternative modification operations
are also many but not as abundant as in in the case of vertex deletion.

Another modification operation called \emph{vertex fusion} was introduced in \cite{Comas2009Oct}.
The vertex fusion of a vertex set $S$ in a graph $G$ consists in deleting $S$ from $G$ and adding instead a new vertex $s$ adjacent to every vertex of $G-S$ that was adjacent to a vertex of $S$.
In other words, the set $S$ is fused (or identified, using our terminology) to a single vertex $s$.
This setting has real life applications. 
Indeed, consider a communication network represented by a graph $G$. 
The goal is that the vertices communicate as fast as possible through the edges of $G$.
A natural problem is hence to ask whether it is possible to perform a small amount of modification to $G$ so that its diameter becomes small.
Usually, the modification considered to reduce the diameter of a graph is to add edges.
Instead, this article \cite{Comas2009Oct} proposes the vertex fusion operation.
Such a fusion corresponds to adding a new, more modern and perhaps more expansive, communication network on a small vertex set $S$ that would allow for instantaneous (or just much faster) communication among the nodes in $S$.
The authors prove that, given a graph $G$ and $k,d\in\bN$, asking for a set of size $S$ at most $k$ whose fusion gives a graph of diameter (or radius or eccentricity) at most $d$ is \NP-complete, and $W[1]$-hard parameterized by $k$.

\paragraph{Identification to a graph property.}
In this paper, we reinitiate a study of vertex fusion in a more general setting, and call our modification operation \emph{vertex identification}.

We use $\gall$ for the class of all graphs. Let ${\cal H}\subseteq \gall$ be some graph class, expressing some graph property.
We say that a graph $G$ admits a \emph{$k$-identification to $\Hcal$}
if there is a partition $\Pcal=\{S_{1},\ldots,S_{r}\}$ of some subset $S$ of $k$ vertices of $G$ such that if, for every $i\in[r]$, we identify the vertices of
$S_{i}$ to a single vertex, we obtain a graph in $\Hcal$ (this operation is defined more formally in \autoref{sec-prelim-identification}).
That way, $k$-identification to $\Hcal$ defines a measure of ``distance from triviality'' from the property $\Hcal$. 
Note that the fusion operation of \cite{Comas2009Oct} corresponds to the particular case of our operation where the partition consists of a single set.
This gives rise to the graph parameter $\id_{\Hcal}:\gall\to\Nbbb$ where, given a graph $G$,
$\id_{\Hcal}(G)$ is the minimum $k$ for which $G$ admits a $k$-identification to $\Hcal$. The general problem is the following.

\defproblem{\sc Identification to $\Hcal$}{A graph $G$ and $k\in\bN$.}{Does $G$ admit a $k$-identification to $\Hcal$?}

We say that $G$ admits an \emph{identification to $\Hcal$} if it admits a $k$-identification to $\Hcal$ for some $k\in\bN$.
Although we will not dwell on the subject, those familiar with \emph{quotient graphs} or \emph{homomorphisms} may observe the following equivalence:
$G$ admits an identification to $\Hcal$ if and only if $G$ admits a quotient graph that belongs to $\Hcal$ if and only if there is a surjective homomorphism from $G$ to a graph in $\Hcal$ (sometimes called $\Hcal$-coloring \cite{Bodirsky2012Aug}).
However, we are not aware of any optimization version of graph homomorphism (or graph quotient) to a fixed graph class that would fit our setting.

 Suppose now that $\Hcal$
is some minor-closed property, i.e., $\Hcal$ contains all minors\footnote{A graph $H$ is a \emph{minor} of a graph $G$  if $H$ can be obtained from a subgraph of $G$ after contracting edges.} of its graphs.
We also denote by $\obs(\Hcal)$ the set of minor-minimal graphs that do not belong to $\Hcal$ and observe that $G\in{\cal G}$ iff $G$ does not contain as a minor any of the graphs in $\obs(\Hcal)$. We call the graphs in  $\obs(\Hcal)$ the \emph{minor obstructions} of $\Hcal$. Keep also in mind that, according to  Robertson and Seymour's theorem~\cite{RobertsonS04XX},
if $\Hcal$ is minor-closed, then $\obs(\Hcal)$ is finite.
Also, according to {\cite{RobertsonS12XXII,RobertsonS09XXI,RobertsonS95GMXIII}},
checking whether a graph $H$ is a minor of a graph $G$
can be done in time\footnote{Given two functions $\chi,\psi\colon \mathbb{N}\rightarrow \mathbb{N},$ we write $\chi(n)=\mathcal{O}_{x}(\psi(n))$ to denote that there exists a computable function $f\colon\mathbb{N} \rightarrow \mathbb{N}$ such that $\chi(n)=\mathcal{O}( f(x)\cdot \psi(n)).$}
 $\Ocal_{|H|}(|G|^3)$. This running time has been improved in \cite{KawarabayashiKR11thed} to a quadratic one and very recently in \cite{korhonen2024minor} to an almost linear one, i.e., $\Ocal_{|H|}(|G|^{1+\varepsilon})$.

Our first observation is that the minor-closedness of $\Hcal$ implies that,
for every $k\in \Nbbb$,  the
graph class $\Hcal^{(k)}:=\{G\mid \id_{\Hcal}(G)≤k\}$ is also minor-closed (\autoref{lem_minor}).
Therefore, because of the aforementioned results, for every minor-closed ${\Hcal}$, the problem {\sc Identification to $\Hcal$} admits an \FPT-algorithm, in particular, an algorithm running in time   $\Ocal_{k}(|G|^{1+\varepsilon})$.
Note that this does not contradict the $W[1]$-hardness result of \cite{Comas2009Oct} (even if the model is not the same) because the class of graphs of diameter at most $d$ is not minor-closed.
Unfortunately, given that we have no upper bound on the size
of the set $\obs(\Hcal^{(k)})$, the parametric dependence hidden in the ``$\Ocal_{k}$'' notation is not constructive. Actually, it can become constructive because of the recent results in \cite{SauST2024parame}. However, this dependence  still remains huge and it is an open challenge to design \FPT-algorithms with reasonable parametric dependencies
for particular  instantiations of $\Hcal$. As a first step in this direction, we consider the problem {\sc Identification to Forest},  that is,  {\sc Identification to $\Hcal$} where $\Hcal$ is the class $\Fcal$ of acyclic graphs. {Note that this is the first non-trivial natural minor-closed class that one may consider, as if we take $\Hcal$ to be the class of edgeless graphs, then the problem can be trivially solved in polynomial time.}
As we observe in \autoref{sec_hard_kernel},  {\sc Identification to Forest} is
an \NP-complete problem (see \autoref{lem_NP}).

A  problem that is similar to {\sc Identification to Forest} is  {\sc Contraction to Forest},  asking whether it is possible to \textsl{contract} $k$ edges in a graph $G$ so to obtain an acyclic graph. According to the results by  Heggernes,  van ’t Hof,  Lokshtanov, and  Paul in  \cite{HeggernesHLLP14contra}, this problem
can be solved   in time $4.98^k\cdot |G|^{\Ocal(1)}$.
As edge contractions are special cases of vertex identifications,  if  $(G,k)$ is a \yes-instance of {\sc Contraction to Forest} then $(G,2k)$ is also a \yes-instance of  {\sc Identification to  Forest}. However, vertex identifications may not be edge contractions, and it is certainly possible
that a \yes-instance of {\sc Identification to  Forest} is certified
by the identifications of non-adjacent vertices  that cannot be simulated by  a small number of edge contractions. More generally, if $\Hcal$
is some minor-closed graph class and $\Hcal^{[k]}$ is the set of all
graphs 
containing an edge set of size at most~$k$ whose contraction 
creates a graph in $\Hcal$, 
then $\Hcal^{[k]}$ is {\sl not} necessarily a minor-closed graph class, for any $k≥1$ (see \autoref{sec-discussion-and-open}). This indicates  that
the identification operation behaves better than the contraction operation from the structural   point of view, and this motivates the definition and study of  {\sc Identification to $\Hcal$} for minor-closed $\Hcal$'s.
To the authors' knowledge, no study of {\sc Identification to $\Hcal$} has been done from the parameterized complexity point of view{, for any instantiation of $\Hcal$}.

\paragraph{A linear kernel.} Our first result is to prove that {\sc Identification to Forest} admits a linear kernel of size $2k+1$. In formal terms, we prove the following:

\begin{theorem}
\label{mnaiol_th}
There is a algorithm that, given an instance $(G,k)$ of {\sc Identification to Forest}, outputs in polynomial time an equivalent instance $(G',k')$
where $|G'|≤2k+1$ and $k'≤k+1$.
\end{theorem}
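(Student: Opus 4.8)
The plan is to identify $\id_{\Fcal}$ with the vertex cover number of a canonical subgraph of $G$, and then piggyback on a known linear kernel for {\sc Vertex Cover}. Let $G^{\circ}$ denote the subgraph of $G$ formed by the edges that lie on some cycle of $G$ (equivalently, $G^{\circ}$ is obtained from $G$ by deleting all bridges and then repeatedly deleting vertices of degree at most one, so $G^{\circ}$ is computable in polynomial time). It is easy to see that $\fvs(G)\le \id_{\Fcal}(G)\le \vc(G)$ (the identification set must be a feedback vertex set, and identifying a minimum vertex cover to a single vertex yields a star), but the crucial identity behind \autoref{mnaiol_th} is the sharper
\[
\id_{\Fcal}(G) \;=\; \vc(G^{\circ}),
\]
which is the ``close relation with {\sc Vertex Cover}'' alluded to above.

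For the inequality $\id_{\Fcal}(G)\ge \vc(G^{\circ})$, suppose $(X,\Pcal)$ is a $k$-identification of $G$ to a forest $Q$, and let $uv\in E(G^{\circ})$. If $u,v\notin X$, pick a cycle $Z$ of $G$ through $uv$. The image of the path $Z-uv$ under the identification is a connected subgraph of $Q$ containing both $u$ and $v$ but, since $u,v\notin X$, not containing the edge $uv$; hence $Q$ has a $u$--$v$ path avoiding $uv$, and adding the edge $uv$ (which is present in $Q$ because $u,v\notin X$) creates a cycle in $Q$, a contradiction. Therefore $X$ is a vertex cover of $G^{\circ}$, so $|X|\ge \vc(G^{\circ})$.

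For the reverse inequality, let $C$ be a minimum vertex cover of $G^{\circ}$; since every cycle of $G$ lies inside $G^{\circ}$, the graph $G-C$ is a forest. Take $\Pcal$ to be the partition of $C$ whose parts are the sets $C\cap V(B)$, one for each connected component $B$ of $G^{\circ}$ (these are pairwise disjoint). After identifying each part, every component $B$ collapses to a star centered at its new vertex---$G^{\circ}$ is bridgeless, so $V(B)\setminus C$ is an independent set all of whose $B$-neighbours lie in $C\cap V(B)$---while the subgraph of $G$ induced by the vertices outside $V(G^{\circ})$ is a forest and survives unchanged, and these pieces are joined only through bridges of $G$. Since contracting each component of $G^{\circ}$ turns $G$ into a forest and at most one bridge runs between any two of the resulting pieces, one deduces that the quotient is a forest; hence $\id_{\Fcal}(G)\le |C|=\vc(G^{\circ})$. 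Checking this last implication---that re-expanding the stars inside the ``block forest'' cannot close a cycle, with due attention to cut vertices and to the pendant trees---is the step I expect to demand the most care; everything else is short.

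Given the identity, the kernel follows quickly. On input $(G,k)$: if $G$ is a forest, output a fixed \yes-instance. Otherwise compute $G^{\circ}$; by the identity, $(G,k)$ is equivalent to the {\sc Vertex Cover} instance $(G^{\circ},k)$. Apply the classical LP-based (Nemhauser--Trotter) kernelization for {\sc Vertex Cover} to obtain an equivalent instance $(H,k'')$ with $k''\le k$ and $|V(H)|\le 2k''$. To return to {\sc Identification to Forest}, let $G'$ be $H$ together with one new vertex $z$ adjacent to every vertex of $H$, and set $k':=k''+1$: every edge of $H$ now lies on a triangle through $z$, so $(G')^{\circ}$ is $H$ (with its isolated vertices removed) plus $z$ joined to all remaining vertices, and a direct computation gives $\vc\bigl((G')^{\circ}\bigr)=\vc(H)+1$ whenever $H$ has an edge (the edgeless case being trivially positive). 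Consequently $\id_{\Fcal}(G')=\vc\bigl((G')^{\circ}\bigr)=\vc(H)+1\le k'$ if and only if $\vc(H)\le k''$, i.e.\ if and only if $(G,k)$ is a \yes-instance, while $|V(G')|\le 2k''+1\le 2k+1$ and $k'=k''+1\le k+1$, as required.
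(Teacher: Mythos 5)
Your plan is the paper's own: establish $\idf(G)=\vc(G^{\circ})$ (the paper writes $G^{\sf b}$ for the graph with bridges deleted, which has the same vertex cover number as your $G^{\circ}$), push the instance through the Nemhauser--Trotter kernel for {\sc Vertex Cover}, and come back by adding one dominating vertex. So the overall route matches, and your last stage (adding $z$, checking $\vc((G')^{\circ})=\vc(H)+1$, taking $k'=k''+1$) is exactly Lemmas~\ref{VCbless} and~\ref{lem_kernel}. However, the step you yourself flag as ``demanding the most care'' --- that identifying each trace $C\cap V(B)$, $B$ a component of $G^{\circ}$, leaves an acyclic quotient --- is left as a sketch, and it is precisely the step the paper's organization is designed to avoid ever confronting.

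To close it you would need an argument such as: contracting each component $B$ of $G^{\circ}$ to a point yields a forest (the bridge forest of $G$); your quotient is obtained from that forest by replacing each contracted point with the star $S_B$ (center $x_B$, leaves $V(B)\setminus C$) and re-routing each bridge incident to $B$ to its true $G$-endpoint inside $S_B$; such vertex-to-tree substitution preserves acyclicity; and distinct bridges of $G$ cannot become parallel in the quotient, since two bridges from one vertex outside $B$ into $B$ would close a cycle through $B$, contradicting bridgehood. None of this is hard, but it is not in your write-up. The paper instead proves $\idf(G)=\idf(G^{\sf b})$ by deleting bridges one at a time (Lemma~\ref{Feqbridge}: if $e$ is a bridge of a connected $G$ and $\Xcal=\Xcal_1\cup\Xcal_2$ is built from optimal partitions of the two sides of $e$, then no part of $\Xcal$ crosses $e$, so the image of $e$ is still a bridge of $G\ii\Xcal$ and cannot lie on a cycle), and then proves $\idf=\vc$ \emph{only} for bridgeless graphs (Lemma~\ref{VCeq}), where taking the single-part partition $\{X\}$ for a vertex cover $X$ immediately collapses the graph to a star. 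Splitting into these two lemmas makes both halves one-liners and spares you the bridge-forest bookkeeping.
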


 The algorithm  of \autoref{mnaiol_th} is based on a structural result
revealing a strong connection between  {\sc Identification to Forest} and the {\sc Vertex Cover} problem. We use $\idf$ as a shortcut of the graph parameter $\id_{\Fcal}$ (recall that $\Fcal$ is the class of forests),
and we use $\vc(G)$ for the minimum size of a vertex cover of $G$, i.e.,
the minimum number of vertices whose removal from $G$ yields an edgeless graph. Given a graph $G$, we denote by $G^{\sf b}$ the graph obtained from $G$ after removing all bridges (edges whose removal increases the number of components).
The relation between $\vc$ and $\idf$ is given by the fact
that, for every graph $G$, $\idf(G)=\vc(G^{\sf b})$ (\autoref{cor_id_vc}).
We also prove that, for every graph $G$, there is a bridgeless
graph $G'$ on $|G|+1$ vertices such that $\vc(G)=\vc(G')$  (\autoref{VCbless}).
\autoref{mnaiol_th} follows as a consequence of these two facts and the known kernelization algorithm for {\sc Vertex Cover} (see \autoref{sec_hard_kernel}).

\paragraph{Obstructions for $\Fcal^{(k)}$.}

Recall that, for every $k\in\Nbbb$, $\Fcal^{(k)}$ is defined as the set of all graphs that admit a $k$-identification to a forest. Clearly, $\Fcal^{(k)}$
is determined by the finite set $\obs(\Fcal^{(k)})$. Identifying $\obs(\Fcal^{(k)})$, for every $k$, requires an upper bound on the size of its elements. This upper bound is not given by the general result of \cite{RobertsonS04XX}. Our next result is to provide such a bound.

\begin{theorem}\label{lem_2}
Let $k\in\bN$.
For any obstruction $G\in\obs(\Fcal^{(k)})$, $|V(G)|\le 2k+4$.
\end{theorem}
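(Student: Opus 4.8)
The plan is to reduce everything, via the identity $\idf(G)=\vc(G^{\sf b})$, to a statement about vertex cover, the key first move being to show that an obstruction must be \emph{bridgeless} (so that $\idf$ collapses to plain vertex cover), and then to use minor-minimality twice: once to bound $\idf(G)$, once to bound $|V(G)|$ in terms of it.

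\textbf{Step 1: $G$ is bridgeless.} Let $G\in\obs(\Fcal^{(k)})$. It has no isolated vertex, since deleting one does not change $\idf$. It has no bridge either: if $e$ is a bridge of $G$ then $e$ lies on no cycle of $G$, so the bridges of $G-e$ are exactly the bridges of $G$ other than $e$; hence $(G-e)^{\sf b}=G^{\sf b}$ and $\idf(G-e)=\vc\bigl((G-e)^{\sf b}\bigr)=\vc(G^{\sf b})=\idf(G)\ge k+1$, contradicting that the proper minor $G-e$ lies in $\Fcal^{(k)}$. Therefore $G=G^{\sf b}$, so $\idf(G)=\vc(G)$, the minimum degree of $G$ is at least $2$, and every block of $G$ is $2$-connected on at least three vertices. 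Moreover, since $\idf$ is additive over connected components, each component $C$ of $G$ is itself an obstruction for $\Fcal^{(\idf(C)-1)}$, so it suffices to bound $|V(C)|$ in terms of $\idf(C)$ for connected obstructions $C$ and then sum.

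\textbf{Step 2: $\idf(G)\le k+2$.} Distinguish two cases according to the block structure. If some block $B$ has at least four vertices, pick (using the standard fact that a $2$-connected graph on at least four vertices has a contractible edge) an edge $e\in E(B)$ with $G/e$ still bridgeless; then $\idf(G/e)=\vc(G/e)\ge\vc(G)-1$ while minimality gives $\idf(G/e)\le k$, so $\vc(G)\le k+1$. Otherwise every block of $G$ is a triangle, i.e.\ $G$ is a triangle cactus; take a leaf triangle $\{c,x,y\}$ with $c$ its unique cut-vertex and $x,y$ of degree $2$, and contract $xy$: the triangle becomes a pendant vertex, so $(G/xy)^{\sf b}=G-\{x,y\}$, and since one edge contraction and one bridge deletion each lower $\vc$ by at most $1$ we get $\idf(G/xy)=\vc(G-\{x,y\})\ge\vc(G)-2$, whence minimality yields $\vc(G)\le k+2$. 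In both cases $\idf(G)=\vc(G)\le k+2$.

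\textbf{Step 3: $|V(G)|\le 2\,\idf(G)$, hence $|V(G)|\le 2k+4$.} For triangle cacti this is immediate: a connected triangle cactus with $b$ triangles has exactly $2b+1$ vertices and vertex cover at least $b+1$ (an easy induction on $b$ via a leaf triangle $\{c,x,y\}$, using that the edge $xy$ forces one of $x,y$ into every vertex cover, so deleting $x,y$ drops $\vc$ by at least $1$), giving $|V(G)|=2b+1\le 2\vc(G)-1$, and summing over components preserves this. When $G$ is not a triangle cactus, write $V(G)=C\cup I$ with $C$ a minimum vertex cover and $I$ an independent set all of whose vertices have degree at least $2$; the goal is to bound $|I|$ by $\vc(G)$ (an additive constant being absorbed by the slack in $2k+4$). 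Here one shows that $G$ admits no Nemhauser--Trotter/crown-type reduction compatible with the operator $(\cdot)^{\sf b}$: if some $v\in I$ lies outside a minimum vertex cover and $G-v$ is still bridgeless, then $\idf(G-v)=\vc(G-v)=\vc(G)\ge k+1$, contradicting minimality; a short analysis of the few ways in which deleting a single low-degree vertex of $I$ can create a bridge then bounds the number of such ``redundant'' vertices of $I$. Combining Steps~2 and~3 gives $|V(G)|\le 2\,\idf(G)\le 2k+4$.

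\textbf{Where the difficulty lies.} Steps 1--2 are short and robust. The main obstacle is the vertex-count bound of Step 3 for graphs that are not triangle cacti: for plain \textsc{Vertex Cover}, crown/Nemhauser--Trotter reductions instantly bound obstruction size by $2\vc$, but here every minor operation has to be tracked through the non-local bridge-removal operator $(\cdot)^{\sf b}$, and one must rule out ``wide'' near-examples (subdivided or multiply-attached pieces) that would otherwise inflate $|I|$.
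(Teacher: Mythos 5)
Your Steps 1 and 2 are sound. Step 1 matches the paper's \autoref{obs1}. Step 2 reaches the paper's \autoref{lem+2} bound $\idf(G)\le k+2$ by a longer route (block decomposition, contractible edges of $2$-connected graphs, leaf triangles), whereas the paper just extends an id-$\Fcal$ partition of $G/uv$ by the set $\{u,v\}$ in three lines; both are valid.

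The genuine gap, which you flag yourself, is Step 3. The ``short analysis of the few ways in which deleting a single low-degree vertex of $I$ can create a bridge'' is never supplied, and there is no reason to believe it is short. Even ignoring bridges, ruling out a Nemhauser--Trotter/crown reduction for an obstruction requires showing that both the $V_0$- and $V_1$-sides of an LP decomposition are empty, and you only argue one direction; once bridge creation enters, a deleted vertex $v$ can simultaneously be a bottleneck for many pairs of blocks, so bounding the number of such ``redundant'' vertices by a constant is not clearly possible without a full structural analysis. The paper resolves this differently via the structural \autoref{lem_obs}: for any $G\in\obs(\Fcal^{(k)})$ there is $E'\subseteq E(G)$, i.e.\ deleting \emph{edges only}, with $G-E'\in\obs(\Vcal_{\idf(G)-1})$. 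Since $V(G)=V(G-E')$, the vertex bound follows immediately from the Dinneen--Lai bound $|V(H)|\le 2k'+2$ for $H\in\obs(\Vcal_{k'})$, combined with $\idf(G)\le k+2$. To prove \autoref{lem_obs}, the paper shows that no edge contraction (\autoref{cl:no_contract}) and no vertex deletion (\autoref{cl:vtx}) can keep $G$ outside $\Vcal_k$; the contraction claim is proved by exactly the kind of careful bridge-tracking case analysis that you gesture at in Step~3, and the deletion claim is then derived from it. In other words, what is missing from your Step~3 is not a supplement to be filled in later: it essentially \emph{is} the paper's main technical lemma for this theorem.
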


A linear upper bound as the above is  known for the obstructions
of the class $\Vcal_{k}=\{G\mid \vc(G)≤k\}$: Dinneen and Lai
proved that $2k+2$ is an upper bound on the size of the graphs in $\obs(\Vcal_{k})$ \cite{DinneenL07propert,Dinneen97toomany}.
The proof of \autoref{lem_2}
is based on a procedure to construct all obstructions of $\Fcal^{(k)}$ using the obstructions of $\Vcal_{k}$ as a starting point.
Then \autoref{lem_2} follows by the upper bound in \cite{DinneenL07propert}.

\paragraph{Universal obstruction of $\idf$.}
A \emph{parametric graph} is a minor-monotone sequence $\mathscr{G}=\langle \mathscr{G}_k \rangle_{k\in\mathbb{N}}$ of graphs, i.e., for every $k\in\mathbb{N}$, $\mathscr{G}_{k}\leq \mathscr{G}_{k+1}$, where `$\leq$' denotes the minor relation.
We say that two parametric graphs $\mathscr{G}^1$ and $\mathscr{G}^2$ are \emph{comparable} if every graph in $\mathscr{G}^1$ is a minor of a graph in $\mathscr{G}^2$ or every graph in $\mathscr{G}^1$ is a minor of a graph in $\mathscr{G}^2$.
Given a minor-monotone\footnote{We say that a graph parameter $\p:\gall\to\Nbbb$ is \emph{minor-monotone}
if, for every two graphs $G$, $G'$, if $G$ is a minor of $G'$, then $\p(G)≤\p(G')$.}
graph parameter $\p:\gall\to\Nbbb$,
and a finite set $\frak{G}=\{\mathscr{G}^1,\ldots,\mathscr{G}^r\}$ of pairwise non-comparable parametric graphs, we say that $\frak{G}$ is a \emph{universal obstruction} of $\p$
if there is a function $f:\Nbbb\to\Nbbb$ (we refer to $f$ as the \emph{gap function}) such that
\begin{itemize}
\item
for every $k\in\Nbbb$, if $G$ excludes all graphs in $\{\mathscr{G}^1_{k},\ldots,\mathscr{G}^r_{k}\}$ as a minor, then  $\p(G)≤f(k)$.
\item $\p(\mathscr{G}^j_{k})≥f(k)$,  for every $j\in[r]$.
\end{itemize}

Universal obstructions serve
as asymptotic characterizations of graph parameters,
as they identify the typical patterns of graphs that should appear whenever the value of a parameter becomes sufficiently big.
Several structural dualities on graph parameters can be described
using universal obstructions, and it has been conjectured
that for every minor-monotone parameter  there always exists some \emph{finite}
universal obstruction \cite{PaulPT2023graph}. (For a survey on universal obstructions see \cite{PauPTl2023universal}.)

Let us give two examples of universal obstructions.
A universal obstruction for $\vc$ is the set $\{\langle k\cdot K_{2} \rangle_{k\in\mathbb{N}}\}$\footnote{For a graph $H$, we denote by $k\cdot H$ the union of $k$ disjoint copies of $H$.} with  linear gap function $f(k)=\Ocal(k)$.
Another example is the universal obstruction
for the parameter $\fvs$, where $\fvs(G)$ is the minimum size of a vertex set of $G$ whose removal yields an acyclic graph.
An interpretation of the Erdős-Pósa's theorem~\cite{ErdosP65inde} is that
 $\{\langle k\cdot K_{3} \rangle_{k\in\mathbb{N}}\}$ is a universal obstruction for $\fvs$ with gap function $f(k)=\Ocal(k\cdot \log k)$. Notice that $\idf$ can be seen as the analogue of $\fvs$ where now, instead of removing vertices, we pick a set of vertices and apply identifications to them.

 Our next result is a universal obstruction for
 $\idf$. We use $C_{k}$ for the cycle on $k$ vertices
 and $k*K_3$ for the \emph{$k$-marguerite} graph, that is, the graph obtained from $k\cdot K_{3}$ by selecting one vertex from each connected component and identifying all selected vertices into a single one.

\begin{theorem}\label{th_univ_obs_idf_more}
The set $\{\langle k\cdot K_{3} \rangle_{k\in\mathbb{N}},\langle C_{k}\rangle_{k\in\Nbbb},\langle k*K_3 \rangle_{k\in\mathbb{N}}\}$ is a universal obstruction of $\idf$, with gap function $f(G)=\Ocal(k^4\cdot\log^2 k)$.
\end{theorem}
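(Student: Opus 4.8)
The statement asks for the two defining conditions of a universal obstruction, which split into an easy \emph{lower-bound part} (the parameter $\idf$ grows along each of the three parametric graphs) and a \emph{structural upper-bound part} ($\idf$ is bounded by $f(k)=\Ocal(k^{4}\log^{2}k)$ on every graph excluding the three level-$k$ patterns as minors). The plan is to reduce both parts to the identity $\idf(G)=\vc(G^{\sf b})$ of \autoref{cor_id_vc} and to classical facts about vertex covers and cycle packings. For the lower bound, observe that $C_{k}$, $k\cdot K_{3}$ and $k*K_{3}$ are all bridgeless, so $C_{k}^{\sf b}=C_{k}$, $(k\cdot K_{3})^{\sf b}=k\cdot K_{3}$ and $(k*K_{3})^{\sf b}=k*K_{3}$; hence $\idf(C_{k})=\vc(C_{k})=\lceil k/2\rceil$, $\idf(k\cdot K_{3})=\vc(k\cdot K_{3})=2k$ and $\idf(k*K_{3})=\vc(k*K_{3})=k+1$. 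In all three cases $\idf$ grows linearly in $k$, which is the growth required of a universal obstruction.

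For the upper bound, let $G$ exclude $C_{k}$, $k\cdot K_{3}$ and $k*K_{3}$ as minors, set $H:=G^{\sf b}$, and recall that $\idf(G)=\vc(H)$ while the bridgeless graph $H$ still excludes the three patterns. Excluding $C_{k}$ forces the circumference of $H$ to be less than $k$; excluding $k\cdot K_{3}$ means $H$ has no $k$ vertex-disjoint cycles, so by the Erd\H{o}s--P\'osa theorem $H$ has a feedback vertex set $S$ with $|S|=\Ocal(k\log k)$. The forest $F:=H-S$ is bipartite, so $\vc(F)=\nu(F)$ by K\"onig's theorem and $\vc(H)\le|S|+\nu(F)$; it therefore suffices to bound the matching number $\nu(F)$. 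Fix a maximum matching $M$ of $F$. Since $H$ is bridgeless, every $e\in M$ lies on a cycle of $H$; this cycle is not contained in the forest $F$, hence meets $S$, and having length less than $k$ it contains a subpath of $F$ through $e$ whose two endpoints each have a neighbour in $S$. As there are only $\Ocal(|S|^{2})=\Ocal(k^{2}\log^{2}k)$ possible pairs of such anchors, a sub-matching $M'\subseteq M$ with $|M'|\ge|M|/\Ocal(k^{2}\log^{2}k)$ uses a single pair $\{s,s'\}$ of anchors. For each $e\in M'$ we thus obtain a cycle of $H$ through $\{s,s'\}$ carrying $e$ and otherwise contained in $F$; the ``$F$-parts'' of these cycles are subpaths of the forest $F$ with pairwise disjoint matching edges, so their intersection graph is chordal. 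If $\Omega(k)$ of these $F$-parts are pairwise vertex-disjoint, then together with $\{s,s'\}$ they yield a $k*K_{3}$ minor (if $s=s'$) or $\Omega(k)$ internally disjoint paths between $s$ and $s'$, hence a $k\cdot K_{3}$ minor; both are forbidden. Otherwise, perfection of the intersection graph together with the Helly property of subtrees of a tree provides a set of $\Ocal(k)$ vertices meeting all the $F$-parts, and a further pigeonhole puts $\Omega(|M'|/k)$ matching edges on $F$-parts through one common additional vertex. Iterating, each round either exhibits a forbidden minor or loses a factor $\Ocal(k)$, and the iteration halts after $\Ocal(k)$ rounds because every path manipulated lies on a cycle of length less than $k$; a quantitative bookkeeping of the losses yields that $\Ocal(k^{2})$ matching edges survive, so $|M|=\Ocal(k^{2}\log^{2}k)\cdot\Ocal(k^{2})=\Ocal(k^{4}\log^{2}k)$. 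With $|S|=\Ocal(k\log k)$ this gives $\vc(H)=\Ocal(k^{4}\log^{2}k)$, i.e.\ $\idf(G)\le f(k)$.

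The crux is the final step of the upper bound: from a large family of cycles of $H$ passing through one fixed vertex (or pair of vertices) of $S$ and carrying pairwise disjoint matching edges of $F$, one must extract \emph{either} many cycles pairwise meeting only in those fixed vertices --- a marguerite, equivalently a $k*K_{3}$ minor, or a large set of parallel paths --- \emph{or} many pairwise vertex-disjoint cycles, i.e.\ a $k\cdot K_{3}$ minor. This is an Erd\H{o}s--P\'osa-type dichotomy that does not follow formally from the hypotheses used so far, and obtaining it with a polynomial dependence on $k$, as the stated gap function demands, is precisely what necessitates the iterated pigeonhole argument; the circumference bound furnished by $C_{k}$-minor-freeness is essential there to prevent the recursion from blowing up.
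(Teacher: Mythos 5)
The lower--bound half is correct and matches the paper's \autoref{obs_univ_obs}: the three patterns are bridgeless, so \autoref{cor_id_vc} computes $\idf$ directly as $\vc$, and all three grow linearly.

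The upper--bound half takes a genuinely different route from the paper (via a maximum matching of the forest $F:=H-S$ and K\"onig's theorem, rather than the paper's direct structural bound on the subtrees $T^C$), but it has a gap that you yourself partially flag, and that gap is fatal to the claimed bound. The iteration step does not type-check quantitatively: you assert that each round ``loses a factor $\Ocal(k)$'' and that there are $\Ocal(k)$ rounds, and then conclude that $\Ocal(k^{2})$ matching edges survive. But multiplying a loss factor of $\Ocal(k)$ over $\Ocal(k)$ rounds gives $k^{\Ocal(k)}$, not $\Ocal(k^{2})$; nothing in the argument caps the cumulative loss polynomially. (The circumference bound $<k$ bounds the \emph{number} of rounds, not the \emph{product} of the per-round losses.) There is also a mislabelling in the dichotomy: $\Omega(k)$ internally disjoint $s$--$s'$ paths in $F$ do not yield a $k\cdot K_{3}$ minor --- the resulting structure is essentially a subdivided $K_{2,k}$, which has vertex cover $2$ and so cannot contain $k\cdot K_{3}$; the correct conclusion is that contracting one of the paths identifies $s$ with $s'$ and produces a $k*K_{3}$ minor. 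Finally, the clean-up ``for each $e\in M'$ we thus obtain a cycle of $H$ through $\{s,s'\}$ carrying $e$ and otherwise contained in $F$'' is not justified: bridgelessness gives a cycle through $e$, and your pigeonhole fixes neighbours $s,s'$ of the two $F$-path endpoints, but it does not supply an $s$--$s'$ connection avoiding $F$, so the ``otherwise in $F$'' part of the cycle may itself re-enter $F$ and consume more matching edges.

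For comparison, the paper avoids all of this by working with the forest $F=G-X$ directly rather than its matching number: for each component $C$ of $G[X]$ it prunes each neighbouring tree $T$ to the minimal subtree $T^C$ spanning $N_G(C)\cap V(T)$, then uses $C_k$-minor-freeness to bound $\mathrm{diam}(T^C)\le k-2$, $k*K_{3}$-minor-freeness to bound the number of parents of leaves of $T^C$ by $k$ and the number of trees $|\Tcal_C|$ by $k-1$, and a separate counting of the ``inter-component'' edges $E'$ by $\Ocal(k^{4}\log^{2}k)$. Identifying $X$, the interiors of all $T^C$, and the endpoints of $E'$ into a single vertex leaves a star, and the count of identified vertices is $\Ocal(k^{4}\log^{2}k)$. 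This is a single-pass structural bound with no cascading pigeonholes, which is why it achieves the stated polynomial gap where your iteration does not.
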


\paragraph{Organization of the paper.} In \autoref{sec-prelim}
 we provide some preliminaries and basic observations about the identification operation. In \autoref{sec_hard_kernel} we prove the \NP-completeness and provide a linear kernel for {\sc Identification to Forest} parameterized by the solution size. In~\autoref{sec:obs} we bound the size of the obstructions of $\Fcal^{(k)}$. Finally, in~\autoref{sec:univ_obs} we find the universal obstructions of $\Fcal^{(k)}$.

\section{Preliminaries}
\label{sec-prelim}

\paragraph{Sets and integers.} We denote by $\mathbb{N}$ the set of non-negative integers. Given two integers $p, q,$ where $p \leq q,$ we denote by $[p, q]$ the set $\{p, \dots, q\}.$ For an integer $p \geq 1,$ we set $[p] = [1, p]$ and $\mathbb{N}_{\geq p} = \mathbb{N} \setminus [0, p - 1].$ For a set $S,$ we denote by $2^{S}$ the set of all subsets of $S$ and by $S \choose 2$ the set of all subsets of $S$ of size $2.$ 

\subsection{Basic concepts on graphs} A graph $G$ is a pair $(V, E)$ where $V$ is a finite set and $E \subseteq {V \choose 2},$ i.e., all graphs in this paper are undirected, finite, and without loops or multiple edges.
We refer the reader to~\cite{diestel2016graph}  for any undefined terminology on graphs.
For
an edge $\{x,y\}$, we use  the simpler notation $xy$ (or $yx$).
We also define $V(G) = V$ and $E(G) = E.$ Given $A,B\subseteq V(G)$, we also denote by $E_G(A,B)$ the set of edges of $G$ with one endpoint in $A$ and the other in $B$.
Given a vertex $v \in V(G),$ we denote by $N_{G}(v)$ the set of vertices of $G$ that are adjacent to $v$ in $G.$
Also, given a set $S \subseteq V(G),$ we set $N_{G}(S) = \bigcup_{v \in S} N_{G}(v) \setminus S$.
For $S \subseteq V(G),$ we set $G[S] = (S, E \cap {S \choose 2})$ and use $G - S$ to denote $G[V(G) \setminus S].$ We say that $G[S]$ is an \emph{induced \emph{(}by $S$\emph{)} subgraph} of $G$.
We denote by $\cc(G)$ the connected components of $G$.
A \emph{bridge} (resp. \emph{cut~vertex}) in $G$ is an edge (resp. a vertex) whose removal increases the number of connected components of $G$.
Given $k\in\bN_{\ge1}$, we say that a graph $G$ is \emph{$k$-connected} if, for any set $X$ of size at most $k-1$, $G-X$ is connected.
Given two graphs $G_{1}$ and $G_{2},$ we denote $G_{1}\cup G_{2}=(V(G_{1})\cup V(G_{2}),E(G_{1})\cup E(G_{2})).$

\paragraph{Minors.}
The \emph{contraction} of an edge $e = uv \in E(G)$ results in a graph $G/e$ obtained from $G \setminus \{ u, v \}$ by adding a new vertex $w$ adjacent to all vertices in the set $N_{G}(\{u,v\}).$
Vertex $w$ is called the \emph{heir} of $e$.
A graph $H$ is a \emph{minor} of a graph $G$ if $H$ can be obtained from a subgraph of $G$ after a series of edge contractions.
Equivalently,  $H$ is a minor of $G$ if there is a collection $\Scal=\{S_v\mid v\in V(H)\}$ of pairwise-disjoint connected subsets of $V(G)$ such that, for each edge $xy\in E(H)$, the set $S_x\cup S_y$ is connected in $V(G)$. $\Scal$ is called a \emph{model} of $H$ in $G$.

We say that a graph class $\Hcal$ is \emph{hereditary} (resp. \emph{monotone})
if it contains all the induced subgraphs (resp. subgraphs) of its graphs.
A class $\Hcal$ is \emph{closed under disjoint union}
if it contains the disjoint union of every two of its graphs.
Finally, $\Hcal$ is \emph{closed under 1-clique-sums} if it is closed under disjoint union and, for any $G,G'\in\Hcal$, the graph obtained by identifying a vertex of $G$ with a vertex of $G'$ also belongs to $\Hcal$.

\subsection{Identification operation}
\label{sec-prelim-identification}

\paragraph{Partitions.}
Given $p\in\bN$, a \emph{$p$-partition} of a set $X$ is a set $\{X_1,\ldots,X_p\}$ of non-empty pairwise disjoint subsets of $X$ such that $X=\bigcup_{i\in[p]}X_i$.
A \emph{partition} of $X$, denoted by $\Pcal(X)$, is a $p$-partition of $X$ for some $p\in\bN$.
Given two sets $X,A$, and $\Xcal=\{X_1,\ldots,X_p\}\in\Pcal(X)$, $\Xcal\cap A$ denotes the partition $\{X_1\cap A,\ldots,X_p\cap A\}$ of $X\cap A$.
Given two disjoint sets $X$ and $Y$, and $\Xcal=\{X_1,\ldots,X_p\}\in\Pcal(X)$ and $\Ycal=\{Y_1,\ldots,Y_q\}\in\Pcal(Y)$, $\Xcal\cup\Ycal$ denotes the partition $\{X_1,\ldots,X_p,Y_1,\ldots,Y_q\}\in\Pcal(X\cup Y).$
Given a graph $G$, we define $\Pcal(G):=\{\Xcal\in\Pcal(X)\mid X\subseteq V(G)\}$.
Given $\Xcal=\{X_1,\ldots,X_p\}\in\Pcal(G)$, we set $\bigcup\Xcal:=\bigcup_{i\in[p]}X_i$, and the {order}
of $\Xcal$ is the size of $\bigcup\Xcal$.

Let $G$ be a graph and $X\subseteq V(G)$.
The \emph{identification of $X$} in $G$, denoted by $G\ii X$, is the {result of the} operation that transforms $G$ into a graph $G'$ obtained from $G$ by deleting $X$  and adding instead a new vertex $x$ adjacent to every vertex in $N_G(X)$.
The vertex $x$ is called the \emph{heir} of $X$.
Note that, if $X=\{u,v\}$ with $uv\in E(G)$, then this corresponds to the contraction of $uv$.

Let $\Xcal=\{X_1,...,X_p\}\in\Pcal(G)$.
The \emph{identification of $\Pcal$} in $G$ is the graph $G\ii \Pcal:=G\ii X_1\ii X_2\ii ...\ii X_p$.
Note that the ordering 
of the  members of the partition does not matter in this definition.

\paragraph{Identification to $\Hcal$.}
Let $\Hcal$ be a graph class and $G$ be a graph.
We say that a partition $\Xcal\in\Pcal(G)$ is an \emph{id-$\Hcal$ partition of $G$} if $G\ii\Xcal\in\Hcal$.
A {{\emph{minimum}} id-$\Hcal$ partition} of $G$ is an id-$\Hcal$ partition of $G$ of minimum order.
As explained in the introduction,
the problem of {\sc Identification to $\Hcal$}  asks,
given a graph $G$ and a non-negative integer  $k$,
whether  $G$ admits an id-$\Hcal$ partition of order at most $k$. We denote by $\Hcal^{(k)}$ the set of graphs that admit an id-$\Hcal$ partition of order $k$.

\subsection{Minor-closedness}
\label{minor_obs}
As said in the introduction, identifications preserve minor-closedness.

\begin{lemma}\label{lem_minor}
If $\Hcal$ is a minor-closed graph class, then for every  $k\in\Nbbb$, the class $\Hcal^{(k)}$ is minor-closed.
\end{lemma}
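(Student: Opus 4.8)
The plan is to show that $\Hcal^{(k)}$ is closed under taking minors, which by definition means closed under edge deletions, vertex deletions, and edge contractions. Since a minor is obtained by a sequence of such operations, it suffices to verify that each single operation preserves membership in $\Hcal^{(k)}$. So let $G\in\Hcal^{(k)}$, fix an id-$\Hcal$ partition $\Xcal=\{X_1,\ldots,X_p\}$ of $G$ of order at most $k$, write $x_i$ for the heir of $X_i$ in $G\ii\Xcal$, and let $G'$ be obtained from $G$ by one minor operation. The goal in each case is to produce an id-$\Hcal$ partition $\Xcal'$ of $G'$ of order at most $k$ such that $G'\ii\Xcal'$ is a minor of $G\ii\Xcal$; minor-closedness of $\Hcal$ then gives $G'\ii\Xcal'\in\Hcal$, hence $G'\in\Hcal^{(k)}$.

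The three cases: \textbf{(i) Edge deletion.} If we delete an edge $e=uv$ from $G$, keep the same partition $\Xcal$ (restricted to $V(G')=V(G)$, which changes nothing). One checks that $G'\ii\Xcal$ is obtained from $G\ii\Xcal$ by deleting the corresponding edge (or doing nothing, if $u,v$ lie in the same part, or if the edge between the heirs is still supported by another edge), hence is a subgraph of $G\ii\Xcal$, so a minor. \textbf{(ii) Vertex deletion.} If we delete a vertex $v$, set $\Xcal'=\Xcal\setminus\{v\}$ in the natural sense: remove $v$ from whichever part contains it, and drop that part entirely if it becomes empty. The order does not increase. Again $G'\ii\Xcal'$ is obtained from $G\ii\Xcal$ by deleting $v$ (if $v\notin\bigcup\Xcal$) or by deleting the heir $x_i$ of the part that becomes empty, or is a subgraph of it otherwise — in all cases a minor of $G\ii\Xcal$. \textbf{(iii) Edge contraction.} Let $G'=G/e$ with $e=uv$ and heir $w$. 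The subcases depend on how $u,v$ meet $\bigcup\Xcal$: if neither $u$ nor $v$ is in $\bigcup\Xcal$, take $\Xcal'=\Xcal$ (on the same abstract sets) and observe $G'\ii\Xcal'=(G\ii\Xcal)/e$; if exactly one of them, say $u$, lies in a part $X_i$, then after contraction $w$ is effectively identified into that part, so we may take $\Xcal'$ with $X_i$ replaced by $(X_i\setminus\{u\})\cup\{w\}$ — the order does not increase and $G'\ii\Xcal'=G\ii\Xcal$ (the identification already merged $u$, so contracting the edge to $v$ just means the heir absorbs the neighborhood of $v$, which can be mimicked); if both $u,v$ lie in parts of $\Xcal$, say $u\in X_i$ and $v\in X_j$, then either $i=j$, in which case $\Xcal'$ with $u,v$ replaced by $w$ has smaller order and $G'\ii\Xcal'=G\ii\Xcal$, or $i\neq j$, in which case we merge: $\Xcal'$ replaces $X_i,X_j$ by the single part $(X_i\cup X_j\setminus\{u,v\})\cup\{w\}$, the order strictly drops, and $G'\ii\Xcal'=(G\ii\Xcal)/x_ix_j$ is a contraction of $G\ii\Xcal$, hence a minor.

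The main obstacle — or rather the point demanding care — is the bookkeeping in case (iii), specifically verifying the identity $G'\ii\Xcal' = (G\ii\Xcal)\text{-some minor}$ when the contracted edge straddles the boundary of $\bigcup\Xcal$. The key observation that makes all subcases go through cleanly is that identification and edge contraction ``commute'' in the appropriate sense: contracting an edge $uv$ where $u$ belongs to a part $X_i$ has the same effect as first enlarging $X_i$ to include $v$ and then identifying, because in both cases the resulting vertex's neighborhood is $N_G(X_i\cup\{v\})$. I would state this as a small auxiliary claim (identification commutes with contraction, and with deletion) and then the three cases become one-line deductions. An alternative, slicker route: since $\Hcal$ is minor-closed, it is in particular closed under disjoint union with isolated vertices and under the operations above; one can phrase the whole argument via models, noting that an id-$\Hcal$ partition together with a model of the target in $G\ii\Xcal$ combine into a model in $G$, and any minor operation on $G$ can be pushed through — but the case analysis above is more elementary and I would go with that.
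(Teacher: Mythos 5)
Your proof is correct in substance but takes a genuinely different route from the paper's. The paper gives a single, global argument: it fixes a model $\Scal=\{S_v\mid v\in V(H)\}$ of the minor $H$ in $G$, defines $Y_i=\{v\in V(H)\mid S_v\cap X_i\neq\emptyset\}$ (so $|\bigcup_iY_i|\le k$ by disjointness of branch sets), merges intersecting $Y_i$'s into a partition $\Zcal$ of $Y$, and then directly exhibits a model of $H\ii\Zcal$ inside $G\ii\Xcal$. You instead decompose the minor relation into single operations (edge deletion, vertex deletion, edge contraction) and check closure one operation at a time, propagating a partition of non-increasing order. Both are sound. The paper's model argument is more compact because it treats all operations simultaneously and pushes the bookkeeping into one formula for the image model; your case analysis is more elementary and makes the ``identification commutes with contraction'' phenomenon explicit, at the cost of several subcases. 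Note also that in your approach, merging two parts $X_i,X_j$ when a contracted edge straddles them plays exactly the role of merging intersecting $Y_i$'s in the paper's proof.

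One small inaccuracy: in case (iii), when exactly one endpoint $u$ of the contracted edge lies in a part $X_i$ and you replace $X_i$ by $(X_i\setminus\{u\})\cup\{w\}$, the resulting $G'\ii\Xcal'$ is \emph{not} equal to $G\ii\Xcal$ in general — it equals $(G\ii\Xcal)/x_iv$, i.e.\ the contraction of the edge between the heir $x_i$ and $v$ (for instance, $G=u$--$v$--$z$ with $X_i=\{u\}$: $G\ii\Xcal$ is a path on three vertices while $G'\ii\Xcal'$ is an edge). Since you only need $G'\ii\Xcal'$ to be a \emph{minor} of $G\ii\Xcal$, the argument survives, but the claim of equality should be weakened. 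It would also be cleaner to state and prove the auxiliary commutation claim you allude to, since it underlies several of the subcase equalities.
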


\begin{proof}
Let $G\in\Hcal^{(k)}$ and $H$ be a minor of $G$. Let us show that $H\in\Hcal^{(k)}$.
Let $\Xcal=\{X_1,\dots,X_p\}\in\Pcal(G)$ be an id-$\Hcal$ partition of $G$.
Given that $H$ is a minor of $G$, there is a model $\Scal=\{S_v\mid v\in V(H)\}$ of $H$ in $G$.
For $i\in[p]$, let $Y_i=\{v\in V(H)\mid S_v\cap X_i\ne\emptyset\}$.
Let $Y:=\bigcup_{i\in[p]}Y_i$.
Note that $|Y|\le|\bigcup\Xcal|\le k$. We want to show that the
partition $\Zcal$ of $Y$ induced by the $Y_i$s is an id-$\Hcal$ partition of $H$.
However, it is possible that $Y_i\cap Y_j\ne\emptyset$ for distinct $i,j\in[p]$, so $(Y_1,\dots,Y_p)$ is not a partition of $Y$.
The correct partition is $\Zcal=(Z_1,\dots,Z_q)\in\Pcal(Y)$ defined by merging the $Y_i$'s that intersect.
In other words, for each $i\in[p]$, there is $j\in[q]$ such that $Y_i\subseteq Z_j$, and if $Z_j\setminus Z_i\ne\emptyset$, then there exists $i'\in[p]$ such that $Y_{i'}\subseteq Z_j$ and $Y_i\cap Y_{i'}\ne\emptyset$.
Then $\{S_v\mid v\in V(H)\setminus Y\}\cup\bigcup_{j\in[q]}\{\bigcup_{v\in Z_j} S_v\setminus X\cup\bigcup_{i\in[p],Y_i\subseteq Z_j}\{x_i\}\}$ is a model of $H\ii\Zcal$ in $G\ii\Xcal$, where $x_i$ is the heir of $X_i$.
Given that $\Hcal$ is minor-closed, $H\ii\Zcal\in\Hcal$, and therefore $H\in\Hcal^{(k)}$.
\end{proof}

\section{Hardness result and kernel}
\label{sec_hard_kernel}

{In this section we exploit the relation between {\sc Identification to Forest} and {\sc Vertex Cover} to present a hardness result and a linear kernel for {\sc Identification to Forest}, building on the corresponding results for {\sc Vertex Cover}.

\subsection{Dealing with bridges}
We present a series of observations concerning $k$-identifications.

\begin{observation}\label{obs:disconnect}
Let $\Hcal$ be a {hereditary}  graph class 
and $G$ be a graph.
Then, for every $\Xcal\in\Pcal(G)$, if $G\ii\Xcal\in\Hcal$, then for each $H\in\cc(G)$, $H\ii(\Xcal\cap V(H))\in\Hcal$.
\end{observation}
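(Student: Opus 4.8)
The plan is to show that, up to isomorphism, $H\ii(\Xcal\cap V(H))$ is an induced subgraph of $G\ii\Xcal$, and then to quote the hereditariness of $\Hcal$. Write $\Xcal=\{X_1,\dots,X_p\}$, put $X=\bigcup\Xcal$, and let $x_i$ denote the heir of $X_i$ in $G\ii\Xcal$. Fix $H\in\cc(G)$ and set $I_H=\{i\in[p]\mid X_i\cap V(H)\neq\emptyset\}$, so that the parts of $\Xcal\cap V(H)$ are exactly the nonempty traces $X_i\cap V(H)$ with $i\in I_H$, and the heirs they produce in $H\ii(\Xcal\cap V(H))$ are in natural bijection with $\{x_i\mid i\in I_H\}$.

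The candidate is the set $A_H:=(V(H)\setminus X)\cup\{x_i\mid i\in I_H\}\subseteq V(G\ii\Xcal)$, together with the bijection $\phi$ from $V(H\ii(\Xcal\cap V(H)))$ to $A_H$ that fixes each surviving vertex of $V(H)\setminus X$ and sends the heir of $X_i\cap V(H)$ to $x_i$. I would then check that $\phi$ carries edges of $H\ii(\Xcal\cap V(H))$ to edges of $(G\ii\Xcal)[A_H]$ by a short case analysis: (i) two surviving vertices $u,v\in V(H)\setminus X$ are adjacent in either graph precisely when $uv\in E(G)$, since the identifications touch neither; (ii) a surviving vertex $u\in V(H)\setminus X$ is adjacent to $x_i$ in $G\ii\Xcal$ iff $u$ has a $G$-neighbour in $X_i$, and as such a neighbour must lie in the same component as $u$ this is equivalent to $u$ having a neighbour in $X_i\cap V(H)$, i.e.\ to the corresponding adjacency in $H\ii(\Xcal\cap V(H))$; (iii) an edge of $H\ii(\Xcal\cap V(H))$ between the heirs of $X_i\cap V(H)$ and $X_j\cap V(H)$ is witnessed by a $G$-edge lying inside $V(H)$ and hence gives $x_ix_j\in E(G\ii\Xcal)$. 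This shows $H\ii(\Xcal\cap V(H))$ is (isomorphic to) a subgraph of $(G\ii\Xcal)[A_H]$, and, when no part of $\Xcal$ meets two distinct components of $G$, it is exactly the induced subgraph $(G\ii\Xcal)[A_H]$, since then the converse of (iii) also holds (using $E_G(A,B)=E_H(A,B)$ for all $A,B\subseteq V(H)$).

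To finish, $(G\ii\Xcal)[A_H]$, being an induced subgraph of $G\ii\Xcal\in\Hcal$, itself lies in $\Hcal$; hence so does $H\ii(\Xcal\cap V(H))$ — immediately when it coincides with $(G\ii\Xcal)[A_H]$, and in general because it is a subgraph of it and the classes relevant here, in particular $\Fcal$, are also closed under subgraphs. The step I expect to be the main obstacle is the heir-to-heir case (iii): the natural guess that $H\ii(\Xcal\cap V(H))$ is always literally an induced subgraph of $G\ii\Xcal$ is delicate, because a part $X_i$ that meets several components of $G$ can make $x_i$ inherit adjacencies ``across'' components that do not correspond to edges inside $H$; pinning this down — and observing that it is harmless when the target class is monotone, as is the case for $\Fcal$ — is the only nontrivial ingredient, the remaining verifications with heirs and surviving vertices being routine.
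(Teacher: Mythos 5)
Your hesitation at step (iii) is well placed, and in fact more than a ``delicate'' technicality: the observation as stated, for merely \emph{hereditary} $\Hcal$, is false, and your argument pins down exactly why. Take $\Hcal$ to be the (hereditary, non-monotone) class of complete graphs, let $G$ have two components $H_1$ (a single edge $ab$) and $H_2$ (two isolated vertices $c,d$), and let $\Xcal=\{\{a,c\},\{b,d\}\}$. Then $G\ii\Xcal\cong K_2\in\Hcal$, while $H_2\ii(\Xcal\cap V(H_2))\cong\overline{K_2}\notin\Hcal$: the heir-to-heir edge $x_1x_2$ in $G\ii\Xcal$ is ``borrowed'' from $H_1$, which is precisely the phenomenon you flagged. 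The paper's proof simply asserts $H\ii(\Xcal\cap V(H))=G\ii\Xcal-(V(G)\setminus V(H))$, but this equality can fail on both counts you anticipate: the right-hand side retains heirs $x_i$ of parts $X_i$ disjoint from $V(H)$, and it can contain heir-to-heir edges not present on the left. What is true, and what your case analysis correctly establishes, is that $H\ii(\Xcal\cap V(H))$ is a (spanning) subgraph of the induced subgraph $(G\ii\Xcal)[A_H]$, so the conclusion holds whenever $\Hcal$ is \emph{monotone}; since every use of the observation in the paper has $\Hcal=\Fcal$, which is monotone, nothing downstream is affected, but the hypothesis ``hereditary'' in the statement should really be ``monotone'' (or the equality should be restricted to the case where no part of $\Xcal$ straddles two components). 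Your write-up is thus correct and, on this point, more careful than the paper's own proof.
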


\begin{proof}
Let $H\in\cc(G)$.
Given that $H\ii(\Xcal\cap V(H))=G\ii\Xcal-(V(G)\setminus V(H))$ and that $\Hcal$ is hereditary, we conclude that $H\ii(\Xcal\cap V(H))\in\Hcal$.
\end{proof}

\begin{observation}\label{obs:union}
Let $\Hcal$ be a graph class that is closed under disjoint union
and $G$ be a graph.
Then, for each $H\in\cc(G)$ and for each $\Xcal_H\in\Pcal(H)$, if $H\ii\Xcal_H\in\Hcal$, then $G\ii\bigcup_{H\in\cc(G)}\Xcal_H{\in \Hcal}$.
\end{observation}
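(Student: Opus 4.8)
The plan is to show that the combined identification simply performs, within each connected component separately, the identification already performed there, so that $G\ii\bigcup_{H\in\cc(G)}\Xcal_H$ is nothing but the disjoint union of the graphs $H\ii\Xcal_H$ over $H\in\cc(G)$; closure under disjoint union then finishes the argument.

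In more detail, write $\Xcal:=\bigcup_{H\in\cc(G)}\Xcal_H$. Since each $\Xcal_H$ is a partition of a subset of $V(H)$, every part $S\in\Xcal$ is contained in the vertex set of a single component of $G$. First I would recall from \autoref{sec-prelim-identification} that the outcome of $G\ii\Xcal$ does not depend on the order in which the parts of $\Xcal$ are processed, so we may process them component by component. The key elementary fact is that for $S\subseteq V(H)$ we have $N_G(S)=N_H(S)\subseteq V(H)$; hence the heir of $S$ is a vertex whose neighbours all lie inside (the current image of) $H$. Therefore no identification ever introduces an edge between two distinct components, and of course none is destroyed since there were none. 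A straightforward induction on $|\Xcal|$ then yields $G\ii\Xcal=\bigcup_{H\in\cc(G)}\bigl(H\ii\Xcal_H\bigr)$, where the right-hand side is a disjoint union.

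To conclude, $\cc(G)$ is finite because $G$ is finite, and closure of $\Hcal$ under the disjoint union of two graphs extends, by a trivial induction, to the disjoint union of any finite family of graphs in $\Hcal$. Since $H\ii\Xcal_H\in\Hcal$ for every $H\in\cc(G)$ by hypothesis, we obtain $G\ii\Xcal\in\Hcal$. I do not expect any genuine obstacle here: the only step needing a moment's care is the identity $G\ii\Xcal=\bigcup_{H\in\cc(G)}(H\ii\Xcal_H)$, i.e.\ the fact that identifying a vertex subset confined to one component commutes with taking disjoint unions, which is immediate from $N_G(S)=N_H(S)$ for $S\subseteq V(H)$.
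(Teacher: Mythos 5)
Your proof is correct and takes essentially the same route as the paper: both rest on the identity $G\ii\bigcup_{H\in\cc(G)}\Xcal_H=\bigcup_{H\in\cc(G)}(H\ii\Xcal_H)$ followed by closure under disjoint union. You simply spell out the justification (via $N_G(S)=N_H(S)$ for $S\subseteq V(H)$ and order-independence of identifications) that the paper leaves implicit.
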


\begin{proof}
Given that $\Hcal$ is closed {under disjoint union} and
$G\ii\bigcup_{H\in\cc(G)}=\bigcup_{H\in\cc(G)}H\ii\Xcal_H$, we conclude that {$G\ii\bigcup_{H\in\cc(G)}\Xcal_H {\in \Hcal}$}.
\end{proof}

\begin{lemma}\label{Feqbridge}
Let $G$ be a graph and $G^{\sf b}$ be the graph obtained from $G$ after removing all bridges.
Then $\idf(G)=\idf(G^{\sf b})$.
\end{lemma}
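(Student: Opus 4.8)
The plan is to establish $\idf(G^{\sf b})\le\idf(G)$ and $\idf(G)\le\idf(G^{\sf b})$ separately; the first is routine and the second is the heart of the matter. For the easy inequality, I would first record the monotonicity fact that if $H$ is a spanning subgraph of $G$ (so $V(H)=V(G)$ and $E(H)\subseteq E(G)$) then for every $\Xcal\in\Pcal(G)=\Pcal(H)$ the graph $H\ii\Xcal$ is a spanning subgraph of $G\ii\Xcal$ — this is immediate by inspecting the two edge sets, since every edge of $H\ii\Xcal$ is either an edge of $H\subseteq G$ between non-identified vertices or joins a heir $x_i$ to a vertex of $N_H(X_i)\subseteq N_G(X_i)$. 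As a spanning subgraph of a forest is a forest, every id-$\Fcal$ partition of $G$ is an id-$\Fcal$ partition of $H$ of the same order, so $\idf(H)\le\idf(G)$; taking $H=G^{\sf b}$ gives $\idf(G^{\sf b})\le\idf(G)$.

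For the hard inequality I would induct on the number $\beta(G)$ of bridges of $G$. The base case $\beta(G)=0$ is trivial, as then $G=G^{\sf b}$. For the step, fix a bridge $e=uv$ of $G$ and let $C_u,C_v$ be the (distinct, since $e$ is a bridge) components of $G-e$ containing $u$ and $v$; the component of $G$ meeting $e$ is exactly $G'=C_u\cup C_v+uv$, while $G$ and $G-e$ have the same remaining components. Combining \autoref{obs:disconnect} and \autoref{obs:union} shows that $\idf$ is additive over connected components (the order of a partition is the sum of the orders of its restrictions to the components), so it suffices to prove $\idf(G')=\idf(C_u)+\idf(C_v)$. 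Granting this, $\idf(G)=\idf(G-e)$; moreover an edge $f\neq e$ lies on a cycle of $G$ iff it lies on a cycle of $G-e$ (a cycle of $G$ through $f$ cannot use the bridge $e$), so the bridges of $G-e$ are exactly those of $G$ other than $e$, whence $(G-e)^{\sf b}=G^{\sf b}$ and $\beta(G-e)=\beta(G)-1$; applying the induction hypothesis to $G-e$ then finishes the proof.

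To prove $\idf(G')=\idf(C_u)+\idf(C_v)$: for ``$\ge$'', note $C_u\cup C_v$ is a spanning subgraph of $G'$, so by the monotonicity fact and component-additivity $\idf(C_u)+\idf(C_v)=\idf(C_u\cup C_v)\le\idf(G')$. For ``$\le$'', take minimum id-$\Fcal$ partitions $\Ycal_u$ of $C_u$ and $\Ycal_v$ of $C_v$; since $\Ycal_u$ and $\Ycal_v$ involve disjoint vertex sets and the only edge of $G'$ between $V(C_u)$ and $V(C_v)$ is $uv$, a direct computation shows that $G'\ii(\Ycal_u\cup\Ycal_v)$ is the disjoint union of the forests $C_u\ii\Ycal_u$ and $C_v\ii\Ycal_v$ together with one extra edge joining a vertex of the first to a vertex of the second — hence still a forest — and its order is $\idf(C_u)+\idf(C_v)$.

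The main obstacle is precisely this last inequality $\idf(G)\le\idf(G^{\sf b})$: an optimal identification for $G^{\sf b}$ cannot simply be transplanted to $G$, because a part that ``straddles'' a former bridge may create a new cycle once that bridge is reinstated. The point of the induction, together with localizing the solution to the two sides of the removed bridge, is to sidestep this phenomenon: after the split, the bridge $uv$ merely glues two disjoint forests, which is harmless. The auxiliary ingredients — monotonicity under spanning subgraphs, additivity of $\idf$ over components, and the elementary bridge bookkeeping giving $(G-e)^{\sf b}=G^{\sf b}$ — are all straightforward.
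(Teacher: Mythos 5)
Your proposal is correct and follows essentially the same approach as the paper: both reduce to a single bridge $e$ of a connected graph, split into the two components of $G-e$, combine minimum id-$\Fcal$ partitions of each side, and observe that the result remains a forest because no part of the combined partition straddles $e$. Your version merely makes explicit the induction on the number of bridges and the bookkeeping $(G-e)^{\sf b}=G^{\sf b}$, which the paper compresses into the phrase ``repeatedly applying this argument,'' and derives the easy direction via spanning-subgraph monotonicity rather than invoking Lemma~\ref{lem_minor} directly.
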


\begin{proof}
Let $k:=\idf(G)$.
By definition, $G\in\Fcal^{(k)}$.
By \autoref{lem_minor}, $\Fcal^{(k)}$ is minor-closed, so $G-e\in\Fcal^{(k)}$ for any edge $e$ of $G$.
Therefore, $\idf(G-e)\leq \idf(G)$.

By \autoref{obs:disconnect} and \autoref{obs:union}, we may assume without loss  of generality that $G$ is connected.
Let $e$ be a bridge of $G$.
Let $G_1$ and $G_2$ be the two connected components of $G-e$.
For $i\in[2]$, let $\Xcal_i\in\Pcal(X_i)$ be a minimum id-$\Fcal$ partition of $G_i$.
By \autoref{obs:union}, $(G-e)\ii\Xcal\in\Fcal$ where $\Xcal=\Xcal_1\cup\Xcal_2$.
Suppose toward a contradiction that $G\ii\Xcal$ contains a cycle $C$.
Then, given that $(G-e)\ii\Xcal$ is acyclic, it implies that $e$ is an edge of $C$.
Given that no part of $\Xcal$ contains vertices of both $G_1$ and $G_2$, it implies that $e$ is already an edge of a cycle. This contradicts the fact that $e$ is a bridge. The lemma follows by repeatedly applying this argument as long as there is a bridge.
\end{proof}

\begin{lemma}\label{VCeq}
Let $G$ be a bridgeless graph.
Then $\idf(G)=\vc(G)$.
\end{lemma}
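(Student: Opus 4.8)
The plan is to prove the two inequalities $\idf(G)\le\vc(G)$ and $\vc(G)\le\idf(G)$ separately, noting in advance that only the second of them will use that $G$ is bridgeless. For the inequality $\idf(G)\le\vc(G)$ (which in fact holds for every graph), I would take a minimum vertex cover $S$ of $G$ and consider the one-part partition $\Pcal=\{S\}$. Since $S$ is a vertex cover, $G-S$ is edgeless, so $G\ii\Pcal$ is obtained from the independent set $V(G)\setminus S$ by adding a single new vertex, the heir of $S$, adjacent only to vertices of that set; this is a star together with isolated vertices, hence a forest. Thus $\Pcal$ is an id-$\Fcal$ partition of order $|S|=\vc(G)$, giving $\idf(G)\le\vc(G)$.

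For the reverse inequality, I would fix a minimum id-$\Fcal$ partition $\Pcal=\{X_1,\dots,X_p\}$ of $G$, set $X=\bigcup\Pcal$ (so $|X|=\idf(G)$) and $F=G\ii\Pcal\in\Fcal$, and prove the claim that $X$ is a vertex cover of $G$; this immediately yields $\vc(G)\le|X|=\idf(G)$. The tool is the natural projection $\phi\colon V(G)\to V(F)$, equal to the identity on $V(G)\setminus X$ and sending each $X_i$ to its heir $x_i$. The basic property I would record is that if $ab\in E(G)$ then $\phi(a)=\phi(b)$ or $\phi(a)\phi(b)\in E(F)$; in particular, an edge of $G$ with both endpoints outside $X$ survives, unchanged, in $F$.

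Now suppose towards a contradiction that $X$ is not a vertex cover, i.e., there is an edge $uv\in E(G)$ with $u,v\notin X$. By the property above, $uv\in E(F)$, and since $F$ is a forest this edge is a bridge of $F$, so $u$ and $v$ lie in distinct components of $F-uv$. On the other hand, as $G$ is bridgeless, $uv$ is not a bridge of $G$, so $G-uv$ contains a $u$-$v$ path $P$. I would then push $P$ forward through $\phi$: consecutive vertices of $P$ map to equal or adjacent vertices of $F$, so (deleting immediate repetitions) we obtain a $u$-$v$ walk $W$ in $F$. Since $\phi^{-1}(u)=\{u\}$ and $\phi^{-1}(v)=\{v\}$, the walk $W$ could traverse the edge $uv$ only if $P$ had two consecutive vertices equal to $u$ and $v$, i.e., only if $P$ used the edge $uv$; but $P$ lives in $G-uv$, so this does not occur. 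Hence $W$ is a $u$-$v$ walk in $F-uv$, contradicting that $F-uv$ separates $u$ from $v$. This proves the claim, hence $\vc(G)\le\idf(G)$, and together with the first part the lemma follows.

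The delicate points to state carefully — and where I would be most careful — are exactly: (i) that an edge of $G$ whose endpoints both avoid $X$ genuinely persists through the sequence of identifications defining $F$, so that $uv\in E(F)$ really is a bridge of a forest; and (ii) that the $\phi$-image of $P$ cannot accidentally use the edge $uv$, for which the facts that $P$ is a simple path with $u,v$ as its two endpoints and that $u,v\notin X$ are precisely what is needed. Everything else is bookkeeping. I would also remark that bridgelessness is essential for the reverse inequality: for any graph with a bridge (for instance a tree with at least one edge) one has $\idf(G)=0<\vc(G)$, in line with the identity $\idf(G)=\vc(G^{\sf b})$ announced in the introduction.
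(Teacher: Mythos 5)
Your proof is correct and follows essentially the same approach as the paper: the forward inequality identifies a minimum vertex cover to a single vertex to obtain a star plus isolated vertices, and the reverse inequality argues that $\bigcup\Xcal$ must be a vertex cover of $G$ because any edge of $G$ avoiding $\bigcup\Xcal$ would survive in $F$ and, via bridgelessness of $G$, force a cycle through it in the forest $F$. The paper phrases the reverse direction more tersely (coloring heirs red and the rest blue, and asserting there is no blue--blue edge in $F$), whereas you make the path-pushing argument explicit, but the underlying idea is identical.
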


\begin{proof}
Let $X$ be a vertex cover of $G$.
Then $G\ii X$ is a star (if $G$ is edgeless, a vertex is considered as a star).
Hence, $\{X\}\in\Pcal(G)$ is an id-$\Fcal$ partition of $G$.
So $\idf(G)\leq \vc(G)$.

Let $\Xcal$ be an id-$\Fcal$ partition of $G$.
Let $F:=G\ii\Xcal\in\Fcal$.
Let us color red the vertices of $F$ that are heirs of a part of $\Xcal$, and blue the other vertices of $F$.
Given that $G$ is bridgeless, $F$ contains no edge whose endpoints are both blue.
Hence, the red vertices form a vertex cover of $F$.
Therefore, $X$ is a vertex cover of $G$.
So $\vc(G)\leq \idf(G)$.
\end{proof}

Then we get the main result of this section as a direct corollary of \autoref{Feqbridge} and \autoref{VCeq}.

\begin{lemma}\label{cor_id_vc}
Let $G$ be a graph and $G^{\sf b}$ be the graph obtained from $G$ after removing all bridges.
Then $\idf(G)=\vc(G^{\sf b})$.
\end{lemma}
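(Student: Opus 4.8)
The plan is to obtain the statement simply by chaining the two preceding lemmas, the only point requiring a word of justification being that $G^{\sf b}$ is indeed a bridgeless graph, so that \autoref{VCeq} is applicable to it.

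First I would check that $G^{\sf b}$ is bridgeless. Suppose $e\in E(G^{\sf b})$ is an edge; since $e$ is not a bridge of $G$ (all bridges of $G$ were removed), $e$ lies on some cycle $C$ of $G$. Every edge of $C$ lies on the cycle $C$, hence none of them is a bridge of $G$, so all edges of $C$ survive in $G^{\sf b}$; thus $C$ is a cycle of $G^{\sf b}$ through $e$, and $e$ is not a bridge of $G^{\sf b}$. As $e$ was arbitrary, $G^{\sf b}$ has no bridges. (This also confirms that removing all bridges of $G$ at once coincides with the iterative bridge-removal process used implicitly in the proof of \autoref{Feqbridge}.)

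Now apply \autoref{Feqbridge} to get $\idf(G)=\idf(G^{\sf b})$, and then \autoref{VCeq}, which applies precisely because $G^{\sf b}$ is bridgeless, to get $\idf(G^{\sf b})=\vc(G^{\sf b})$. Combining the two equalities yields $\idf(G)=\vc(G^{\sf b})$, as desired.

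There is no real obstacle: the content has already been carried by \autoref{Feqbridge} and \autoref{VCeq}, and the present statement is a bookkeeping corollary. The only thing one must not overlook is the above observation that deleting all bridges cannot create new bridges, which is what licenses the use of \autoref{VCeq} on $G^{\sf b}$.
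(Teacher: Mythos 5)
Your proof is correct and takes exactly the same route as the paper, which presents the statement as a direct corollary of \autoref{Feqbridge} and \autoref{VCeq}; your added check that $G^{\sf b}$ is itself bridgeless (so that \autoref{VCeq} applies) is a sound and worthwhile observation that the paper leaves implicit.
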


\subsection{{\sf NP}-completeness}

Before proving the \NP-completeness of  {\sc Identification to Forest}, we first need the following lemma.

\begin{lemma}\label{VCbless}
Let $k\in\bN$.
Let $G$ be a graph.
Then there is a bridgeless graph $G'$ with $|V(G)|+1$ vertices such that $G\in\Vcal_k$ if and only if $G'\in\Vcal_{k+1}$.
Moreover, $G'$ can be constructed in linear time.
\end{lemma}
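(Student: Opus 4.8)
The plan is to give an explicit construction of $G'$ that does not depend on $k$ and that in fact satisfies the stronger property $\vc(G')=\vc(G)+1$ whenever $G$ has at least one edge; the claimed equivalence with $\Vcal_k$ then follows immediately, and the edgeless case is handled separately as a triviality. First I would let $I\subseteq V(G)$ be the set of isolated vertices of $G$ and define $G'$ to be the graph on $V(G)\cup\{v^{*}\}$, with $v^{*}$ a new vertex, whose edge set is $E(G)$ together with all edges $v^{*}u$ for $u\in V(G)\setminus I$. Then $|V(G')|=|V(G)|+1$, and $G'$ is computable in linear time: one pass to detect $I$, then add $v^{*}$ and at most $|V(G)|$ incident edges.

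Next I would verify that $G'$ is bridgeless. Its only edges are the edges of $G$ and the edges incident to $v^{*}$, and the vertices of $I$ stay isolated, so they create no bridges. For an edge $xy\in E(G)$, both $x$ and $y$ lie in $V(G)\setminus I$, hence $v^{*}x,v^{*}y\in E(G')$ and $xy$ lies on the triangle $x,y,v^{*}$. For an edge $v^{*}u$ with $u\in V(G)\setminus I$, the vertex $u$ has a neighbour $w$ in $G$, which is itself non-isolated, so $v^{*}w\in E(G')$ and $v^{*}u$ lies on the triangle $v^{*},u,w$. Thus every edge of $G'$ lies on a cycle, so $G'$ has no bridge.

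Then I would compute $\vc(G')$. Since isolated vertices are irrelevant to the vertex cover number, $\vc(G)=\vc(G-I)$ and $\vc(G')=\vc(G'-I)$, and $G'-I$ is exactly $G-I$ with a universal vertex $v^{*}$ added. If $G$ has an edge, then $G-I$ is nonempty and has an edge (deleting isolated vertices removes no edges and creates no new isolated ones). For such a graph $H$, writing $H^{+}$ for $H$ with a universal vertex added, I would prove $\vc(H^{+})=\vc(H)+1$: for $\le$, add the universal vertex to a minimum vertex cover of $H$; for $\ge$, a vertex cover $T$ of $H^{+}$ either contains the universal vertex, in which case $T$ minus that vertex covers $H$ and $|T|\ge\vc(H)+1$, or it does not, in which case $T\supseteq V(H)$ and $|T|\ge|V(H)|\ge\vc(H)+1$, using $\vc(H)\le|V(H)|-1$ (all but one vertex is a cover). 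Hence $\vc(G')=\vc(G)+1$, so $\vc(G)\le k\iff\vc(G')\le k+1$, i.e.\ $G\in\Vcal_k\iff G'\in\Vcal_{k+1}$.

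Finally, in the degenerate case where $G$ has no edge, $v^{*}$ is isolated in $G'$, both graphs are edgeless, $\vc(G)=\vc(G')=0$, and the equivalence holds because both sides are true for every $k\in\bN$. I expect the only mildly delicate point to be the bookkeeping around isolated vertices: a naive ``join $v^{*}$ to all of $V(G)$'' fails to be bridgeless precisely because an isolated vertex of $G$ would become a pendant vertex of $G'$, which is why $v^{*}$ is joined only to the non-isolated vertices; everything else is routine.
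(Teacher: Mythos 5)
Your proposal is correct and uses essentially the same construction and argument as the paper: add a new vertex adjacent to all non-isolated vertices of $G$, observe this is bridgeless, and verify the vertex-cover equivalence (you phrase it slightly more cleanly as $\vc(G')=\vc(G)+1$ when $G$ has an edge, but the case analysis on whether the new vertex lies in the cover is the same). You also flesh out the bridgelessness check that the paper leaves as "clearly," which is a small but welcome addition.
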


\begin{proof}
Let us construct a bridgeless graph $G'$ from $G$.
Let $I$ be the set of isolated vertices of $G$.
We add a new vertex $v$ to $G$ and add an edge between $v$ and every vertex of $G-I$. The constructed graph $G'$ is clearly bridgeless.

Let us check that $G\in\Vcal_k$ if and only if $G'\in\Vcal_{k+1}$.
We assume that $G$ has at least one edge, otherwise the claim is trivially true.
Suppose the $G\in\Vcal_k$ and let $X$ be a vertex cover of $G$ of size at most $k$.
Then $X\cup\{v\}$ is a vertex cover of $G$ so $G'\in\Vcal_{k+1}$.
Suppose now that $G'\in\Vcal_{k+1}$.
Let $Y$ be a vertex cover of $G'$ of size at most $k+1$.
If $v\in Y$, then $Y\setminus \{v\}$ is a vertex cover of $G$ of size at most $k$.
Otherwise $v\notin Y$.
It implies that $V(G)\setminus I\subseteq Y$.
But then, for any vertex $x$ of $G-I$, $N_{G-I}(x)\subseteq Y$.
Therefore, $Y\setminus \{x\}$ is a vertex cover of $G$ of size at most $k$.
Hence, $G\in\Vcal_k$.

Given that $G'$ can be constructed in linear time, the result follows.
\end{proof}

\begin{lemma}\label{lem_NP}
{\sc Identification to Forest} is \NP-complete.
\end{lemma}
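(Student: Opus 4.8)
The plan is to establish membership in \NP{} and then \NP-hardness by a reduction from \textsc{Vertex Cover}, using the structural identities already built up in this section. For membership: given an instance $(G,k)$ together with a candidate partition $\Xcal\in\Pcal(G)$ of order at most $k$, one can compute $G\ii\Xcal$ in polynomial time and check acyclicity in polynomial time; hence the problem is in \NP.

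For hardness, I would reduce from \textsc{Vertex Cover}, which is \NP-complete. Starting from an instance $(G,k)$ of \textsc{Vertex Cover}, first apply \autoref{VCbless} to obtain in linear time a bridgeless graph $G'$ on $|V(G)|+1$ vertices with $G\in\Vcal_k$ iff $G'\in\Vcal_{k+1}$, i.e. $\vc(G')\le k+1$ iff $\vc(G)\le k$. Output the instance $(G',k+1)$ of \textsc{Identification to Forest}. Since $G'$ is bridgeless, \autoref{VCeq} gives $\idf(G')=\vc(G')$, so $(G',k+1)$ is a \yes-instance of \textsc{Identification to Forest} exactly when $\vc(G')\le k+1$, which by \autoref{VCbless} holds exactly when $\vc(G)\le k$, i.e. when $(G,k)$ is a \yes-instance of \textsc{Vertex Cover}. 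The whole construction runs in polynomial (indeed linear) time, and the combination of membership in \NP{} and \NP-hardness yields \NP-completeness.

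Since \autoref{VCeq} and \autoref{VCbless} are already in hand, there is essentially no hard step here; the only point requiring a little care is the direction of the equivalence in \autoref{VCbless} (that no ``cheating'' vertex cover of $G'$ avoiding the new apex vertex $v$ can be smaller than $k+1$ in a way that breaks the correspondence), but that is precisely what \autoref{VCbless} already records. One could alternatively skip \textsc{Vertex Cover} and reduce instead from \textsc{Feedback Vertex Set} restricted to bridgeless graphs, or observe directly via \autoref{cor_id_vc} that \textsc{Identification to Forest} on bridgeless inputs \emph{is} \textsc{Vertex Cover}; but routing through \autoref{VCbless} keeps the reduction self-contained within this section and avoids having to argue that the bridgeless restriction of some known problem remains \NP-hard.
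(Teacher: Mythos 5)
Your plan matches the paper's proof essentially verbatim: membership via polynomial-time acyclicity checking of $G\ii\Xcal$, and hardness by reducing \textsc{Vertex Cover} through \autoref{VCbless} to obtain a bridgeless instance, then invoking \autoref{VCeq}. The only cosmetic difference is that the paper branches on whether $G$ already has bridges before applying \autoref{VCbless}, whereas you apply it unconditionally, which is equally valid.
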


\begin{proof}
Given a graph $G$ and a partition $\Xcal\in\Pcal(G)$, checking that $G\ii\Xcal\in\Fcal$ can obviously be done in linear time.
We reduce from {\sc Vertex Cover} that is \NP-hard \cite{Karp10redu}.
Let $G$ be a graph.
Let $b$ be the number of bridges in $G$.
If $b\geq 1$, by \autoref{VCbless}, there is a graph $G'$ with $|V(G)|+1$ vertices such that $G\in\Vcal_k$ if and only if $G'\in\Vcal_{k'}$ where $k':=k+1$.
If $b=0$, we set $G':=G$ and $k':=k$.
Since $G'\in\Vcal_{k'}$ is bridgeless, by \autoref{VCeq}, $G'\in\Fcal^{(k')}$.
Since $G'$ can be constructed in linear time and that $G\in\Vcal_k$ if and only if $G'\in\Fcal^{(k')}$, the result follows.
\end{proof}

\subsection{A kernel for {\sc Identification to Forest}}

The following kernelization result is known for {\sc Vertex Cover}.

\begin{proposition}[\!\!\cite{LiZ18aker,NemhauserT75vertex}]
\label{prop_kernel_VC}
Given an instance $(G,k)$ of {\sc Vertex Cover}, one can compute in polynomial time   an equivalent instance $(G',k')$ such that $|V(G')|\le2k'\le2k$.
\end{proposition}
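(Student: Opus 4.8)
The plan is to prove this via the classical Nemhauser--Trotter local optimization theorem, which yields exactly the claimed $2k$-vertex kernel. First I would set up the LP relaxation of {\sc Vertex Cover}: minimize $\sum_{v\in V(G)}x_v$ subject to $x_u+x_v\ge 1$ for every $uv\in E(G)$ and $x_v\ge 0$ for every $v\in V(G)$. A standard fact is that this LP has an optimal solution that is \emph{half-integral}, i.e., $x_v\in\{0,\tfrac12,1\}$ for all $v$, and such a solution is computable in polynomial time -- either by solving the LP directly, or combinatorially by taking a minimum vertex cover $C$ of the bipartite graph on two copies $v^1,v^2$ of each vertex with edges $u^1v^2,v^1u^2$ for each $uv\in E(G)$ (computed via K\"onig's theorem and bipartite matching) and setting $x_v=\tfrac12\cdot|\{v^1,v^2\}\cap C|$. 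Let $V_0,V_{1/2},V_1$ be the sets of vertices with $x_v$ equal to $0,\tfrac12,1$, respectively. Two immediate structural observations are recorded for later: $V_0$ is an independent set, and there is no edge between $V_0$ and $V_{1/2}$ (such an edge would have LP-slack $\tfrac12<1$).

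The heart of the argument is the \emph{persistence claim}: there is a minimum vertex cover $S^*$ of $G$ with $V_1\subseteq S^*$ and $S^*\cap V_0=\emptyset$. I would prove it by a local exchange. Start from an arbitrary minimum vertex cover $S$, set $A:=S\cap V_0$ and $B:=V_1\setminus S$, and consider $S':=(S\setminus A)\cup B$. Every edge incident to $V_0$ has its other endpoint in $V_1\subseteq S'$, and every edge incident to $B\subseteq V_1$ is covered by $S'$ as well, so $S'$ is a vertex cover. To show $|S'|\le|S|$, i.e.\ $|B|\le|A|$, perturb the optimal LP solution by setting $x'_v=x_v+\varepsilon$ for $v\in A$, $x'_v=x_v-\varepsilon$ for $v\in B$, and $x'_v=x_v$ otherwise; using the two structural observations about $V_0$ one checks that $x'$ is feasible for small $\varepsilon>0$, and since $\sum x'_v=\sum x_v+\varepsilon(|A|-|B|)$, optimality of $x$ forces $|A|-|B|\ge 0$. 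Hence $|S'|=|S|$, so $S'$ is a minimum vertex cover with the desired properties.

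With persistence in hand I would derive the kernel. From $S^*$, the set $S^*\cap V_{1/2}$ is a vertex cover of $G[V_{1/2}]$ of size $\mathrm{opt}(G)-|V_1|$; conversely, for any vertex cover $T$ of $G[V_{1/2}]$ the set $T\cup V_1$ is a vertex cover of $G$ (again using that $V_0$ is independent and non-adjacent to $V_{1/2}$), so $\mathrm{opt}(G)\le\mathrm{opt}(G[V_{1/2}])+|V_1|$. Therefore $\mathrm{opt}(G)=\mathrm{opt}(G[V_{1/2}])+|V_1|$, and $(G,k)$ is equivalent to $(G[V_{1/2}],k')$ with $k':=k-|V_1|$. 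Moreover, restricting the optimal LP solution of $G$ to $V_{1/2}$ is the all-$\tfrac12$ vector, while any LP solution of $G[V_{1/2}]$ extends (by $0$ on $V_0$, $1$ on $V_1$) to a feasible LP solution of $G$; optimality gives $\mathrm{LP}(G[V_{1/2}])=\tfrac12|V_{1/2}|$, hence in any yes-instance $k'\ge\mathrm{opt}(G[V_{1/2}])\ge\mathrm{LP}(G[V_{1/2}])=\tfrac12|V_{1/2}|$. The algorithm outputs: if $k'<0$ or $|V_{1/2}|>2k'$, a canonical no-instance of constant size (correct, since by the chain above $(G,k)$ is then a no-instance); otherwise $(G[V_{1/2}],k')$, for which $|V(G[V_{1/2}])|=|V_{1/2}|\le 2k'\le 2k$, as required. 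Every step is polynomial time, the only nontrivial computation being the bipartite matching producing the half-integral optimum.

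The main obstacle I anticipate is a clean proof of the persistence claim, and in particular verifying feasibility of the $\pm\varepsilon$ perturbation of the LP optimum -- this is precisely where the structural facts about $V_0$ are needed, and one must be careful about edges between $A$ and $B$ and about edges inside $V_1$. The remaining bookkeeping, translating $\mathrm{opt}(G)=\mathrm{opt}(G[V_{1/2}])+|V_1|$ into the parameter drop $k'=k-|V_1|$ together with the two no-instance fallbacks, is routine but must be stated precisely to land the bound $|V(G')|\le 2k'\le 2k$ rather than a weaker one.
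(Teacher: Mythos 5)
The paper does not prove this proposition; it simply cites \cite{NemhauserT75vertex} and \cite{LiZ18aker}. Your reconstruction follows the Nemhauser--Trotter local optimization route, which is precisely the first of the two cited sources, and it is correct in its essentials: the LP relaxation has a half-integral optimum computable via bipartite matching; the persistence lemma ($V_1$ can be taken inside, and $V_0$ outside, some optimum cover) reduces the instance to $G[V_{1/2}]$ with budget $k-|V_1|$; and the LP lower bound $\mathrm{LP}(G[V_{1/2}])=\tfrac12|V_{1/2}|$ gives the size guarantee $|V_{1/2}|\le 2k'$ in the surviving case. The second cited reference, Li and Zhu, obtains the same $2k$-vertex kernel by a purely combinatorial crown-decomposition argument with no LP; either route would serve here, and yours matches the classical one.

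Two small points worth tightening. First, in the feasibility check for the $\pm\varepsilon$ perturbation, the delicate case is not edges between $A$ and $B$ (those cancel exactly) but edges between $B=V_1\setminus S$ and $V_0\setminus A=V_0\setminus S$: such an edge would violate the perturbed constraint, and ruling it out requires invoking that $S$ is a vertex cover, not merely the structural facts about $V_0$ from the LP. You flagged that care is needed, but this is the specific case to single out. Second, the fallback ``canonical no-instance of constant size'' does not literally satisfy $|V(G')|\le 2k'\le 2k$ when $k$ is very small (e.g.\ $k\le 1$), since any instance on at most $2k'$ vertices with $k'\le 1$ is automatically a yes-instance; this is the usual innocuous convention-level slack in stating vertex-cover kernels, but if you want the inequality to hold verbatim you should either observe that the proposition is only meaningful for $k\ge 2$ (which suffices for the paper's use) or phrase the output as ``either report a no-instance or return $(G',k')$ with the stated bounds.''
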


Hence, we can derive the following kernalization result for {\sc Identification to Forest}.

\begin{lemma}\label{lem_kernel}
Given an instance $(G,k)$ of {\sc Identification to Forest}, one can compute  in polynomial time an equivalent instance $(G',k')$ such that $|V(G')|\le2k+1$ and $k'\le k+1$.
\end{lemma}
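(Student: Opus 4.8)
The plan is to combine \autoref{cor_id_vc}, \autoref{VCbless}, and \autoref{prop_kernel_VC} in the obvious way, being careful about the two reductions between {\sc Identification to Forest} and {\sc Vertex Cover}. First I would reduce {\sc Identification to Forest} to {\sc Vertex Cover}: given the input $(G,k)$, compute the graph $G^{\sf b}$ obtained by removing all bridges from $G$ (this is clearly polynomial, as bridges can be found in linear time). By \autoref{cor_id_vc}, $\idf(G)=\vc(G^{\sf b})$, so $(G,k)$ is a \yes-instance of {\sc Identification to Forest} if and only if $(G^{\sf b},k)$ is a \yes-instance of {\sc Vertex Cover}.

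Next I would apply \autoref{prop_kernel_VC} to $(G^{\sf b},k)$, obtaining in polynomial time an equivalent instance $(H,k'')$ of {\sc Vertex Cover} with $|V(H)|\le 2k''\le 2k$. Now I need to transport this back to {\sc Identification to Forest}. The graph $H$ produced by the {\sc Vertex Cover} kernel need not be bridgeless, so I cannot directly invoke \autoref{VCeq}; this is where \autoref{VCbless} comes in. Applying \autoref{VCbless} to $H$ (with parameter $k''$) yields, in linear time, a bridgeless graph $G'$ with $|V(G')|=|V(H)|+1$ such that $H\in\Vcal_{k''}$ if and only if $G'\in\Vcal_{k''+1}$. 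Since $G'$ is bridgeless, \autoref{VCeq} gives $\idf(G')=\vc(G')$, hence $G'\in\Vcal_{k''+1}$ if and only if $G'\in\Fcal^{(k''+1)}$. Setting $k':=k''+1$, the chain of equivalences gives that $(G,k)$ is a \yes-instance of {\sc Identification to Forest} iff $(G',k')$ is, and $|V(G')|=|V(H)|+1\le 2k''+1\le 2k+1$, while $k'=k''+1\le k+1$. All steps run in polynomial time, so $(G',k')$ is the desired kernel.

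There is one small subtlety worth spelling out to make the argument airtight: \autoref{VCbless} as stated is an equivalence between membership in $\Vcal_{k''}$ and $\Vcal_{k''+1}$ for the \emph{specific} $k''$ returned by the kernel, which is exactly what we need, so no quantification issue arises. I would also note explicitly that since all the involved statements ($G^{\sf b}$ construction, the {\sc Vertex Cover} kernel, and the $G'$ construction) are equivalences of \yes/\no status and are computable in polynomial time, their composition is a polynomial-time kernelization. I do not expect any real obstacle here; the only thing requiring care is not conflating the two directions — one reduces {\sc Identification to Forest} to {\sc Vertex Cover} via \autoref{cor_id_vc} (removing bridges), and the other goes from {\sc Vertex Cover} back to {\sc Identification to Forest} via \autoref{VCbless} followed by \autoref{VCeq} (adding a bridgeless-ifying vertex), and these two bridge-manipulations are genuinely different operations applied at different stages.
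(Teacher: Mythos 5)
Your proposal is correct and follows essentially the same route as the paper's proof: remove bridges and invoke \autoref{cor_id_vc} to pass to {\sc Vertex Cover}, apply the crown-decomposition kernel of \autoref{prop_kernel_VC}, then use \autoref{VCbless} to make the result bridgeless and transfer back via \autoref{VCeq}. The bookkeeping of the size and parameter bounds is also handled the same way.
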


\begin{proof}
Let $G_1$ be obtained from $G$ after removing bridges.
By \autoref{cor_id_vc}, $(G,k)$ is a \yes-instance of {\sc Identification to Forest} if and only if $(G_1,k)$ is a \yes-instance of {\sc Vertex Cover}.
By \autoref{prop_kernel_VC}, there is $(G_2,k_2)$ with $|V(G_2)|\le2k_2\le2k$ such that $(G_1,k)$ is a \yes-instance of {\sc Vertex Cover} if and only if $(G_2,k_2)$ is a \yes-instance of {\sc Vertex Cover}.
By \autoref{VCbless}, there is $(G_3,k_3)$ such that $G_3$ is a bridgeless graph with $|V(G_3)|\le|V(G_2)|+1$ and $k_3\le k_2+1$ such that $(G_2,k_2)$ is a \yes-instance of {\sc Vertex Cover} if and only if $(G_3,k_3)$ is a \yes-instance of {\sc Vertex Cover}.
Finally, by \autoref{cor_id_vc}, given that $G_3$ is bridgeless, by \autoref{VCeq}, $(G_3,k_3)$ is a \yes-instance of {\sc Vertex Cover} if and only if it is a \yes-instance of {\sc Identification to Forest}.
Hence the result.
\end{proof}

\section{Obstructions}
\label{sec:obs}

Given that {\sc Vertex Cover} and {\sc Identification to Forest} are strongly related, it is reasonable to suspect that this holds for their obstructions as well.
Already, as a direct corollary of \autoref{cor_id_vc}, we have the two following results.

\begin{observation}\label{obs1}
Let $k\in\bN$ and $F\in\obs(\Fcal^{(k)})$.
Then $F$ is bridgeless.
\end{observation}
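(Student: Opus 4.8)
The plan is to argue by contradiction and simply invoke the fact, already established in \autoref{Feqbridge} (equivalently \autoref{cor_id_vc}), that deleting bridges does not change the value of $\idf$.

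First I would assume that some $F\in\obs(\Fcal^{(k)})$ has a bridge, and let $F^{\sf b}$ be the graph obtained from $F$ by deleting all of its bridges. Since $F$ has at least one bridge, $F^{\sf b}$ is a spanning subgraph of $F$ with strictly fewer edges; in particular, it is a minor of $F$ that is not isomorphic to $F$, i.e., a proper minor of $F$. By the minor-minimality of $F$ as an obstruction, every proper minor of $F$ lies in $\Fcal^{(k)}$, so in particular $F^{\sf b}\in\Fcal^{(k)}$, that is, $\idf(F^{\sf b})\le k$. Now \autoref{Feqbridge} gives $\idf(F)=\idf(F^{\sf b})\le k$ (alternatively, \autoref{cor_id_vc} gives $\idf(F)=\vc(F^{\sf b})=\vc((F^{\sf b})^{\sf b})=\idf(F^{\sf b})$, using that $(F^{\sf b})^{\sf b}=F^{\sf b}$ because $F^{\sf b}$ is bridgeless). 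Hence $F\in\Fcal^{(k)}$, contradicting $F\in\obs(\Fcal^{(k)})$.

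I do not expect any genuine obstacle here; the whole content is packaged in \autoref{Feqbridge}/\autoref{cor_id_vc}. The only point needing (very mild) care is checking that $F^{\sf b}$ is really a \emph{proper} minor of $F$ when $F$ has a bridge, which holds because deleting at least one edge yields a spanning subgraph with fewer edges, hence a minor that cannot be isomorphic to $F$. One could equivalently delete a single bridge $e$ of $F$ and use the single-bridge case implicit in the proof of \autoref{Feqbridge} (which shows $\idf(F)=\idf(F-e)$ for a bridge $e$); the argument is verbatim the same.
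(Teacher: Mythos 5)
Your proof is correct and matches the paper's intent: the paper labels this ``a direct corollary of \autoref{cor_id_vc}'' without spelling out the details, and your argument (bridge removal preserves $\idf$ by \autoref{Feqbridge}/\autoref{cor_id_vc}, yet yields a proper minor, contradicting obstruction minimality) is exactly the unstated reasoning.
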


\begin{lemma}\label{lem_bridgeless_obs}
Let $k\in\bN$.
The bridgeless obstructions of $\Vcal_k$ are obstructions of $\Fcal^{(k)}$.
\end{lemma}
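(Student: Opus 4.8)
The plan is to show that if $H$ is a bridgeless graph in $\obs(\Vcal_k)$, then $H \in \obs(\Fcal^{(k)})$; that is, $H \notin \Fcal^{(k)}$ but every proper minor of $H$ is in $\Fcal^{(k)}$. First I would establish that $H \notin \Fcal^{(k)}$: since $H$ is bridgeless, \autoref{VCeq} gives $\idf(H) = \vc(H)$, and since $H \in \obs(\Vcal_k)$ we have $\vc(H) > k$, hence $\idf(H) > k$, i.e., $H \notin \Fcal^{(k)}$. So $H$ is certainly \emph{not} in $\Fcal^{(k)}$; it remains to check minor-minimality.

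Next I would take an arbitrary proper minor $H'$ of $H$ and show $H' \in \Fcal^{(k)}$. By \autoref{lem_minor}, $\Fcal^{(k)}$ is minor-closed, so it suffices to handle the ``elementary'' minor operations: deleting a vertex, deleting an edge, or contracting an edge of $H$. Since $H \in \obs(\Vcal_k)$, minor-minimality for vertex cover tells us that any such elementary minor $H''$ of $H$ satisfies $\vc(H'') \le k$, i.e., $H'' \in \Vcal_k$. Now I invoke \autoref{cor_id_vc}: for any graph $G$, $\idf(G) = \vc(G^{\sf b}) \le \vc(G)$, because $G^{\sf b}$ is a subgraph of $G$ and vertex cover is monotone under taking subgraphs. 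Hence $\idf(H'') = \vc((H'')^{\sf b}) \le \vc(H'') \le k$, so $H'' \in \Fcal^{(k)}$. Since $H' \in \Fcal^{(k)}$ then follows by minor-closedness (a general proper minor is obtained from such an $H''$ by further minor operations, all of which preserve membership in $\Fcal^{(k)}$), we conclude that every proper minor of $H$ lies in $\Fcal^{(k)}$.

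Combining the two parts: $H \notin \Fcal^{(k)}$ while all its proper minors are in $\Fcal^{(k)}$, so $H \in \obs(\Fcal^{(k)})$, as desired. The only mild subtlety — and the step I would be most careful about — is the reduction to elementary minor operations: one must make sure the minor-minimality of $H$ in $\obs(\Vcal_k)$ is being used in the right direction, namely that $H$ minus a vertex/edge and $H$ contract an edge all have vertex cover at most $k$; this is exactly the definition of $H$ being a minor-obstruction for $\Vcal_k$, so no extra work is needed, but it is worth stating explicitly. No structural facts about bridges beyond \autoref{VCeq} are needed for this direction; the bridgelessness hypothesis is used \emph{only} to pass from $\vc(H) > k$ to $\idf(H) > k$.
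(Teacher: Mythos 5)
Your proof is correct, and it takes a genuinely different route from the paper. The paper argues by contradiction-avoidance through a minimal witness: it notes $H\notin\Fcal^{(k)}$, hence $H$ contains some $H'\in\obs(\Fcal^{(k)})$ as a minor; then it invokes \autoref{obs1} (obstructions of $\Fcal^{(k)}$ are bridgeless) and \autoref{VCeq} to deduce $H'\notin\Vcal_k$, so by minor-minimality of $H$ in $\obs(\Vcal_k)$ it must be that $H'=H$. You instead verify the definition directly: $H\notin\Fcal^{(k)}$ exactly as in the paper, and every proper minor $H'$ of $H$ satisfies $\vc(H')\le k$ by minor-minimality for $\Vcal_k$, whence $\idf(H')=\vc((H')^{\sf b})\le\vc(H')\le k$, so $H'\in\Fcal^{(k)}$. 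Your route trades the structural fact \autoref{obs1} for the monotone inequality $\idf\le\vc$ (a one-line consequence of \autoref{cor_id_vc}), which is arguably the more elementary ingredient. One small simplification you could make: the detour through ``elementary'' minor operations is unnecessary, since minor-minimality of $H$ in $\obs(\Vcal_k)$ already says that \emph{every} proper minor of $H$ lies in $\Vcal_k$, so you can apply the chain of inequalities directly to any proper minor $H'$ without appealing separately to \autoref{lem_minor}.
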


\begin{proof}
Let $H\in\obs(\Vcal_k)$ be bridgeless.
By \autoref{VCeq}, $H\notin\Fcal^{(k)}$.
Thus, there is a minor $H'$ of $H$ such that $H'\in\obs(\Fcal^{(k)})$.
By \autoref{obs1}, $H'$ is bridgeless.
Therefore, by \autoref{VCeq}, $H'\notin\Vcal_k$.
Given that $H'$ is a minor of $H$ and that $H\in\obs(\Vcal_k)$, we conclude that $H=H'$.
Therefore, $H\in\obs(\Fcal^{(k)})$.
\end{proof}

We are actually going to prove in \autoref{sec:VC_obs} that the only bridges that may occur in an obstruction of $\Vcal_k$ are isolated edges.
Then, in \autoref{sec:id_obs}, we will prove that any obstruction of $\Fcal^{(k)}$ can be obtained from an obstruction of $\Vcal_k$ by adding edges.
See \autoref{fig_tab} for a comparison of $\Vcal_k$ and $\Fcal^{(k)}$ for $k\le3$, where the obstructions of $\Vcal_k$ are taken from \cite{CattellD94achar}.

\begin{figure}[h]
\centering
\includegraphics[scale=0.8]{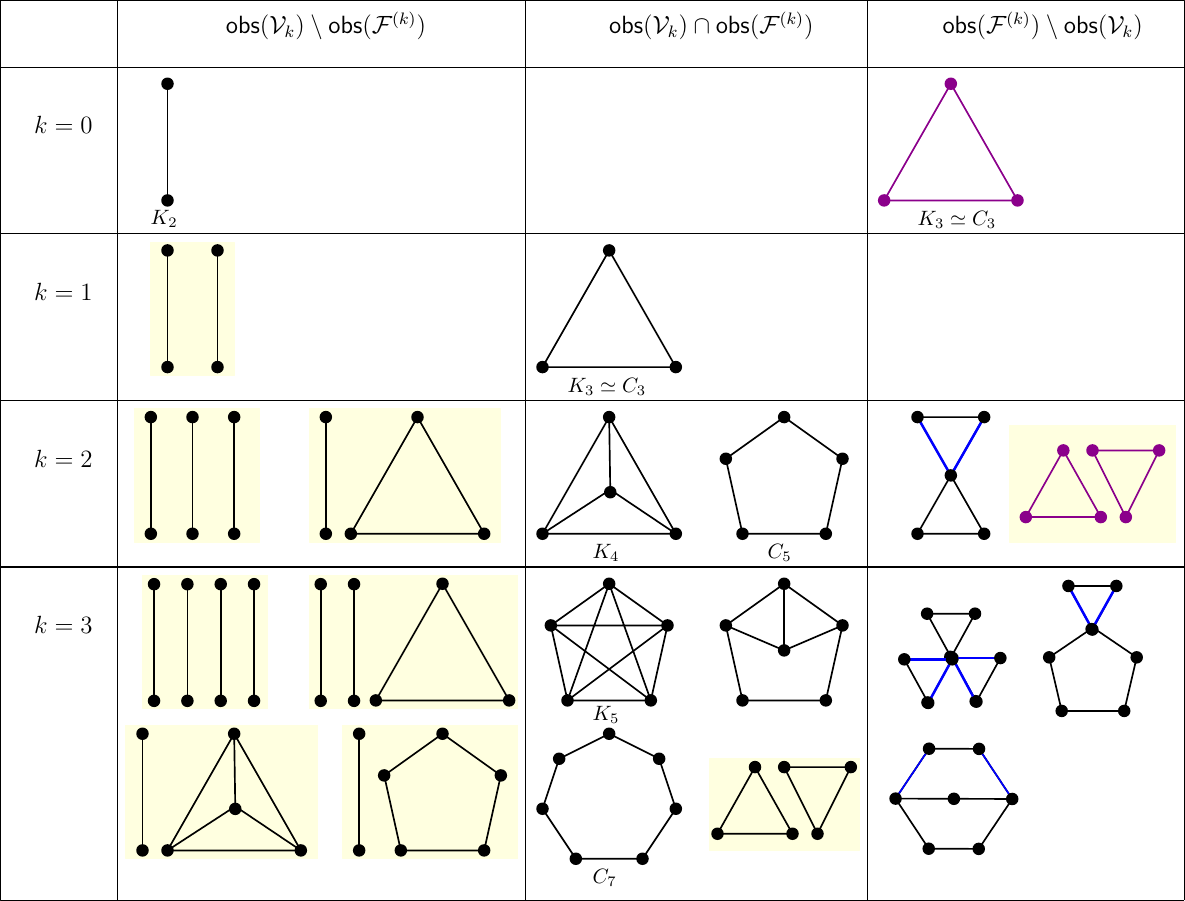}
\caption{The obstructions of $\Vcal_k$ (first and second columns) and $\Fcal^{(k)}$ (second and third columns)  for $k\le3$.
Each graph in $\obs(\Fcal^{(k)})$ is either 1) also a graph in $\obs(\Vcal^{(k)})$ (second column), or 2) can be obtained from a graph in $\obs(\Vcal^{(k)})$ with bridges (first column) by adding edges (in blue in the third column), or 3) is also a graph in $\obs(\Vcal^{(k+1)})$ (in purple in the third column).
We use yellow shadows for disconnected obstructions, to make clear that each of them is a single graph.
}
\label{fig_tab}
\end{figure}

\subsection{Bridges in the obstructions of $\Vcal_k$}\label{sec:VC_obs}

In this subsection, we prove the following.

\begin{lemma}\label{cor_vc_obs}
Let $k\in\bN$ and $G\in\obs(\Vcal_k)$ be a graph.
Then the connected components of $G$ are 2-connected. {Therefore, the bridges of $G$ are isolated edges.}
\end{lemma}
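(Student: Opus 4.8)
The crux is to show that no component of $G$ has a cut vertex; the statement about bridges then follows at once, since an endpoint of degree $\ge 2$ of a bridge of a component $C$ is a cut vertex of $C$ (deleting it separates the two sides of the bridge), so once there are no cut vertices, every bridge is an isolated $K_2$. I would begin with two reductions. First, $G$ has no isolated vertex $w$, as $G-w$ would be a proper subgraph with $\vc(G-w)=\vc(G)>k$; so every component has at least two vertices and it suffices to prove each component is $2$-connected. Since $\vc$ is additive over components and any proper minor of a component $C$ of $G$ extends to a proper minor of $G$ by leaving the other components untouched, $C\in\obs(\Vcal_{\vc(C)-1})$, so we may assume $G$ is connected and (renaming) $G\in\obs(\Vcal_k)$ with $\vc(G)=k+1$. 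Finally I record the only property of obstructions the argument uses: since $\vc(G)-1\le\vc(G-e)\le\vc(G)$ for every edge $e$ and every proper minor of $G$ has vertex cover number at most $k$, minor‑minimality forces $\vc(G-e)=\vc(G)-1$ for every edge $e$; hence it is enough to exhibit one edge $e$ with $\vc(G-e)=\vc(G)$.

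So suppose $v$ is a cut vertex of $G$ and let $D_1,\dots,D_m$ ($m\ge2$) be the components of $G-v$. For each $i$ put $b_i:=\vc(D_i)$, $N_i:=N_G(v)\cap V(D_i)\ne\emptyset$, let $c_i$ be the minimum size of a vertex cover of $D_i$ containing $N_i$ (so $b_i\le c_i$), and for $x\in N_i$ let $c_i^{x}$ be the minimum size of a vertex cover of $D_i$ containing $N_i\setminus\{x\}$. Splitting an optimal vertex cover of $G$ at the cut vertex — it either contains $v$, at cost $1+\sum_i b_i$, or avoids $v$, forcing each $N_i$ into the cover, at cost $\sum_i c_i$ — gives $\vc(G)=\min\{1+\sum_i b_i,\ \sum_i c_i\}$, and the same analysis for $G-vx$ with $x\in N_i$ gives $\vc(G-vx)=\min\{1+\sum_j b_j,\ c_i^{x}+\sum_{j\ne i}c_j\}$; this also holds when $|N_i|=1$, in which case deleting $vx$ detaches $D_i$ and $c_i^{x}=b_i$. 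Two trivial facts drive everything: $c_i-1\le c_i^{x}\le c_i$; and if $c_i=b_i$ then $c_i^{x}=c_i$ for every $x\in N_i$, because any vertex cover of $D_i$ already has size $\ge b_i=c_i$.

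Writing $\delta_i:=c_i-b_i\ge0$, the plan is to argue by cases. If $\sum_i c_i\le1+\sum_i b_i$, then not all $\delta_i$ can be positive (else $\sum_i c_i\ge\sum_i b_i+m\ge\sum_i b_i+2$), so some $\delta_i=0$; choosing $x\in N_i$ gives $c_i^{x}=c_i$ and hence $\vc(G-vx)=\min\{1+\sum_j b_j,\ \sum_j c_j\}=\sum_j c_j=\vc(G)$, contradicting $\vc(G-vx)=\vc(G)-1$. Otherwise $\sum_i c_i>1+\sum_i b_i$, so $\vc(G)=1+\sum_i b_i$ and $\sum_i\delta_i\ge2$: if some $\delta_i=0$ the same choice yields $\vc(G-vx)=\min\{1+\sum_j b_j,\ \sum_j c_j\}=1+\sum_j b_j=\vc(G)$, a contradiction; and if every $\delta_i\ge1$ then for any $x\in N_1$ we get $c_1^{x}+\sum_{j\ge2}c_j\ge(c_1-1)+\sum_{j\ge2}(b_j+\delta_j)\ge b_1+\sum_{j\ge2}b_j+(m-1)\ge1+\sum_j b_j=\vc(G)$, so again $\vc(G-vx)=1+\sum_j b_j=\vc(G)$, a contradiction. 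Thus $G$ has no cut vertex, every component of $G$ is $2$‑connected, and the claim about bridges follows as noted. I expect the only genuinely error‑prone step to be the verification of the two $1$‑sum formulas for $\vc(G)$ and $\vc(G-vx)$, in particular correctly handling the borderline case $|N_i|=1$ where the piece $D_i$ becomes its own component; that computation is what I would write out most carefully.
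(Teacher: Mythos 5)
Your proof is correct, but it follows a genuinely different route from the paper's. The paper derives \autoref{cor_vc_obs} as a corollary of a more general statement (\autoref{lem_VC_obs}): for any \emph{hereditary} class $\Hcal$ closed under 1-clique-sums, every component of a graph in $\obs(\Hcal^{\langle k\rangle})$ is 2-connected. The argument there is structural: if $v$ is a cut vertex of $G$ with sides $G_1,G_2$ in $G-v$, one first shows that $k=k_1+k_2$ (where $k_i$ is the deletion distance of $G_i$ to $\Hcal$), then that for one of the two pieces $\bar G_i=G[V(G_i)\cup\{v\}]$ the value jumps to $k_i+1$, and finally one replaces $G$ by the \emph{disjoint union} $\bar G_1\sqcup G_2$, a proper subgraph whose deletion distance is still $k+1$, contradicting minimality. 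In contrast, your argument is specific to $\vc$ and is arithmetic rather than structural: you write an explicit ``1-sum'' formula $\vc(G)=\min\{1+\sum_i b_i,\ \sum_i c_i\}$ (and its analogue after deleting an edge $vx$ incident to the cut vertex), and then a short case analysis on the slacks $\delta_i=c_i-b_i$ pins down a single edge $vx$ with $\vc(G-vx)=\vc(G)$, contradicting the fact that deleting any edge of an obstruction of $\Vcal_k$ must drop $\vc$ by one. Both proofs exploit obstruction minimality, but in dual ways: the paper produces a proper \emph{subgraph} with unchanged value, while you produce a proper \emph{minor} (an edge-deletion) with unchanged value. What the paper's version buys is generality — it applies verbatim to any hereditary class closed under 1-clique-sums — while your computation buys concreteness at the price of being tied to the additivity/monotonicity properties of $\vc$; it would not transfer to, say, $\Hcal$ the class of forests, where the analogous $c_i$-type quantities are not controlled by two-sided bounds as cleanly. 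Your handling of the borderline $|N_i|=1$ case and the derivation of $\vc(G-e)=\vc(G)-1$ from minor-minimality are both correct.
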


Actually, we prove a more general version of \autoref{cor_vc_obs}  applying on any graph class $\Hcal^{\langle k\rangle}$ defined as follows.
Let $\Hcal$  be a \emph{hereditary} graph class (i.e., closed under vertex deletion) that is  also closed under 1-clique-sums.
Let $\Hcal^{\langle k\rangle}$ be the set of
graphs $G$ such that there exists a set $X\subseteq V(G)$ with $|X|\le k$ and $G-X\in\Hcal$.
In this setting $\Vcal_k=\Ecal^{\langle k\rangle}$,
where $\Ecal$ is  class of edgeless graphs.

We need the following easy lemma.
\begin{lemma}\label{in}
Let $\Hcal$ be a hereditary class, $k\in\bN$ and $H\in\obs(\Hcal^{\langle k\rangle})$. Then, for any $v\in V(H)$, there is a set $S\subseteq V(H)$ of size $k+1$ such that $v\in S$ and $H-S\in\Hcal$.
In particular, $\obs(\Hcal^{\langle k\rangle})\subseteq\Hcal^{\langle k+1\rangle }\setminus \Hcal^{\langle k\rangle}$.
\end{lemma}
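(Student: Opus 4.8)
The plan is to derive everything directly from the minor-minimality of $H$; as it happens, the \emph{hereditary} hypothesis will be a convenience rather than a necessity here. Fix an arbitrary vertex $v\in V(H)$. Since $v\in V(H)$, the graph $H-v$ is an (induced) subgraph of $H$ with strictly fewer vertices, hence a \emph{proper} minor of $H$. By the minor-minimality of $H$ among the graphs not in $\Hcal^{\langle k\rangle}$, we get $H-v\in\Hcal^{\langle k\rangle}$, so there exists $X\subseteq V(H)\setminus\{v\}$ with $|X|\le k$ and $(H-v)-X\in\Hcal$. Note that $(H-v)-X=H-(X\cup\{v\})$, purely as a set manipulation.

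The next step is to observe that in fact $|X|=k$. Indeed, if $|X|\le k-1$, then $X\cup\{v\}$ would be a set of size at most $k$ whose removal from $H$ produces a graph in $\Hcal$, contradicting $H\notin\Hcal^{\langle k\rangle}$. Hence $S:=X\cup\{v\}$ has size exactly $k+1$, contains $v$, and satisfies $H-S\in\Hcal$, which is precisely the first assertion. (If one preferred to extract only $|X|\le k$, the hereditary hypothesis would allow padding $X$ by arbitrary vertices of $H$ while keeping $H-S\in\Hcal$; this is the single place hereditariness would enter, and the contradiction argument above sidesteps it.)

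For the ``in particular'' part: the set $S$ just constructed witnesses $H\in\Hcal^{\langle k+1\rangle}$, while $H\notin\Hcal^{\langle k\rangle}$ by definition of $H\in\obs(\Hcal^{\langle k\rangle})$; together these give $H\in\Hcal^{\langle k+1\rangle}\setminus\Hcal^{\langle k\rangle}$.

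I do not anticipate a genuine obstacle. The only points needing a line of care are that ``minor-minimal not in $\Hcal^{\langle k\rangle}$'' indeed forces every proper minor — in particular $H-v$ — to lie in $\Hcal^{\langle k\rangle}$, and the trivial degenerate cases (if $V(H)=\emptyset$ the first statement is vacuous, and the discussion is meaningful only for $\Hcal\ne\emptyset$, in which case hereditariness also guarantees that the empty graph lies in $\Hcal$, should that be needed elsewhere).
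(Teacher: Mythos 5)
Your proof is correct and takes essentially the same approach as the paper: use minor-minimality to get $H-v\in\Hcal^{\langle k\rangle}$, extract a deletion set $X$, set $S=X\cup\{v\}$, and force $|S|=k+1$ from $H\notin\Hcal^{\langle k\rangle}$. Your side remark that hereditariness is not actually invoked in this step is also accurate; the paper imposes it as a standing hypothesis for the surrounding subsection (it is used elsewhere, e.g.\ in \autoref{lem_VC_obs}), not for this particular lemma.
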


\begin{proof}
Let $H\in\obs(\Hcal^{\langle k\rangle})$ and $v\in V(H)$.
By definition of an obstruction, $H\notin\Hcal^{\langle k\rangle}$ and $H-\{v\}\in\Hcal^{\langle k\rangle}$.
So there is a vertex set $S'$ of size at most $k$ in $H-\{v\}$ such that $H-\{v\}-S'\in\Hcal$.
Let $S:=S'\cup\{v\}$.
Then $H-S\in\Hcal$ so $H\in\Hcal^{\langle k+1\rangle}$.
Given that $H\notin\Hcal^{\langle k\rangle}$, we have $|S|>k$, and therefore, $|S|=k+1$.
\end{proof}

Here is the main result of the subsection.

\begin{lemma}\label{lem_VC_obs}
Let $k\in\bN$.
Every connected component of a graph in $\obs(\Hcal^{\langle k\rangle})$ is 2-connected.
\end{lemma}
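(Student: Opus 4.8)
The plan is to argue by contradiction: assume some $G\in\obs(\Hcal^{\langle k\rangle})$ has a connected component $C$ with a cut vertex $v$, and derive $G\in\Hcal^{\langle k\rangle}$. For a graph $J$ write $\delta(J)$ for the least size of a set $X\subseteq V(J)$ with $J-X\in\Hcal$, so that $\Hcal^{\langle k\rangle}=\{J\mid\delta(J)\le k\}$; being an obstruction means $\delta(G)\ge k+1$ while $\delta(J)\le k$ for every proper minor $J$ of $G$.

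First I would set up a split at $v$. Let $D_1,\dots,D_t$ (with $t\ge 2$) be the connected components of $C-v$, and write $G$ as a $1$-clique-sum at $v$ of $G_1$ and $G_2$, where $G_2=C[\{v\}\cup V(D_2)\cup\dots\cup V(D_t)]$ and $G_1$ is the union of $C[\{v\}\cup V(D_1)]$ with all components of $G$ other than $C$. Then $V(G_1)\cap V(G_2)=\{v\}$, there is no edge of $G$ between $G_1-v$ and $G_2-v$, and $v$ has a neighbour both in $G_1-v$ (inside $D_1$) and in $G_2-v$ (inside $D_2$). The crucial objects are the two ``crossed'' graphs $G^{\flat}:=(G_1-v)\sqcup G_2$ and $G^{\sharp}:=G_1\sqcup(G_2-v)$: each is obtained from $G$ by deleting only edges (the nonempty set of edges from $v$ into $G_1-v$, resp.\ into $G_2-v$), hence each is a proper minor of $G$, so $\delta(G^{\flat})\le k$ and $\delta(G^{\sharp})\le k$.

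Next I would assemble four inequalities. (i) Since $\Hcal$ is closed under $1$-clique-sums, hence under disjoint union, the union of an optimal deletion set of $G_1$ with an optimal deletion set of $G_2$ is a deletion set of $G$ --- if neither contains $v$, the residual graph is the $1$-clique-sum of two members of $\Hcal$, and otherwise it is a disjoint union of two members of $\Hcal$ --- so $\delta(G)\le\delta(G_1)+\delta(G_2)$. (ii) Taking $\{v\}$ together with optimal deletion sets of $G_1-v$ and of $G_2-v$ gives a deletion set of $G$ whose residual graph is a disjoint union of two members of $\Hcal$, so $\delta(G)\le 1+\delta(G_1-v)+\delta(G_2-v)$. (iii) and (iv): since $\Hcal$ is hereditary, $\delta$ is superadditive over disjoint unions (restrict any deletion set to each side), whence $\delta(G_1-v)+\delta(G_2)\le\delta(G^{\flat})\le k$ and $\delta(G_1)+\delta(G_2-v)\le\delta(G^{\sharp})\le k$. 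Adding (i) and (ii) and substituting (iii) and (iv) gives $2\delta(G)\le 1+\bigl(\delta(G_1)+\delta(G_2-v)\bigr)+\bigl(\delta(G_1-v)+\delta(G_2)\bigr)\le 1+2k$, so $\delta(G)\le k$ since $\delta(G)\in\bN$ --- contradicting $G\notin\Hcal^{\langle k\rangle}$. Hence no component of $G$ has a cut vertex, and since components are connected, every component is $2$-connected.

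The main obstacle is that the obvious estimate $\delta(G)\le\delta(G_1)+\delta(G_2)\le 2k$ is too weak by itself; the point is to observe that $G^{\flat}$ and $G^{\sharp}$ are still proper minors --- which is also precisely where the $1$-clique-sum hypothesis on $\Hcal$ is needed, to reglue the two sides without paying for an extra deleted vertex --- and to pair these minors against the two distinct upper bounds for $\delta(G)$ so that everything collapses to $2\delta(G)\le 2k+1$. One should check that the split still works when some $D_i$ is a single vertex (a pendant neighbour of $v$), which is exactly the situation the edge-deletion minors are designed to handle, and set aside the degenerate case $\Hcal=\{\emptyset\}$, whose only obstruction is $(k+1)\cdot K_1$. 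Specializing to the class $\Ecal$ of edgeless graphs, which is hereditary and closed under $1$-clique-sums, then yields \autoref{cor_vc_obs}.
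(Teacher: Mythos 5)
Your proof is correct and rests on the same key structural fact the paper exploits --- deleting the bundle of edges from the cut vertex $v$ into one side of the split is a \emph{proper} minor of $G$ --- but you package the contradiction differently. In the paper's notation, $G_1$ is a component of $G-v$ with $v\in N_G(V(G_1))$, $G_2:=G-V(G_1)-\{v\}$, $\bar{G}_i:=G[V(G_i)\cup\{v\}]$, and $k_i$ is the least $\ell$ with $G_i\in\Hcal^{\langle\ell\rangle}$; the paper first establishes $k=k_1+k_2$ (via \autoref{in}), then shows some $\bar{G}_i$ satisfies $\bar{G}_i\notin\Hcal^{\langle k_i\rangle}$, and finally exhibits the single proper subgraph $\bar{G}_1\sqcup G_2$ (your $G^\sharp$, up to relabelling the sides) as lying in $\Hcal^{\langle k+1\rangle}\setminus\Hcal^{\langle k\rangle}$, contradicting minor-minimality. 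You instead treat both crossed graphs $G^\flat,G^\sharp$ symmetrically and add the four inequalities to get $2\delta(G)\le 2k+1$, hence $\delta(G)\le k$ by integrality, contradicting $G\notin\Hcal^{\langle k\rangle}$ directly. This sidesteps the appeal to \autoref{in} and the case choice of the paper's Claim~2, yielding a shorter symmetric argument; both versions rely equally on closure under $1$-clique-sums (for your (i)) and on heredity (for the superadditivity of $\delta$ over disjoint unions used in (iii), (iv)). Your caveat about the degenerate class $\Hcal=\{\emptyset\}$ is a careful and harmless flag.
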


\begin{proof}
Suppose toward a contradiction that $G\in\obs(\Hcal^{\langle k\rangle})$ has a connected component that is not 2-connected.
Then there is a cut vertex $v$ in $G$.
Let $G_1$ be a connected component of $G-\{v\}$ such that $v\in N_G(V(G_1))$ and let $G_2=G- V(G_1)-\{v\}$.
For $i\in[2]$, let $k_i$ be the minimum $k$ such that $G_i\in \Hcal^{\langle k\rangle}$.
Hence, $G_i\in\Hcal^{\langle k_i\rangle}\setminus\Hcal^{\langle k_i-1\rangle }$.

\begin{claim}\label{cl:1}
$k=k_1+k_2$.
\end{claim}
\begin{proof}
By \autoref{in}, $G\in\Hcal^{\langle k+1\rangle }\setminus \Hcal^{\langle k\rangle}$.
For $i\in[2]$, let $S_i\subseteq V(G_i)$ of size at most $k_i$ be such that $G_i-S_i\in\Hcal$.
Then $S:=S_1\cup S_2\cup\{v\}$ is such that $G-S\in\Hcal$, so $k+1\leq k_1+k_2+1$.

By \autoref{in}, there is a set $S\subseteq V(G)$ of size $k+1$ such that $v\in S$ and $G-S\in\Hcal$.
Given that $\Hcal$ is hereditary, $G_i-(S\cap V(G_i))\in\Hcal$.
Moreover, $G_i\notin\Hcal^{\langle k_i-1\rangle }$ for $i\in[2]$, so we conclude that $|S\cap V(G_i)|\geq k_i$.
Hence, $k+1= |S|=|\{v\}\cup (S\cap V(G_1))\cup (S\cap V(G_2))|\geq k_1+k_2+1.$
\end{proof}

For $i\in[2]$, let $\bar{G_i}:=G[V(G_i)\cup\{v\}]$.
Since $G_i\in\Hcal^{\langle k_i\rangle }\setminus\Hcal^{\langle  k_i-1\rangle }$ and we only add the vertex $v$,
$\bar{G}_i\in\Hcal^{\langle  k_i+1\rangle }\setminus\Hcal^{\langle  k_i-1\rangle}$.

\begin{claim}\label{cl:2}
There is $i\in[2]$ such that $\bar{G}_i\in\Hcal^{\langle k_i+1\rangle }\setminus\Hcal^{\langle k_i\rangle }$.
\end{claim}
\begin{proof}
Suppose that $\bar{G_i}\in\Hcal^{\langle k_i\rangle }$ for $i\in[2]$.
Let $S_i\subseteq V(\bar{G}_i)$ of size $k_i$ be such that $\bar{G}_i-S_i\in\Hcal$.
Then $S:=S_1\cup S_2$ has size at most $k_1+k_2<k+1$.
Moreover, given that $\Hcal$ is closed under 1-clique-sums, we have $G-S\in\Hcal$.
By \autoref{cl:1}, it follows that $G\in\Hcal^{\langle k\rangle}$, a contradiction.
\end{proof}
By \autoref{cl:2}, without loss of generality, we assume that $\bar{G}_1\in\Hcal^{\langle k_1+1\rangle }\setminus\Hcal^{\langle k_1\rangle }$.
Let $G'$ be the graph obtained from the disjoint union of $\bar{G}_1$ and $G_2$.
Given that $\Hcal$ is closed under disjoint union and by \autoref{cl:1}, $G'\in\Hcal^{\langle k_1+1+k_2\rangle }\setminus \Hcal^{\langle k_1+k_2\rangle }=\Hcal^{\langle k+1\rangle}\setminus \Hcal^{\langle k\rangle  }$.
$G'$ is a subgraph of $G$ so
this contradicts the minimality of $G$ as an obstruction of $\Hcal^{\langle k\rangle}$.
\end{proof}

\subsection{Constructing the obstructions of $\Fcal^{(k)}$ from the obstructions of $\Vcal_k$}
\label{sec:id_obs}

What \autoref{lem_bridgeless_obs} and \autoref{lem_VC_obs} tell us is that the
difference (as sets) between $\obs(\Vcal_k)$ and $\obs(\Fcal^{(k)})$ is caused by isolated edges.
Essentially, to go from an obstruction $H$ of $\Vcal_k$ with isolated edges to an obstruction $H'$ of $\Fcal^{(k)}$, we will have to add vertices and edges minimally to get a bridgeless graph.
In this section, we prove that we actually just need to add edges.

Let ${\sf Obs}=\bigcup_{k\in\bN}\obs(\Fcal^{(k)})$.
We have the following easy observation.

\begin{observation}\label{obs-at-most-one}
Let $G\in{\sf Obs}$ and $k:= \idf(G)-1$. Then $G\in\obs(\Fcal^{(k)})$.
\end{observation}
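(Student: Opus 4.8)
The plan is to simply unwind the definitions of ${\sf Obs}$ and of a minor obstruction; beyond the minor-closedness of $\Fcal^{(k)}$ (\autoref{lem_minor}), no combinatorics is needed. First I would record that $\idf(G)$ is a well-defined natural number for every graph $G$: identifying the single part $V(G)$ turns $G$ into $K_1\in\Fcal$ (and the empty graph is already in $\Fcal$), so $\idf(G)\le|V(G)|$. Hence $k:=\idf(G)-1$ is meaningful as soon as $\idf(G)\ge1$, which will be granted in the next step.

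Since $G\in{\sf Obs}=\bigcup_{k\in\bN}\obs(\Fcal^{(k)})$, there is some $k'\in\bN$ with $G\in\obs(\Fcal^{(k')})$. By definition of a minor obstruction this means, on the one hand, $G\notin\Fcal^{(k')}$, that is $\idf(G)>k'$, and on the other hand, every proper minor $H$ of $G$ lies in $\Fcal^{(k')}$, that is $\idf(H)\le k'$. The first fact gives $\idf(G)\ge k'+1\ge1$, so indeed $k=\idf(G)-1$ is defined and satisfies $k\ge k'\ge0$, hence $k\in\bN$.

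It then remains to check the two conditions defining membership in $\obs(\Fcal^{(k)})$. For the first, $\idf(G)=k+1>k$, so $G\notin\Fcal^{(k)}$. For the second, let $H$ be any proper minor of $G$; then $\idf(H)\le k'\le k$, so $H\in\Fcal^{(k)}$. Thus $G$ lies outside $\Fcal^{(k)}$ while all its proper minors lie inside, which is exactly the statement $G\in\obs(\Fcal^{(k)})$.

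There is no real obstacle here: the whole content is the inequality $k\ge k'$, which is what promotes the membership of the proper minors of $G$ from $\Fcal^{(k')}$ to $\Fcal^{(k)}$. I would, however, warn against the false temptation of claiming that $k$ is the \emph{only} value for which $G$ is an obstruction: a single minor operation can destroy cycle structure and thereby decrease $\idf$ by more than one (for instance $K_3\in\obs(\Fcal^{(0)})\cap\obs(\Fcal^{(1)})$), so the observation should be read as identifying the \emph{largest} such value, namely $\idf(G)-1$.
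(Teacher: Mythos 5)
Your proof is correct, and it matches the intent of the paper, which states this as an unproved ``easy observation'': the whole content is exactly the inequality $k=\idf(G)-1\ge k'$ that you isolate, which upgrades membership of proper minors from $\Fcal^{(k')}$ to $\Fcal^{(k)}$ while $G$ itself stays outside $\Fcal^{(k)}$ because $\idf(G)=k+1$. Your cautionary remark about non-uniqueness of the obstruction level is also in line with the paper, which notes right after the observation that $k\cdot K_3$ belongs to both $\obs(\Fcal^{(2k-2)})$ and $\obs(\Fcal^{(2k-1)})$ (your $K_3$ example is the $k=1$ instance of this).
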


Note that, while we observed in \autoref{in}, in particular, that $\obs(\Vcal_k)\subseteq \Vcal_{k+1}\setminus \Vcal_k$, the same does not hold for $\Fcal_k$.
For instance, $k\cdot K_3$ (see \autoref{fig_univ_obs_idf}) belongs to both $\obs(\Fcal_{2k-2})$ and $\obs(\Fcal_{2k-1})$.
However, we can prove the following.

\begin{lemma}\label{lem+2}
Let $k\in\bN$.
Then $\obs(\Fcal_k)\subseteq \Fcal_{k+2}\setminus \Fcal_k$.
\end{lemma}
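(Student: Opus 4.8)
We want to show that if $G\in\obs(\Fcal_k)$ then $\idf(G)\le k+2$ (the lower bound $\idf(G)>k$ being immediate from $G\notin\Fcal_k$). By \autoref{obs1}, $G$ is bridgeless, so by \autoref{VCeq} we have $\idf(G)=\vc(G)$; thus it suffices to prove $\vc(G)\le k+2$. The natural route is to exploit minor-minimality: for any vertex $v\in V(G)$, the graph $G-v$ lies in $\Fcal_k=\Fcal^{(k)}$, so there is an id-$\Fcal$ partition $\Xcal$ of $G-v$ of order at most $k$. The idea is to turn $\Xcal$, together with a bounded amount of extra work around $v$, into an id-$\Fcal$ partition of $G$ of order at most $k+2$.

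First I would pick $v$ to be a vertex of maximum degree, or more usefully a vertex lying on a cycle (one exists since $G$ is bridgeless and $G\notin\Fcal_k$ so $G$ is not a forest). Let $\Xcal$ be a minimum id-$\Fcal$ partition of $G-v$, of order $r\le k$, and let $F=(G-v)\ii\Xcal\in\Fcal$. Adding $v$ back to $F$ (without modifying $\Xcal$) yields the graph $F'=G\ii\Xcal$, which is $F$ with one extra vertex $v$ whose neighbours are the heirs/blue-vertices corresponding to $N_G(v)$. Any cycle in $F'$ must pass through $v$, hence $F'$ is a graph in which deleting $v$ gives a forest, i.e. $\fvs(F')\le 1$. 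Now I would invoke the bridgeless reduction: since $G$ is bridgeless, $G\ii\Xcal$ should also be ``close to bridgeless'', and a graph with feedback vertex number $1$ is either a forest (done, $\idf(G)\le k$) or has all its cycles through a single vertex $w$ (which may be $v$ or an heir). In the latter case, refining $\Xcal$ by throwing at most two more vertices of $G$ into the partition — one to destroy the relevant cycles through $w$ and, if $w=v$ itself is not in $\bigcup\Xcal$, one more to place $v$ into a part — turns $G\ii(\text{new partition})$ into a forest while raising the order by at most $2$.

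The step I expect to be the main obstacle is making precise the claim that deleting $v$ from $G\ii\Xcal$ (equivalently, that $F=(G-v)\ii\Xcal$ is acyclic) lets us control the cycles of $G\ii\Xcal$ by a bounded set of vertices of $G$ itself rather than of the quotient. The subtlety is that a single vertex of the quotient (an heir of some part $X_i$) may correspond to many vertices of $G$, so ``add one more vertex to the partition to kill all cycles through $w$'' must be argued carefully: if $w$ is the heir of $X_i$, merging a well-chosen vertex $u\in N_G(v)$ into $X_i$ (using that $G$ is bridgeless so every edge of $G\ii\Xcal$ at $w$ comes from a cycle) should suffice, and one checks the order increases by exactly $1$. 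If $w=v$, then since $v$ lies on a cycle of $G\ii\Xcal$ but $G-v$ quotients to a forest, adding $v$ to some part of $\Xcal$ (or creating the singleton $\{v\}$) simultaneously removes $v$ as a vertex of the quotient; one then verifies that no new cycle is created, again using bridgelessness of $G$ exactly as in the proof of \autoref{Feqbridge}. Carefully bookkeeping these two cases, and checking that at most two extra vertices of $G$ enter $\bigcup\Xcal$, yields $\idf(G)=\vc(G^{\sf b})\le r+2\le k+2$, and combined with $G\notin\Fcal_k$ this gives $G\in\Fcal_{k+2}\setminus\Fcal_k$.
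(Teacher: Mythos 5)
Your route---delete a vertex $v$, take a small id-$\Fcal$ partition $\Xcal$ of $G-v$, and ``repair'' it by adding at most two more vertices of $G$---has a genuine gap. The step ``throwing at most two more vertices into the partition \dots raises the order by at most $2$'' would yield the inequality $\idf(G)\le\idf(G-v)+2$, and that inequality is false, even restricted to obstructions. Take $G=k*K_3$ (the $k$-marguerite), which is bridgeless with $\idf(G)=\vc(G)=k+1$ and which lies in $\obs(\Fcal^{(k)})$, and let $v$ be its center. Then $G-v$ is a perfect matching, so $\idf(G-v)=0$ and the optimal $\Xcal$ is the empty partition; yet $G\ii\Xcal=G$ contains $k$ edge-disjoint triangles all passing through $v$, and for $k\ge 3$ no two additional identifications can destroy them all. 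The moral is that $\fvs(G\ii\Xcal)\le 1$ does not bound the number of \emph{vertices of $G$} one must add to $\Xcal$ to make the quotient acyclic: the unique feedback vertex $v$ may carry arbitrarily many independent cycles. Your sketch also quietly conflates $r=\idf(G-v)$ with $k$; even if the claim were true it would prove $\idf(G)\le r+2$, which the marguerite shows can be far smaller than $\idf(G)$.

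The paper avoids all of this by contracting an edge rather than deleting a vertex. For any $uv\in E(G)$, minor-minimality gives $G/uv\in\Fcal^{(k)}$, hence an id-$\Fcal$ partition $\Xcal$ of $G/uv$ with $|\bigcup\Xcal|\le k$. Since $G/uv=G\ii\{u,v\}$, we have $G\ii\{u,v\}\ii\Xcal\in\Fcal$; replacing the heir of $\{u,v\}$ inside its part of $\Xcal$ by the pair $\{u,v\}$ (or adding $\{u,v\}$ as a new part if the heir is not in $\bigcup\Xcal$) lifts $\Xcal$ to an id-$\Fcal$ partition of $G$ of order at most $|\bigcup\Xcal|+2\le k+2$. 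Edge contraction is the right operation precisely because the bound $\idf(G)\le\idf(G/e)+2$ holds unconditionally, as the quotient partition lifts directly; this is exactly the monotonicity property that vertex deletion lacks.
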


\begin{proof}
Let $G\in\obs(\Fcal_k)$ and $uv\in E(G)$.
Let $\Xcal$ be an id-$\Fcal$ partition of $G/uv$.
Then $G\ii\{u,v\}\ii\Xcal\in\Fcal$, so $G\ii\Xcal' \in\Fcal$, where $\Xcal'$ is obtained from $\Xcal$ by further identifying $u$ and $v$.
Thus, $|\bigcup\Xcal'|\le|\bigcup\Xcal|+2\le k+2$, hence the result.
\end{proof}

The main result of this subsection is the following.

\begin{lemma}\label{lem_obs}
Let $G$ be a graph and $k:= \idf(G)-1$.
If $G\in\obs(\Fcal^{(k)})$, then there is
$H\in\obs(\Vcal_k)$ that is a minor of $G$, and for any such $H$, there is $E'\subseteq E(G)$ such that $G-E'=H$.
\end{lemma}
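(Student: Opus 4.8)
The plan rests on the identity $\idf(G')=\vc((G')^{\sf b})$ (\autoref{cor_id_vc}) and the minor‑minimality of $G$. For the existence: since $G\in\obs(\Fcal^{(k)})$, by \autoref{obs1} $G$ is bridgeless, so by \autoref{VCeq} $\vc(G)=\idf(G)=k+1$; hence $G\notin\Vcal_k$, and as $\Vcal_k$ is minor‑closed there is $H\in\obs(\Vcal_k)$ with $H\le G$. Fix such an $H$. By \autoref{in} we have $\vc(H)=k+1$, and by \autoref{cor_vc_obs} every component of $H$ is $2$‑connected or equal to $K_2$; write $H$ as the disjoint union of the graph $H_0$ formed by its $2$‑connected components and of $t\ge0$ copies of $K_2$. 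Then $H^{\sf b}$ is $H_0$ together with isolated vertices, so $\idf(H)=\vc(H_0)=k+1-t$. Since $H\le G$ and $|V(H)|\le|V(G)|$, it suffices to prove $|V(H)|=|V(G)|$: a minor realised by $|V(G)|$ pairwise disjoint nonempty branch sets inside $V(G)$ uses only singletons, so $H$ is then the spanning subgraph $G-E'$ with $E'=E(G)\setminus E(H)$.

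Suppose for contradiction that $|V(H)|<|V(G)|$ and fix a model $\{S_w\}_{w\in V(H)}$ of $H$ in $G$. Then either some branch set has size $\ge2$ or some vertex of $G$ is uncovered, so $H$ is a minor of a \emph{proper} minor $G'$ of $G$ — namely $G/e$ for an edge $e$ inside a branch set (absorbing its heir into that set), or $G-v$ for an uncovered $v$. By minor‑minimality $\idf(G')\le k$, i.e.\ $G'\in\Fcal^{(k)}$, and since $\Fcal^{(k)}$ is minor‑closed (\autoref{lem_minor}) we get $H\in\Fcal^{(k)}$, that is $\idf(H)\le k$; combined with $\idf(H)=k+1-t$ this forces $t\ge1$. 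In particular, when $H$ is bridgeless ($t=0$) this is already a contradiction, which proves the lemma in that case, and shows that in any potential counterexample $H$ must have a $K_2$‑component.

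It remains to handle $t\ge1$, which is the technical heart of the proof and the place where bridgelessness of $G$ is genuinely used. I would first reduce to $G$ connected: since $\idf$ is additive over components (\autoref{obs:disconnect}, \autoref{obs:union}) and $G$ is minor‑minimal, each component $C$ of $G$ is a connected obstruction with $\vc(C)=\idf(C)\ge2$, and by minor‑monotonicity of $\vc$ the part of $H$ mapped into $C$ has $\vc$ exactly $\vc(C)$; so it suffices to make each such part span its $C$. For connected $G$, shrink the branch sets $S_{x_j},S_{y_j}$ of each $K_2$‑component $\{x_j,y_j\}$ of $H$ to single vertices $u_j,v_j$, so $u_jv_j\in E(G)$; as $G$ is bridgeless, $u_jv_j$ lies on a cycle $\gamma_j$ of $G$. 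The idea is to enlarge the model by turning the uncovered arcs of the cycles $\gamma_j$ into new branch sets (and then to resolve any remaining bridge of the resulting quotient in the same way), producing a \emph{bridgeless} minor $\widetilde G$ of $G$ whose $\vc$ is squeezed between $\vc(H)=k+1$ and $\vc(G)=k+1$; if $\widetilde G$ is a proper minor of $G$ this contradicts minor‑minimality, since then $\idf(\widetilde G)=\vc(\widetilde G)=k+1>k$. The main obstacle is that, a priori, this enlargement might require adding up to $|V(G)|-|V(H)|$ new branch sets, in which case $\widetilde G$ could coincide with $G$ and yield no contradiction; making the argument go through therefore requires a careful \emph{local} analysis of the $K_2$‑components — controlling how each $\gamma_j$ meets the other branch sets and the other cycles, and using bridgelessness of $G$ — to show that either the model is forced to cover all of $G$ by singletons (so $|V(H)|=|V(G)|$ outright), or a strictly smaller bridgeless minor of $\vc$‑value $k+1$ already exists, contradicting minor‑minimality.
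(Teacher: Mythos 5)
Your proposal is not a complete proof: the case $t\ge 1$ (when $H$ has $K_2$-components, equivalently when the bridgeless graph $H^{\sf b}$ differs from $H$) is left unresolved, and you yourself flag that your sketch for this case --- enlarging the branch sets along cycles through each $u_jv_j$ --- might absorb all of $G$ and therefore fail to produce a proper minor $\widetilde G$ of $\vc$-value $k+1$. What you have actually established is the following: $H$ exists; if some $H\le G$ has a model that does not cover $V(G)$ by singletons, then $H$ is a minor of a proper minor $G'$ of $G$, hence $H\in\Fcal^{(k)}$, hence $\idf(H)=\vc(H)-t\le k$, which forces $t\ge1$. That reasoning is correct and cleanly disposes of $t=0$, but for $t\ge1$ there is no contradiction yet, and that is precisely the difficult case.

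The paper's proof proceeds quite differently. Instead of reasoning about a fixed model of $H$ and trying to argue $|V(H)|=|V(G)|$, it proves two local claims about $G$ itself: for every edge $uv$, $G/uv\in\Vcal_k$ (\autoref{cl:no_contract}), and for every vertex $v$, $G-\{v\}\in\Vcal_k$ (\autoref{cl:vtx}). Since $G\in\Vcal_{k+1}\setminus\Vcal_k$ and any sequence of minor operations leading to an $H\in\obs(\Vcal_k)$ must stay in $\Vcal_{k+1}\setminus\Vcal_k$, these claims rule out all vertex deletions and contractions, leaving only edge deletions, i.e., $H=G-E'$. The hard part --- the analogue of your $t\ge1$ case --- is hidden inside the proof of \autoref{cl:no_contract}: if $G/uv$ were in $\Vcal_{k+1}\setminus\Vcal_k$, then $G/uv\in\Fcal^{(k)}$ together with \autoref{VCeq} forces the contraction to have created bridges; the paper pins down exactly where those bridges can be (through cut vertices in $N_G(u)\cap N_G(v)$) and then constructs from any $\Vcal_k$-obstruction minor of $G/uv$ a bridgeless minor $F$ or $F_x'$ of $G$ of $\vc$-value $k+1$, contradicting either $G\in\obs(\Fcal^{(k)})$ or the vertex-cover count. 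Your sketch is aiming for something analogous (a bridgeless proper minor $\widetilde G$ of $\vc$-value $k+1$), but it lacks the structural control over the bridges that makes the construction actually terminate in a proper minor. To repair your proof you would essentially have to re-derive that structural analysis; as written, the proposal has a genuine gap.
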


\begin{proof}
By \autoref{obs1}, $G$ is bridgeless.
Therefore, by \autoref{VCeq}, $\idf(G)=\vc(G)$, and thus $G\in\Vcal_{k+1}\setminus\Vcal_k$.
We first prove that, for any edge $e\in E(G)$, $G/e\in\Vcal_k$.

\begin{claim}\label{cl:no_contract}
For any edge $uv\in E(G)$, $G/uv\in\Vcal_k$.
\end{claim}

\begin{cproof}
Suppose toward a contradiction that there is an edge $uv\in E(G)$ such that $G/uv\in\Vcal_{k+1}\setminus\Vcal_k$.
Let $w$ be the heir of $uv$ in $G/uv$.
Since $G\in\obs(\Fcal^{(k)})$, it implies that $G/uv\in\Fcal^{(k)}$.
By \autoref{obs1}, $G$ is bridgeless.
Thus, by \autoref{VCeq} and since $G/uv\in\Fcal^{(k)}\setminus V_k$, it implies that the contraction of $u$ and $v$ created a bridge $e$.
Given that only the edges incident to $u$ and $v$ are involved in the contraction, the
bridges of $G/uv$ are exactly the edges $xw$ where $x\in N_G(u)\cap N_G(v)$ is a cut vertex of $G$ (the edges $xu$ and $xv$ in $G$ are contracted to $xw$ in $G/uv$).
See \autoref{fig_no_contract} for an illustration.
Let $\Ccal$ be the set of all such $x$.
Let $E_1$ be the set of all edges $xu,xv$ of $G$ for $x\in\Ccal$ and let $E_2$ be the set of all edges $xw$ of $G/uv$ for $x\in\Ccal$.

\begin{figure}[h]
\center
\includegraphics[scale=1]{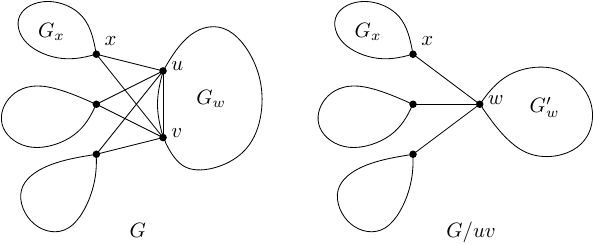}
\caption{Graphs $G$ and $G/uv$.}
\label{fig_no_contract}
\end{figure}
Given that $G/uv\in\Vcal_{k+1}\setminus\Vcal_k$, there is $H\in\obs(\Vcal_k)$ that is a minor of $G/uv$.
For $x\in\Ccal$, let $G_x$ be the connected component of $G-E_1$ containing $x$ and $G_w$ be the disjoint union of the remaining components of $G-E_1$.
Note that $G_x$ is also the connected component of $G/uv-E_2$ containing $x$ for $x\in\Ccal$, and that $G_w':=G_w/uv$ is the union of the other connected components of $G/uv-E_2$.
Given that $G/uv-E_2$ is bridgeless, so are $G_x$ for $x\in\Ccal$ and $G_w'$.
By \autoref{lem_VC_obs}, each connected component of $H$ is 2-connected.
Therefore, given a model $M$ of $H$ of minimal size in $G/uv$, a bridge of $G/uv$ belongs to $M$ if and only if it is an isolated edge in $M$.
Therefore, $H$ is either a minor of $F:=G_w'\cup \bigcup_{x\in\Ccal} G_x$ or, for some $x\in\Ccal$, a minor of $F_x:=G[\{x,w\}]\cup (G_w'-\{w\})\cup(G_x-x)\cup\bigcup_{y\in\Ccal\setminus\{y\}}G_y$.
See \autoref{fig_no_contract2} for an illustration.
\begin{figure}[h]
\center
\includegraphics[scale=1.0]{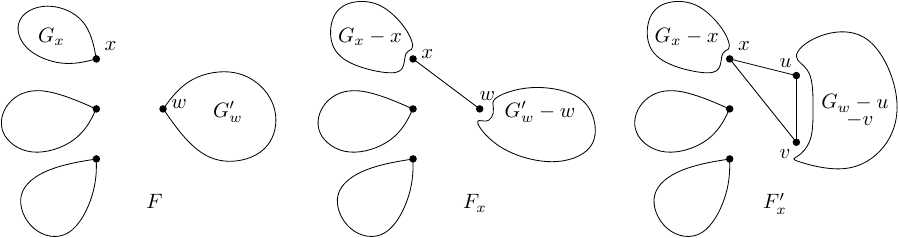}
\caption{Graphs $F$, $F_x$, and $F_x'.$}
\label{fig_no_contract2}
\end{figure}

If $H$ is a minor of $F$ which is a minor of $G$, then $F\in\Vcal_{k+1}\setminus \Vcal_{k}$.
Given that $F$ is bridgeless, by \autoref{VCeq}, we thus have $F\in\Fcal^{(k+1)}\setminus \Fcal^{(k)}$.
This contradicts the fact that $G\in\obs(\Fcal^{(k)})$.

Hence, $H$ is a minor of $F_x$ for some $x\in\Ccal$.
Then $H$ is also a minor of $F_x':=G[\{x,u,v\}]\cup (G_w-u-\{v\})\cup\cup(G_x-x)\bigcup_{y\in\Ccal\setminus\{x\}}G_y$, which is a minor of $G$.
Thus, $F_x,F_x'\in\Vcal_{k+1}\setminus \Vcal_{k}$.
Let $S$ be a vertex cover of $F_x'$ of minimum size, i.e., $|S|=k+1$.
Let $S':=S\cap\{x,u,v\}$.
Given that $G[\{x,u,v\}]$ is a triangle, $|S'|=2$.
But then, $S\setminus S'\cup\{x\}$ is a vertex cover of $F_x$ of size $k$, a contradiction.
\end{cproof}

We now prove that, for any vertex $v\in V(G)$, $G-\{v\}\in\Vcal_k$.

\begin{claim}\label{cl:vtx}
For any vertex $v\in V(G)$, $G-\{v\}\in\Vcal_k$.
\end{claim}
\begin{cproof}
Suppose toward a contradiction that there is a vertex $v\in\Vcal_k$ such that $G-\{v\}\in\Vcal_{k+1}\setminus\Vcal_k$.
If $v$ is an isolated vertex, then $G-\{v\}$ is bridgeless.
So by \autoref{VCeq}, $G-\{v\}\in\Fcal^{(k+1)}\setminus\Fcal^{(k)}$, contradicting that $G\in\obs(\Fcal^{(k)})$.
So there is a vertex $u\in N_G(v)$.
Let us prove that $G/uv\in\Vcal_{k+1}\setminus\Vcal_k$.
This will contradict \autoref{cl:no_contract} and prove the claim.

Suppose toward a contradiction that $G/uv\in\Vcal_k$.
Let $S$ be a vertex cover of $G/uv$ of size $k$.
Let $w$ be the heir of the edge $uv$ in $G/uv$.
If $w$ belongs to $S$, then $S\setminus\{w\}\cup\{u,v\}$ is a vertex cover of $G$ of size $k+1$ containing $v$.
If $w$ does not belong to $S$, then $N_{G/uv}(w)\subseteq S$.
Since $N_{G/uv}(w)=N_G(\{u,v\})$, we conclude that $S\cup\{v\}$ is a vertex cover of $G$ of size $k+1$ containing $v$.
In both cases, $G$ has a vertex cover $S'$ of size $k+1$ containing $v$.
Therefore, $G-\{v\}$ has a vertex cover of size $k$, contradicting the fact that $G-\{v\}\notin\Vcal_k$.
\end{cproof}

Given that $G\in\Vcal_{k+1}\setminus\Vcal_k$, there is $H\in\obs(\Vcal_k)$ that is a minor of $G$.
By \autoref{in}, $H\in\Vcal_{k+1}\setminus\Vcal_k$.
$H$ is obtained from $G$ by a sequence of vertex deletions, edge deletions, and edge contractions such that at each step, the resulting graph belong to $\Vcal_{k+1}\setminus\Vcal_k$.
In particular, we can first do all vertex deletions and edge contractions and then the remaining edge deletions.
But then, by \autoref{cl:no_contract} and \autoref{cl:vtx}, we cannot do any vertex deletion nor edge contraction and still remain in $\Vcal_{k+1}\setminus\Vcal_k$.
Therefore, there is $E'\subseteq E(G)$ such that $G-E'\in\obs(\Vcal_k)$.
This concludes the proof.
\end{proof}

We thus have the following upper bound on the size of obstructions, which is a restatement of \autoref{lem_2}.

\begin{lemma}\label{lem_2-restated}
Let $k\in\bN$.
For any obstruction $G\in\obs(\Fcal^{(k)})$, $|V(G)|\le 2k+4$.
\end{lemma}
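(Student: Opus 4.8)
The plan is to reduce this to the known linear bound on vertex cover obstructions, using \autoref{lem_obs} as the bridge. The key point in \autoref{lem_obs} is that an obstruction of $\Fcal^{(k')}$, with $k'=\idf(G)-1$, is obtained from some obstruction $H\in\obs(\Vcal_{k'})$ solely by \emph{adding} edges, i.e.\ $G-E'=H$ for some $E'\subseteq E(G)$. Since edge deletion leaves the vertex set untouched, this forces $|V(G)|=|V(H)|$, and then the Dinneen--Lai bound $|V(H)|\le 2k'+2$ for members of $\obs(\Vcal_{k'})$ \cite{DinneenL07propert} finishes the job, once we bound $k'$.

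First I would locate $\idf(G)$. Let $G\in\obs(\Fcal^{(k)})$. Then $G\notin\Fcal^{(k)}$ gives $\idf(G)\ge k+1$, while $G\in{\sf Obs}$ together with \autoref{lem+2} gives $\idf(G)\le k+2$. Set $k':=\idf(G)-1\in\{k,k+1\}$. Since $G\in{\sf Obs}$, \autoref{obs-at-most-one} yields $G\in\obs(\Fcal^{(k')})$, which is exactly the hypothesis of \autoref{lem_obs} at parameter $k'$ (note $k'=\idf(G)-1$).

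Then I would apply \autoref{lem_obs} at parameter $k'$ to get $H\in\obs(\Vcal_{k'})$ and $E'\subseteq E(G)$ with $G-E'=H$, so $V(G)=V(H)$. By the Dinneen--Lai upper bound \cite{DinneenL07propert}, $|V(H)|\le 2k'+2\le 2(k+1)+2=2k+4$, and therefore $|V(G)|=|V(H)|\le 2k+4$.

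There is no real obstacle left: the structural content has already been established in \autoref{lem_VC_obs} and \autoref{lem_obs}. The only subtlety worth flagging---and the reason we obtain $2k+4$ rather than the $2k+2$ that holds for $\obs(\Vcal_k)$---is that an obstruction of $\Fcal^{(k)}$ may at the same time be an obstruction at level $k+1$ (for example, $k\cdot K_3$ lies in both $\obs(\Fcal_{2k-2})$ and $\obs(\Fcal_{2k-1})$), so \autoref{lem_obs} must be used at the possibly larger value $k'=\idf(G)-1\le k+1$; this is precisely where the additional $+2$ arises.
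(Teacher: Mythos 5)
Your proof is correct and follows essentially the same route as the paper: bound $\idf(G)\in\{k+1,k+2\}$ via \autoref{lem+2}, invoke \autoref{lem_obs} at level $k'=\idf(G)-1$ to obtain $H\in\obs(\Vcal_{k'})$ with the same vertex set, and conclude via the Dinneen--Lai bound $|V(H)|\le 2k'+2\le 2k+4$. Your explicit appeal to \autoref{obs-at-most-one} to justify that $G\in\obs(\Fcal^{(k')})$ (so that \autoref{lem_obs} applies) is a detail the paper leaves implicit, and is a welcome clarification rather than a deviation.
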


\begin{proof}
The obstruction of maximal size in $\obs(\Vcal_k)$ is $(k+1)\cdot K_2$, i.e., the graph obtained from the disjoint union of $k+1$ isolated edges, which has size $2k+2$.

Let $G\in\obs(\Fcal^{(k)})$. By \autoref{lem+2},
we have $\idf(G)\in\{k+1,k+2\}$.
Moreover, by \autoref{lem_obs}, there is $E'\subseteq E(G)$ such that $G-E'\in\obs(\Vcal_{\idf(G)-1})$.
Therefore, $G-E'$, and thus $G$, has size at most $2k+4$.
\end{proof}

\section{Universal obstruction}
\label{sec:univ_obs}

As we already explained in the introduction, we denote by $k\cdot K_{3}$ the union of $k$ disjoint copies of $K_{3}$ and by $C_{k}$ the cycle on $k$ vertices.
Recall also that the \emph{marguerite} of order $k$, denoted by $k* K_3$,
is the graph $(k\cdot K_3)\ii X$, where $X\subseteq V(k\cdot K_3)$ is a set containing exactly one vertex from each triangle (see \autoref{fig_univ_obs_idf}).

\begin{figure}[h]
\center
\includegraphics{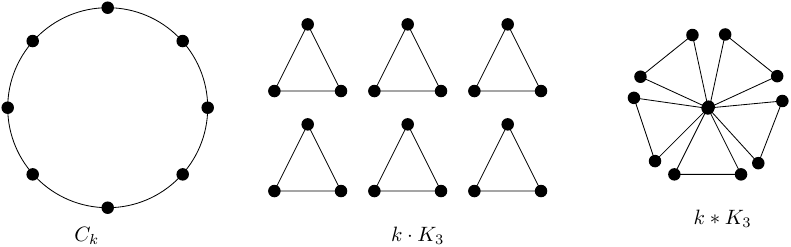}
\caption{The universal obstruction for {\sc Identification to Forest}.}
\label{fig_univ_obs_idf}
\end{figure}

\begin{observation}\label{obs_univ_obs}
$C_{2k+1}$, $\lfloor \frac{k}{2}+1\rfloor\cdot K_3$, and $(k+1)*K_3$ are in $\obs(\Fcal^{(k)})$.
\end{observation}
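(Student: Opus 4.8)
The plan is to verify the three claimed memberships separately, each time exhibiting that the graph fails to admit a $k$-identification to $\Fcal$ while every proper minor does. For all three graphs we will lean on \autoref{cor_id_vc} (that $\idf(G)=\vc(G^{\sf b})$) and \autoref{lem_minor} (minor-closedness of $\Fcal^{(k)}$), so the task reduces to (a) computing $\idf$ of the graph, and (b) checking that deleting or contracting any edge, or deleting any vertex, strictly decreases $\idf$ below $k+1$. Since all three graphs are already bridgeless, \autoref{VCeq} lets us work with $\vc$ directly rather than with $\vc(G^{\sf b})$, which will simplify matters — but one must be careful that edge contractions and edge deletions can \emph{create} bridges, so after such an operation we should again pass to $G^{\sf b}$.

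For $C_{2k+1}$: the odd cycle has $\vc(C_{2k+1})=k+1$, and it is bridgeless, so $\idf(C_{2k+1})=k+1>k$, giving $C_{2k+1}\notin\Fcal^{(k)}$. For minimality, deleting any edge yields a path, which has no cycles at all, hence $\idf=0$; contracting any edge yields $C_{2k}$, and $\vc(C_{2k})=k$, and $C_{2k}$ is bridgeless, so $\idf(C_{2k})=k$; deleting any vertex yields a path, $\idf=0$. So every one-step minor lands in $\Fcal^{(k)}$, and $C_{2k+1}\in\obs(\Fcal^{(k)})$.

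For $m\cdot K_3$ with $m=\lfloor k/2+1\rfloor$: each triangle is bridgeless and has $\vc(K_3)=2$, so $\vc(m\cdot K_3)=2m$ and $\idf(m\cdot K_3)=2m$. We need $2m\ge k+1$, i.e. $\lfloor k/2+1\rfloor\ge (k+1)/2$, which holds for all $k\in\bN$ (indeed $2\lfloor k/2+1\rfloor\ge k+1$ whether $k$ is even or odd). For minimality: deleting or contracting an edge inside one triangle turns that triangle into a path on three vertices (a $P_3$, after contraction a $P_2$) — in either case an acyclic, hence \emph{bridge-only}, component, so that component contributes $0$ to $\idf$ by \autoref{cor_id_vc}; the remaining $m-1$ triangles contribute $2(m-1)$. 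Thus the resulting graph has $\idf=2(m-1)=2m-2\le k-1<k$ when $k$ is... here I must double-check the arithmetic against the two parities: if $k$ is even, $m=k/2+1$, $2m=k+2$, and $2m-2=k$, so a one-step minor has $\idf=k$, which is $\le k$, fine; if $k$ is odd, $m=(k+1)/2$, $2m=k+1$, and $2m-2=k-1\le k$, fine. Deleting a vertex of a triangle similarly kills that triangle's contribution. So every one-step minor is in $\Fcal^{(k)}$.

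For $(k+1)*K_3$: this graph is obtained from $(k+1)\cdot K_3$ by identifying one vertex per triangle into a common hub $h$; it is bridgeless (every edge lies on a triangle through $h$ or on the triangle's "outer" edge, all on cycles). Its vertex cover number is $k+1$: taking the "middle" vertex of each petal (the non-hub vertex of the outer edge... rather, one of the two non-hub vertices) is \emph{not} enough since the outer edge would be uncovered — the correct cover of size $k+1$ is $\{h\}$ together with one non-hub vertex from each petal's outer edge, covering all $3(k+1)$ edges; and one checks $\vc\ge k+1$ because the $k+1$ petals are edge-disjoint triangles, each needing $2$ vertices, but they share only $h$, so a cover avoiding $h$ needs $2(k+1)$ vertices and a cover using $h$ needs $1+(k+1)$. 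Hence $\idf((k+1)*K_3)=k+1>k$. For minimality one distinguishes edge types: deleting/contracting an \emph{outer} edge of a petal destroys that petal's cycle, reducing its triangle to a path pendant at $h$ (a bridge or two, contributing $0$ after passing to the bridgeless core), so $\idf$ drops to (cover of the marguerite on the other $k$ petals) $=k$; deleting/contracting a \emph{hub} edge $hx$ similarly breaks that petal's triangle, leaving a bridge pendant, again dropping $\idf$ to $k$; deleting the hub $h$ yields $(k+1)\cdot P_3$, totally acyclic, $\idf=0$; deleting a non-hub vertex destroys one petal, leaving the marguerite on $k$ petals plus a pendant edge, $\idf=k$. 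Thus every one-step minor lies in $\Fcal^{(k)}$.

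The main obstacle I expect is not any single one of these but the bookkeeping around edge contractions and deletions \emph{creating bridges}: after such an operation the relevant parameter is $\vc$ of the \emph{bridgeless core}, not of the whole graph, so one must carefully argue that the "broken petal" (a path hanging off the hub) is entirely bridges and contributes nothing, and that no new cycle is accidentally created elsewhere. Making the vertex-cover lower bounds $\vc\ge k+1$ clean — especially for $(k+1)*K_3$, where the petals are not vertex-disjoint — is the other place that needs a short but genuine argument (an edge-disjoint-triangles packing plus a case split on whether $h$ is in the cover), rather than a one-liner.
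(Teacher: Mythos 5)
The paper provides no proof of this observation, so there is no paper argument to compare against; I evaluate your attempt on its own terms. Your treatment of $C_{2k+1}$ and of $\lfloor k/2+1\rfloor\cdot K_3$ is correct and complete, including the additivity of $\idf$ over connected components via \autoref{cor_id_vc} and the careful parity split.

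For $(k+1)*K_3$ there is a genuine arithmetic slip which in fact conceals a problem with the statement itself. The cover you exhibit --- $\{h\}$ together with one non-hub vertex from each of the $k+1$ petals --- has $1+(k+1)=k+2$ vertices, not $k+1$. Your own lower-bound sentence already records the correct count (``a cover using $h$ needs $1+(k+1)$''), yet you then assert $\vc\ge k+1$ and conclude $\idf((k+1)*K_3)=k+1$. The correct value is $\vc(m*K_3)=m+1$ for every $m\ge 1$: the $m$ outer edges $a_ib_i$ form a matching, forcing at least $m$ cover vertices among the petals, and at least one further vertex is needed for the hub edges, with equality for $\{h,a_1,\dots,a_m\}$. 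Hence $\idf((k+1)*K_3)=k+2$ and, crucially, $\idf(k*K_3)=k+1$ rather than $k$.

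With the corrected numbers your minimality check collapses. Deleting one outer vertex of $(k+1)*K_3$ leaves $k*K_3$ with a pendant, whose bridgeless core is exactly $k*K_3$, so this proper minor has $\idf=k+1>k$ and therefore lies outside $\Fcal^{(k)}$. Consequently $(k+1)*K_3$ is not minor-minimal outside $\Fcal^{(k)}$, and the third item of the observation is false as written for every $k\ge1$. (A concrete sanity check: since a $1$-identification does nothing, $\Fcal^{(1)}=\Fcal$ and $\obs(\Fcal^{(1)})=\{K_3\}$, but $2*K_3$ is the bowtie.) The intended statement is surely $k*K_3\in\obs(\Fcal^{(k)})$ for $k\ge1$, which your method does establish once the arithmetic is repaired: $\idf(k*K_3)=k+1>k$, and every one-step minor (vertex deletion, edge deletion, or edge contraction) destroys at least one petal's cycle, so its bridgeless core is contained in a copy of $(k-1)*K_3$ and has $\idf\le k$. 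Your off-by-one error masked exactly the discrepancy that reveals the typo in the observation.
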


\begin{lemma}\label{th_univ_obs_idf}
If $G$ excludes every graph in $\{C_k,k\cdot K_3,k*K_3\}$ as a minor, then $\idf(G)=\Ocal(k^4\cdot\log^2 k)$.
\end{lemma}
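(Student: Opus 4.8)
The statement says: if $G$ excludes $C_k$, $k\cdot K_3$, and $k*K_3$ as minors, then $\idf(G) = \Ocal(k^4 \log^2 k)$. By \autoref{cor_id_vc} we have $\idf(G) = \vc(G^{\sf b})$, so it suffices to bound the vertex cover of the bridgeless graph $G^{\sf b}$. Note that excluding $C_k$ as a minor means $G$ (hence $G^{\sf b}$) has no cycle of length $\geq k$, so every block (2-connected component) of $G^{\sf b}$ has bounded circumference $<k$. A graph with no $C_k$-minor has treewidth $\Ocal(k)$ (this is a classical fact, or one can extract it from the grid/circumference bounds; I would cite the relevant bound). So $G^{\sf b}$ has treewidth $\Ocal(k)$. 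This does not immediately bound $\vc$, but it does bound $\fvs$: a graph of treewidth $t$ need not have small feedback vertex set in general, but here the crucial extra input is that $G$ excludes $k\cdot K_3$ and $k*K_3$ as minors.

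\medskip

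The key structural claim I would isolate is that excluding $k\cdot K_3$ and $k*K_3$ as a minor bounds the \emph{feedback vertex set} of $G^{\sf b}$. Intuitively: if $\fvs(G^{\sf b})$ is large, then by the Erdős–Pósa theorem for $\fvs$ (cited in the excerpt: the universal obstruction of $\fvs$ is $\langle k\cdot K_3\rangle$ with gap $\Ocal(k\log k)$), either $G^{\sf b}$ contains $k\cdot K_3$ as a minor — contradiction — or $G^{\sf b}$ has a small hitting set $F$ for all triangle-models, i.e. a set of size $\Ocal(k\log k)$ hitting all cycles, which is exactly a small feedback vertex set. Wait — that is circular. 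The correct reading: the Erdős–Pósa property says $\fvs(G^{\sf b}) \leq \Ocal(\nu\log\nu)$ where $\nu$ is the maximum number of vertex-disjoint cycles; and if $\nu \geq k$ then $G^{\sf b} \succeq k\cdot K_3$ as a minor. So excluding $k\cdot K_3$ as a minor forces $\nu < k$, hence $\fvs(G^{\sf b}) = \Ocal(k\log k)$. So in fact $k\cdot K_3$ alone controls $\fvs$; the role of $k*K_3$ and $C_k$ must be to upgrade a feedback vertex set into a vertex cover.

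\medskip

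So the plan continues: let $F$ be a feedback vertex set of $G^{\sf b}$ of size $\Ocal(k\log k)$. Then $G^{\sf b} - F$ is a forest, so $\vc(G^{\sf b} - F) \leq |V(G^{\sf b}-F)|$ is not bounded directly, but since $G^{\sf b}$ is bridgeless, I need to argue the trees hanging off $F$ cannot be large. Here is where $k*K_3$ and $C_k$ enter. Consider a tree component $T$ of $G^{\sf b}-F$; every leaf of $T$ must have a neighbor in $F$ (else it would be an endpoint of a bridge, contradicting bridgelessness), and more generally every edge of $T$ must lie on a cycle through $F$. A long path or a large subtree of $T$ with many attachment points to a single vertex $f\in F$ would create many triangles (or near-triangles) sharing $f$, i.e. a large marguerite minor $k*K_3$; many attachment points to \emph{distinct} vertices of $F$ together with long connecting subpaths would create either a long cycle $C_k$ or many disjoint triangles $k\cdot K_3$. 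So I would argue: for each $f\in F$, the ``private'' part of the forest attached essentially only to $f$ has bounded size (else $k*K_3$ minor); the remaining forest structure, organized by how trees connect pairs/tuples of vertices of $F$, has bounded total size (else $C_k$ or $k\cdot K_3$ minor, by a Ramsey/pigeonhole argument on the $\Ocal((k\log k)^2)$ pairs of $F$). This yields $|V(G^{\sf b})| = |F| + |V(G^{\sf b}-F)| = \Ocal(k\log k) \cdot \poly(k\log k)$, hence $\vc(G^{\sf b}) \leq |V(G^{\sf b})| = \Ocal(k^4\log^2 k)$ after bookkeeping.

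\medskip

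\textbf{Main obstacle.} The delicate step is the last one: converting ``bridgeless $+$ small $\fvs$ $+$ no long cycle $+$ no large marguerite'' into a polynomial bound on the total number of vertices. One must carefully classify the tree components of $G^{\sf b}-F$ by their ``trace'' on $F$ (which vertices of $F$ they attach to and with what multiplicity), show that a component whose trace is a single vertex $f$ with high multiplicity gives $k*K_3$, a component with a long path between two attachment points to $F$ gives a long cycle $C_k$ (since bridgelessness forces a return path), and that having too many components (or too many disjoint ``cycle-witnessing'' structures) gives $k\cdot K_3$. Getting the exponent right — tracking that $|F| = \Ocal(k\log k)$, pairs contribute $\Ocal(k^2\log^2 k)$, and each contributes $\Ocal(k)$ vertices before a forbidden minor appears, plus an extra $\Ocal(k)$ factor from path lengths — is what produces the stated $\Ocal(k^4\log^2 k)$ bound, and is the part that requires genuine care rather than citation.
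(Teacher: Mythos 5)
Your outline shares the paper's opening moves---reduce to bridgeless $G$ via \autoref{cor_id_vc}, invoke Erdős--Pósa to find a set $X$ of size $\Ocal(k\log k)$ meeting every cycle---but the argument then goes wrong at a crucial point: you claim that bridgelessness plus the excluded minors bounds $|V(G^{\sf b})|$, i.e.\ the \emph{total} number of vertices, so that $\vc(G^{\sf b})\le|V(G^{\sf b})|$ does the job. This is false. Take the ``book'' $B_m$ with vertices $u,v,t_1,\ldots,t_m$, edges $uv$ and $ut_i,vt_i$ for all $i$: it is bridgeless, excludes $C_5$, $2\cdot K_3$ and $2*K_3$ as minors (every cycle has length at most $4$ and passes through both $u$ and $v$; all the triangles $uvt_i$ share the \emph{edge} $uv$, so no two of them can be made vertex-disjoint except at a single center, and contracting $uv$ destroys every triangle), yet $|V(B_m)|=m+2$ is unbounded while $\idf(B_m)=\vc(B_m)=2$. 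This same example also falsifies your heuristic that ``many attachment points of a tree to a single $f\in F$ gives a $k*K_3$ minor'': with $F=\{u\}$, the tree $G-F$ is a star with $m$ leaves all adjacent to $u$, and still no large marguerite appears, because the would-be petals all go through the single vertex $v$.

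What the paper does instead is to bound not the full vertex set but a carefully chosen subset $V'$ (which is, in effect, a vertex cover of $G^{\sf b}$): for each component $C$ of $G[X]$ and each tree $T$ attached to it, one prunes $T$ down to the subtree $T^C$ spanned by the $N_G(C)$-leaves, and then puts into $V'$ only $X$, the \emph{internal} nodes of the $T^C$'s, and the endpoints of the few ``inter-subtree'' edges $E'$. The quantity that the excluded minors actually control is $|PL(T^C)|$, the number of \emph{parents of leaves} in $T^C$ (each parent with one of its leaf-children yields a petal after contracting the rest of $C\cup T^C$ to a center), together with the diameter of $T^C$ (bounded by $C_k$) and the number of trees per component (bounded by the marguerite). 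The leaves themselves may be arbitrarily numerous; they are simply not placed in $V'$, and after identifying $V'$ to a point they all become leaves of a single star. Your proposal needs this distinction between leaves and internal nodes in order to be repaired, and the ``trace on $F$'' bookkeeping should be replaced by counting parents of leaves and internal nodes of the pruned subtrees, as in the paper.
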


\begin{proof}
Let $G$ be a $\{C_k,k\cdot K_3,k*K_3\}$-minor-free graph.
By \autoref{VCeq}, we can assume without loss of generality that $G$ is bridgeless.
In particular, any vertex of $G$ has degree at least two.

By the Erdős–Pósa's theorem \cite{ErdosP65inde}, either $G$ has a packing of $k$ cycles, or there is a set $X$ of size $\Ocal(k\cdot \log k)$ such that $G-X\in\Fcal$.
Given that $G$ is $k\cdot K_3$-minor-free, there exists such a set~$X$ and
$G[X]$ has at most $\Ocal(k\cdot \log k)$ connected components.

Let $C$ be a connected component of $G[X]$. 
Let $\Tcal_C$ be the set of trees in $F$ with a neighbor in $C$. 
Given that $G$ is bridgeless and that any path from a vertex of $T\in\Tcal_C$ to a vertex of $G-V(C)- V(T)$ intersects $C$, we have $|E_G(V(T),V(C))|\geq 2$.
Hence, there is a cycle in the graph induced by $T$ and $C$.
Hence, $|\Tcal_C|*K_3$ is a minor of $G$.
Therefore, $|\Tcal_C|\leq k-1$.

Let $T\in\Tcal_C$.
Let $T^C$ be the subtree of $T$ obtained by iteratively removing every leaf of $T$ that is not in $N_G(C)$.
Hence, for every pair of leaves $u,v$ of $T^C$, there are two $(u,v)$-paths $P_1$ and $P_2$, the first one in $T^C$ and the second one going through $C$, that are internally vertex-disjoint.
So there is a cycle of length at least $\Delta(T^C)+1$, where $\Delta(T^C)$ denotes the diameter of $T^C$.
Given that $C_{k}$ is not a minor of $G$, $T^C$ has diameter at most $k-2$.

Let $L(T^C)$ denote the leaves of $T^C$, and let $PL(T^C)$ denote the parents of vertices in $L(T^C)$.
We claim that $|PL(T^C)|\leq k$.
Indeed, let $u\in L(T^C)$ be a leaf picked arbitrarily.
Let $V'=V(C)\cup V(T^C)\setminus L(T^C)\setminus PL(T^C)\cup\{u,{\sf p}(u)\}$, where ${\sf p}(u)$ is the parent of $u$ in $T^C$.
Observe that, since $u$ is connected to $C$, $G[V']$ is connected.
Hence, we can contract $V'$ to a single vertex $c$ to obtain a graph $G'$.
For each $t\in PL(T^C)\setminus\{{\sf p}(u)\}$, there is a triangle $ctv_t$ where $v_t\in L(T^C)$ is a child of $t$.
Hence, $(|PL(T^C)|-1)*K_3$ is a subgraph of $G'$ and thus a minor of $G$.
Since $k*K_3$ is not a minor of $G$, we proved our claim.

Therefore, $|V(T^C)\setminus L(T^C)|\leq\Delta(T^C)\cdot|PL(T^C)|\leq k\cdot(k-2)$.

Let $E'$ be the set of all edges of $F$ that do not belong to $T^C$ for any $C\in\cc(G[X])$ and $T\in\Tcal_C$.
Let $e\in E'$.
Since $e$ is not a bridge, $e$ is part of a cycle $C_e$.
Hence, there are $C,C'\in\cc(G[X])$ and $T\in\Tcal_C\cap\Tcal_{C'}$ such that any path from $T^C$ to $T_{C'}$ in $T$ goes through $e$.
Moreover, there are at most $k-5$ such edges between $T^C$ to $T_{C'}$, since otherwise $C_e$ would have length at least $k$.
Hence, $|E'|\leq (k-5)\cdot\binom{|\cc(G[X])|}{2}\cdot\max_{C\in\cc(G[X])}|\Tcal_C|=\Ocal(k^4\cdot\log^2 k)$.

Let $V'\subseteq V(G)$ be the union of $X$, of the endpoints of edges in $E'$, and of the internal nodes of $T^C$ for any $C\in\cc(G[x])$ and any $T\in\Tcal_C$.
Then, $V(G)\setminus V'\subseteq L(F)$, so $G\ii V'$ is a star.
Moreover, $|V|= \Ocal(k\cdot\log k+ k^4\cdot\log^2 k+ k\cdot\log k\cdot k\cdot k^2)=\Ocal(k^4\cdot\log^2 k).$
\end{proof}

\begin{proof}[Proof of \autoref{th_univ_obs_idf_more}]
The first condition of the universal obstruction property
follows from \autoref{th_univ_obs_idf} an the second one follows from  \autoref{obs_univ_obs}.
\end{proof}

\section{Discussion and open problems}
\label{sec-discussion-and-open}

In this paper we initiated the study of graph modification problems where the modification operation is vertex identification.
We defined the problem {\sc Identification to $\Hcal$} and studied the
case where the target class $\Hcal$ is the class of forest, denoted by $\Fcal$.

\paragraph{Relation with {\sc Contraction to $\Hcal$}.}
An important feature of  {\sc Identification to $\Hcal$} is that
it behaves similarly to the problem   {\sc Deletion to $\Hcal$},
in the sense that both problems are {\sf FPT} when $\Hcal$ is a minor-closed graph class. This follows from \autoref{lem_minor} and the algorithmic consequence of the Robertson and Seymour's theorem \cite{RobertsonS12XXII,RobertsonS09XXI,RobertsonS95GMXIII,KawarabayashiKR11thed,korhonen2024minor}.
 It is easy to observe that the problem
{\sc Contraction to $\Hcal$} (that is, asking whether $k$ edge contractions yield property $\Hcal$) does not
have this property. To see this, let $\Pcal$ be the class of planar graphs and let $K_{3,4}^+$
(resp. $K_{2,3}^+$)
be the graph obtained from $K_{3,4}$ (resp. $K_{2,3}$) by adding an edge $e$ between two vertices   of degree three (resp. two).
Contracting $e$ yields a planar (resp. acyclic) graph, so $(K_{3,4}^+,1)$ (resp. $(K_{2,3}^{+},1)$) is a \yes-instance of {\sc Contraction to $\cal P$} ({\sc Contraction to Forest}). However, $(K_{3,4},1)$ (resp. $(K_{2,3},1)$) is a \no-instance of the corresponding  problem.

Let us define the parameter ${\sf ec}_{\Hcal}:\gall\to\Nbbb$,
corresponding to the problem  {\sc Contraction to $\Hcal$}, i.e.,  ${\sf ec}_{\Hcal}(G)$ is the minimum number of edge contractions that can transform $G$ to graph in $\Hcal$. As we observed above, neither ${\sf ec}_{\Fcal}$ nor  ${\sf ec}_{\Pcal}$ are minor-monotone, and similar counterexamples can be found for other instantiations of $\Hcal$.
We use ${\ecf}$ as a shortcut for ${\sf ec}_{\Fcal}$ and we next observe that $\idf$ and
${\ecf}$ are functionally equivalent.

\begin{lemma}
\label{oupe4loi}
For every graph $G$ it holds that $\idf(G)=\Ocal(\ecf(G))$ and that $\ecf(G)=\Ocal((\idf(G))^3)$.
\end{lemma}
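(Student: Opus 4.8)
The plan is to prove the two inequalities separately, using the structural characterization $\idf(G) = \vc(G^{\sf b})$ from \autoref{cor_id_vc} throughout.

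For the easy direction, $\idf(G) = \Ocal(\ecf(G))$: if $G$ can be turned into a forest by $k$ edge contractions, then since each edge contraction of $uv$ is the same as identifying the set $\{u,v\}$, the corresponding partition of a vertex set of size at most $2k$ (pairing up the contracted edges, merging when they share endpoints) is an id-$\Fcal$ partition. Hence $\idf(G) \le 2\,\ecf(G)$. One small subtlety: contracting a set of $k$ edges that form a connected subgraph identifies $k+1$ vertices, so in the worst case the vertex set identified has size at most $2k$ anyway (each contracted edge contributes at most two ``new'' vertices to the union). So this direction is straightforward and gives a linear bound.

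For the hard direction, $\ecf(G) = \Ocal((\idf(G))^3)$: let $k = \idf(G) = \vc(G^{\sf b})$. I would first reduce to $G^{\sf b}$, the bridgeless part: contracting bridges never helps toward acyclicity (and an acyclic graph has only bridges), so $\ecf(G) = \ecf(G^{\sf b}) + (\text{number of bridges handled})$ — actually more carefully, $\ecf(G) \le \ecf(G^{\sf b})$ since removing all bridges from $G$ only deletes edges and $\ecf$ is monotone under edge deletion in the relevant direction; and conversely contractions in $G$ projected onto $G^{\sf b}$ still yield a forest, so $\ecf(G^{\sf b}) \le \ecf(G)$. Thus it suffices to bound $\ecf(G')$ for a bridgeless graph $G'$ with a vertex cover $X$ of size $k$. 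Now $G' - X$ is an independent set, so every edge of $G'$ has an endpoint in $X$. The strategy is to contract $G'[X]$ down and then deal with the bipartite-like structure between $X$ and the independent set $I = V(G') - X$: contract a spanning forest of $G'[X]$ (at most $k-1$ contractions) to reduce $X$ to one vertex per component; then for each vertex $v \in I$ with $\deg(v) \ge 2$, its neighbors all lie in $X$, and we must break the cycles it creates. A cleaner approach: after making each component of $X$ a single vertex $x_C$, the vertices of $I$ adjacent to $x_C$ with degree $1$ cause no cycle, but those adjacent to $x_C$ alone with multiplicity (or to several $x_C$'s) do. The number of ``excess'' edges that must be destroyed is at most $|E(G')| - (|V(G')| - 1)$, the cyclomatic number; and since every edge touches $X$, $|E(G')| \le \sum_{x \in X}\deg(x)$. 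The main obstacle will be bounding this cyclomatic number polynomially in $k$: a vertex of $X$ could have huge degree into $I$, making $|E(G')|$ unbounded in $k$ alone. The resolution is that multi-edges from a single $x_C$ to a single $I$-vertex can be contracted away cheaply (contract the $I$-vertex into $x_C$, absorbing all its incident edges in one step), so the relevant quantity is the number of $I$-vertices of degree $\ge 2$, each of which forces at least one contraction to kill its cycles. Bounding the number of such high-degree $I$-vertices by $\Ocal(k^2)$ or $\Ocal(k^3)$ via a counting argument on pairs of $X$-vertices (each degree-$\ge 2$ vertex of $I$ after shrinking components is ``charged'' to a pair of components, but multiple vertices between the same pair can be merged) is the crux; I would expect $\Ocal(k^2)$ worth of $I$-vertices to need attention plus $\Ocal(k)$ for $G'[X]$, and the cubic slack in the statement gives comfortable room. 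Summing: $\ecf(G') \le (k-1) + \Ocal(k^2) \cdot \Ocal(k) = \Ocal(k^3)$, where the extra factor of $k$ accounts for contracting, for each relevant pair of components, the at most $k$ paths of the spanning structure — matching the claimed $(\idf(G))^3$ bound.
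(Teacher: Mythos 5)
Your approach to the hard direction genuinely differs from the paper's: you work from the characterization $\idf(G)=\vc(G^{\sf b})$ and exploit the bipartite-like structure between a vertex cover $X$ of the bridgeless graph and the independent remainder $I$, whereas the paper decomposes into $2$-connected blocks, takes an id-$\Fcal$ partition of order $\le k$ in each block, and replaces each identified pair by a short path (short because $C_k$ is excluded), giving $\Ocal(k^2)$ contractions per block and $\Ocal(k)$ blocks. The easy direction $\idf(G)\le 2\,\ecf(G)$ you give matches the paper and is fine.

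However, your hard direction has real gaps. First, the reduction to $G^{\sf b}$ is not justified: you appeal to ``$\ecf$ is monotone under edge deletion in the relevant direction,'' but the paper explicitly emphasizes that $\ecf$ is \emph{not} minor-monotone (witness $K_{2,3}^+$ vs.~$K_{2,3}$). The specific statement $\ecf(G)\le\ecf(G^{\sf b})$ does hold, but it needs an actual argument — roughly, that a set $F\subseteq E(G^{\sf b})$ whose contraction acyclifies $G^{\sf b}$ also acyclifies $G$, because branch sets built from non-bridge edges stay inside $2$-edge-connected components and the bridges form a tree on those components — and you do not give it. Second, and more seriously, the core counting is never carried out: you write ``is the crux; I would expect $\Ocal(k^2)$ worth of $I$-vertices to need attention,'' and the closing tally ``$(k-1)+\Ocal(k^2)\cdot\Ocal(k)$, where the extra factor of $k$ accounts for contracting $\dots$ the at most $k$ paths of the spanning structure'' does not correspond to any quantity in your setup — once $G'[X]$ is shrunk to one super-vertex per component, merging two components across an $I$-vertex costs exactly two contractions, not $k$. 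If you push your idea through carefully (contract a spanning forest of $G'[X]$, then a spanning tree's worth of two-edge paths through $I$-vertices to merge the $\le k$ super-vertices into one, after which every remaining $I$-vertex is a simple-graph leaf), you actually get $\ecf(G)=\Ocal(\idf(G))$ — a \emph{stronger} bound than the paper's, which the authors themselves flag as ``just indicative and not optimized.'' So the route is promising, but as written the proof is incomplete: the reduction to $G^{\sf b}$ is unproved, the central count is admitted to be missing, and the final arithmetic does not follow from what precedes it.
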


\begin{proof}
Using the fact that edge contractions
are also edge identifications, it easily follows that, for every graph $G$,
$\idf(G)≤2\cdot \ecf(G)$.

Assume now that  $\idf(G)≤k$ and we claim that
$\ecf(G)=\Ocal(k^3)$.
To prove this claim  we first observe  that, because $\idf(k\cdot K_{3})=\Omega(k)$ and $\idf(k * K_{3})=\Omega(k)$ (see \autoref{obs_univ_obs}),  it follows that
the number of 2-connected components of $G$  that are not bridges
is bounded by some linear function of $k$.
Let $B$ be a 2-connected component of $G$.
As $B$ is a minor of $G$, it has an id-$\Fcal$ partition $\Xcal=\{X_1,\dots,X_p\}$ of order $≤k$.
For $i\in[p]$, let  $x_{1}^{i},\ldots,x_{p_{i}}^{i}$ be an ordering of the
vertices of $X_i$ and let $F_{i}=\{\{x_{1}^{i},x_{2}^{i}\},\{x_{2}^{i},x_{3}^{i}\},\ldots,\{x_{p_{i}-1}^{i},x_{p_{i}}^{i}\}\}$. 
Let also $F=F_{1}\cup\cdots\cup F_{p}$.
Clearly, the 2-element sets in $F$ are not necessarily
edges of $B$. For each $\{x,y\}\in F$ we define a set of edges $F_{x,y}$
as follows. As $B$ is 2-connected, $x$ and $y$
belong to a cycle of $B$. As $\idf(C_{k})=\Omega(k)$ (see \autoref{obs_univ_obs}), this implies that
$x$ and $y$ are joined in $B$ by a path of length $\Ocal(k)$.
The edges of this path are the edges in $F_{x,y}$.
We now set $F^+=\bigcup_{\{x,y\}\in F} F_{x,y}$
and observe that  $|F^+|=\Ocal(k^2)$. Notice now that
contracting the edges of $F^+$ in $B$ yields an acyclic graph.
Therefore, applying these contractions to every non-bridge connected component of $G$, we obtain an acyclic graph. As there are $\Ocal(k)$ such components, the lemma follows.
\end{proof}

In other words, $\ecf$ is not minor-monotone but, however,
it is ``functionally'' monotone in the sense that if $G'$ is a minor
of $G$ then $\ecf(G')≤\Ocal((\ecf(G))^3)$.\footnote{The cubic bound in \autoref{oupe4loi} is just indicative and has not been optimized.}
While it is easy to see that $\id_{\Hcal}(G)≤2\cdot \ec_{\Hcal}(G)$, we also conjecture that  an upper bound as the one of \autoref{oupe4loi} holds for every minor-closed class $\Hcal$.

\begin{conjecture}
For every minor-closed graph class $\Hcal$, there is a function
$f_{\Hcal}:\Nbbb\to\Nbbb$ such that for every $G$, ${\sf ec}_{\Hcal}≤f_{\Hcal}(\id_{\Hcal}(G))$.
\end{conjecture}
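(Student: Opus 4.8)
Since an edge contraction is the identification of a $2$-element set, the inequality $\id_{\Hcal}(G)\le 2\cdot\ec_{\Hcal}(G)$ already holds for every graph and every class $\Hcal$, so the content of the conjecture is the reverse bound: from $\id_{\Hcal}(G)\le k$ one must produce $f_{\Hcal}(k)$ edge contractions reaching $\Hcal$. The plan is to imitate the proof of \autoref{oupe4loi}. Fix an id-$\Hcal$ partition $\Xcal=\{X_1,\dots,X_p\}$ of $G$ of order at most $k$, so that $G\ii\Xcal\in\Hcal$ and $\sum_{i}|X_i|\le k$. Suppose one can find a set $E^{+}\subseteq E(G)$ such that in the spanning subgraph $(V(G),E^{+})$ each part $X_i$ is contained in a single connected component. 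Then $G/E^{+}$ collapses every $X_i$ into one vertex (possibly together with extra vertices of $G$, and possibly fusing several parts), and $G/E^{+}$ is a minor of $G\ii\Xcal$: in $G\ii\Xcal$ the image of each connected piece of $E^{+}$ is again connected, hence can be contracted there, and after doing so one obtains exactly $G/E^{+}$. As $\Hcal$ is minor-closed this gives $G/E^{+}\in\Hcal$, and therefore $\ec_{\Hcal}(G)\le|E^{+}|$. It thus suffices to connect the parts cheaply, i.e.\ to build a Steiner forest of $G$ linking each $X_i$ whose total number of edges is bounded by a function of $k$.

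The difficulty is to guarantee such a cheap Steiner forest exists, and this is where structural control on graphs with $\id_{\Hcal}(G)\le k$ is needed. For $\Hcal=\Fcal$ this control comes from the three facts used in \autoref{oupe4loi}: every cycle lies in a single block; the number of non-trivial blocks is $\Ocal(k)$ (more of them would yield a large $k\cdot K_3$ or $k*K_3$ minor); and inside a non-trivial block the circumference is $\Ocal(k)$ (a longer cycle is a $C_k$-minor), so any two of its vertices are joined by a path of length $\Ocal(k)$ and, together with $\sum_i|X_i|\le k$, the total cost is $\Ocal(k^3)$. For a general minor-closed $\Hcal$ one would want the analogous ingredients relative to some canonical decomposition of $G$, be it the block decomposition, the decomposition along separators of bounded size, or a Robertson--Seymour-type decomposition of the $\obs(\Hcal)$-minor-free part of $G$: first, that all but $f_1(\Hcal,k)$ of the pieces are already harmless or removable by few contractions, since otherwise one could embed a large member of some universal-obstruction family of $\id_{\Hcal}$; and second, that inside a relevant piece the diameter is at most $f_2(\Hcal,k)$, since a piece of large diameter would, using its high connectivity, again contain a large universal-obstruction pattern of $\id_{\Hcal}$. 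Granting these, each $X_i$ meets boundedly many pieces and has bounded diameter inside each, hence admits a Steiner tree with $f_3(\Hcal,k)$ edges; summing over the $p\le k$ parts gives $\ec_{\Hcal}(G)\le k\cdot f_3(\Hcal,k)$.

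The main obstacle is precisely this structural step: it asks for an understanding of the obstructions of the parameter $\id_{\Hcal}$ for an arbitrary minor-closed $\Hcal$, which is at present available only for $\Hcal=\Fcal$ (\autoref{th_univ_obs_idf_more}) and is tied to the open conjecture, mentioned in the introduction, that every minor-monotone graph parameter admits a finite universal obstruction. The cleanest route would be to first prove that $\id_{\Hcal}$ has a finite universal obstruction $\{\mathscr{G}^{1},\dots,\mathscr{G}^{r}\}$ in which every $\mathscr{G}^{j}_{k}$ can be turned into a graph of $\Hcal$ by $\mathrm{poly}(k)$ contractions, and then rerun the argument of \autoref{oupe4loi} almost verbatim: excluding the $k$-th terms of the $\mathscr{G}^{j}$ should, via the structure theorem for the corresponding minor-free graphs, provide the decomposition into few bounded-diameter pieces required above, and one then contracts piece by piece. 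Extra care is then needed for parts $X_i$ straddling several connected components of $G$, where no Steiner tree exists and one must argue either that such parts can be avoided in some optimal id-$\Hcal$ partition or paid for separately, and for the bookkeeping of contractions that are shared between pieces glued along small separators; but these are secondary to the combinatorial question of how $\id_{\Hcal}$ grows.
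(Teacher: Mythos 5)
The statement you address is given as a conjecture in the paper, with no proof; Lemma~\ref{oupe4loi} establishes only the case $\Hcal=\Fcal$, with a cubic gap function. Your proposal is therefore not being measured against a hidden argument you failed to reproduce, and you are right to present it as a plan rather than a proof. The reduction you set up is sound: if $E^{+}\subseteq E(G)$ is chosen so that each part $X_i$ sits inside a single component of $(V(G),E^{+})$, then those components project to a partition of $V(G\ii\Xcal)$ into connected sets, contracting them in $G\ii\Xcal$ reproduces $G/E^{+}$, and minor-closedness gives $G/E^{+}\in\Hcal$, hence $\ec_{\Hcal}(G)\le|E^{+}|$. This is exactly the step that the proof of Lemma~\ref{oupe4loi} passes over in one line when it ``notices'' that contracting $F^{+}$ inside a block yields an acyclic graph; your version is the more explicit one. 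You also correctly trace the two structural facts that bound $|E^{+}|$ in the forest case (few non-trivial blocks, short cycles within each) to the universal obstruction $\{C_k,\,k\cdot K_3,\,k*K_3\}$ of $\idf$ from \autoref{th_univ_obs_idf_more}.

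The genuine gap is precisely the one you name. For a general minor-closed $\Hcal$ there is currently no analogue of \autoref{th_univ_obs_idf_more}: no finite universal obstruction for $\id_{\Hcal}$, no canonical decomposition of graphs with $\id_{\Hcal}(G)\le k$ into a bounded number of bounded-diameter pieces, and therefore no bound on the Steiner forest in terms of $k$ alone. A smaller but real issue you also flag is that a part $X_i$ may straddle several components of $G$, so that no Steiner forest linking it exists; the forest proof sidesteps this by working block by block (each block $B$, being a minor of $G$, has $\idf(B)\le k$, and being connected carries its own id-$\Fcal$ partition with no straddling), and that trick relies on $\Fcal$ being determined by its blocks, which fails for a generic minor-closed $\Hcal$. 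None of this is an error on your part: these obstacles are exactly why the paper states the claim as a conjecture and not a lemma, and your plan correctly reduces it to the open finite-universal-obstruction question of \cite{PaulPT2023graph} that the paper itself points to.
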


Note that {\sc Contraction to $\Hcal$} is known to be ${\sf W}[1]$-hard, parameterized by the solution size, for several families $\Hcal$ that are {\sl not} minor-closed, such as chordal graph or split graphs (see~\cite{AgrawalLSZ19} and the references cited therein). However, when $\Hcal$ is minor-closed, the recent meta-algorithmic results in~\cite{FominGSST23comp} (further generalized in~\cite{SauST2024parame}) imply that {\sc Contraction to $\Hcal$} is (constructively) \FPT (see~\cite{biclique-contraction-23,HeggernesHLLP14contra} for explicit algorithms for some particular families).
Also, as it has been proved in \cite{HeggernesHLLP14contra}, {\sc Contraction to Forest} is not expected to admit a polynomial kernel.  Interestingly, the kernelization we give in this paper for {\sc Identification to Forest}, under the light of the polynomial-gap functional equivalence  of \autoref{oupe4loi}, can be seen as some kind of ``functional kernel'' for {\sc Contraction to Forest}.

\paragraph{Identification minors.} We say that a graph $H$ is
an \emph{identification minor} of  a graph $G$
if $H$ can be obtained from a minor of $G$ after identifying vertices.
As the minor relation between two graphs also implies their identification minor relation, Robertson and Seymour's theorem~\cite{RobertsonS04XX} implies that graphs are well-quasi-ordered by the identification minor relation.
It is also easy to observe that, for every graph $H$,
the graphs in the set ${\Mcal}_{H}$ of minor-minimal graphs containing
$H$ as an identification have size is bounded by a quadratic function of $|H|$.
Therefore, checking whether
$H$ is an identification minor of $G$ can be done in time $\Ocal_{|H|}(|G|^{1+\varepsilon})$, according to the recent results in \cite{korhonen2024minor}.

It is a natural question to ask whether graphs are well-quasi-ordered with respect to the vertex identification operation alone. The answer turns out to be negative. Indeed, there is  an infinite antichain $(H_k)_{k\in\bN}$, where
 $H_k$ is the graph formed from a cycle on $3k$ vertices $p_1,\ldots,p_{3k}$ by adding three vertices $a_1,a_2,a_3$ and an edge between each pair $(a_i,p_j)$ such that $j$ is equal to $i$ modulo three.
See \autoref{fig_antichain_id} for an illustration. It can be verified that this family of graphs is indeed an antichain, even if we allow both vertex identifications and vertex removals.

\begin{figure}[h]
\center
\includegraphics[scale=0.8]{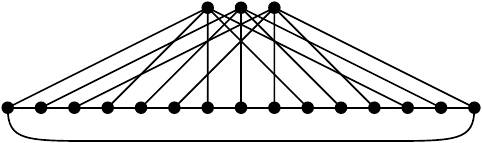}
\caption{The graph $H_k$ for $k=5$. We give credit to Hugo Jacob for finding it.}
\label{fig_antichain_id}
\end{figure}

We now wish to give the following interpretation of \autoref{th_univ_obs_idf} in terms of identification minors.
To prove it, one needs to observe that  $k*K_3$ is an identification-minor of both $k\cdot K_3$ and $C_{3k}$.

\begin{theorem}
For every graph $G$ and positive integer $k$, either  $G$ contains the $k$-marguerite $k* K_{3}$
as an identification  minor, or $G$ can become acyclic after applying $\Ocal(k^4\cdot\log^2k)$ vertex identifications.
\end{theorem}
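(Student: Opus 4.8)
The plan is to derive the statement as a contrapositive consequence of \autoref{th_univ_obs_idf}, using the hint that the marguerite $k*K_3$ occurs as an identification minor of both $k\cdot K_3$ and the cycle $C_{3k}$. First I would check these two facts. For $k\cdot K_3$: by definition $k*K_3=(k\cdot K_3)\ii X$ for a set $X$ meeting each triangle in exactly one vertex, and $k\cdot K_3$ is trivially a minor of itself, so $k*K_3$ is an identification minor of $k\cdot K_3$. For $C_{3k}$: writing its vertices cyclically as $v_1,\dots,v_{3k}$, let $X=\{v_3,v_6,\dots,v_{3k}\}$ and let $c$ be the heir of $X$ in $C_{3k}\ii X$. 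Then $c$ is adjacent to exactly the vertices $v_i$ with $i\not\equiv 0\pmod{3}$, the only surviving edges with no endpoint in $X$ are $\{v_{3j+1},v_{3j+2}\}$ for $j\in\{0,\dots,k-1\}$, and no multiple edges appear; hence $C_{3k}\ii X$ is isomorphic to $k$ triangles sharing the common apex $c$, that is, to $k*K_3$. So $k*K_3$ is an identification minor of $C_{3k}$.

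Next I would record the trivial monotonicity: for any minor $G'$ of $G$ and any $\Xcal\in\Pcal(G')$, the graph $G'\ii\Xcal$ is an identification minor of $G$; in particular, every minor of $G$ is an identification minor of $G$. Now assume that $G$ does \emph{not} contain $k*K_3$ as an identification minor. Then $G$ contains none of $k*K_3$, $k\cdot K_3$, and $C_{3k}$ as a minor: for $k*K_3$ this is immediate, and if $k\cdot K_3$ or $C_{3k}$ were a minor of $G$, then by the two facts above $k*K_3$ would be an identification minor of $G$, a contradiction. Since $k\cdot K_3$ is a minor of $(3k)\cdot K_3$ and $k*K_3$ is a minor of $(3k)*K_3$ (in both cases keep only $k$ of the $3k$ triangles), it follows that $G$ is $\{C_{3k},(3k)\cdot K_3,(3k)*K_3\}$-minor-free.

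Finally, I would apply \autoref{th_univ_obs_idf} with $3k$ in the role of $k$: since $G$ excludes $C_{3k}$, $(3k)\cdot K_3$, and $(3k)*K_3$ as minors, we obtain $\idf(G)=\Ocal((3k)^4\log^2(3k))=\Ocal(k^4\log^2 k)$, which is exactly the statement that $\Ocal(k^4\log^2 k)$ vertex identifications suffice to turn $G$ into a forest. I do not expect a real obstacle here: the only computation is $C_{3k}\ii\{v_3,\dots,v_{3k}\}\cong k*K_3$, and the one thing to be careful about is the constant-factor bookkeeping in the parameter, i.e., that forbidding $k*K_3$ as an identification minor already forces the exclusion (as minors) of a scaled copy of each of the three parametric graphs forming the universal obstruction of $\idf$ provided by \autoref{th_univ_obs_idf_more}.
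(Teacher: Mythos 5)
Your proposal is correct and follows exactly the route the paper indicates: the paper's proof consists solely of the remark that $k*K_3$ is an identification minor of both $k\cdot K_3$ and $C_{3k}$, leaving the reader to run the contrapositive through \autoref{th_univ_obs_idf}, which is precisely what you do (including the clean trick of scaling the parameter to $3k$ so that all three universal-obstruction graphs are excluded simultaneously). Your explicit verification that $C_{3k}\ii\{v_3,v_6,\dots,v_{3k}\}\cong k*K_3$ and your observation that identification-minors compose with ordinary minors are the details the paper leaves implicit; no gap.
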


The above theorem can be seen as an analogue
of the Erdős-Pósa's theorem~\cite{ErdosP65inde} where instead of the vertex removal operation we have vertex identification, and instead of $k\cdot K_{3}$ minor containment we have $k* K_{3}$ identification minor containment. Which are the (parametric) graphs appearing as identification minors when $\id_{\Hcal}$ is big enough,  for a general minor-closed graph class $\Hcal$?

 \end{document}